\documentclass{article}
\usepackage{fullpage}
\usepackage{amsfonts,amssymb,amsmath,mathtools}
\usepackage{graphicx}
\usepackage[utf8]{inputenc}
\usepackage[english]{babel}

\usepackage{amsthm}
\usepackage[inline]{enumitem}
\usepackage[vlined,ruled]{algorithm2e}

\usepackage{hyperref}
\usepackage{cleveref}
\usepackage{xcolor}
\usepackage{appendix}
\usepackage{amsthm}
\usepackage{mathdots, latexsym, amsbsy, bm, mathtools}
\usepackage{thmtools}
\usepackage{thm-restate}

\usepackage{tikz}
\usetikzlibrary{calc, arrows, automata, graphs, shapes, petri, decorations.pathmorphing, decorations.markings}
\tikzset{->-/.style={decoration={markings,mark=at position .5 with {\arrow{>}}},postaction={decorate}}}
\tikzset{>=latex'}
\tikzset{box/.style args={#1and#2}{draw=#1,rounded corners,fill=#2}}
\tikzset{vertex/.style={draw,circle,inner sep=0pt,minimum size=15pt}}
\tikzset{bvertex/.style={draw,rectangle,rounded corners,inner sep=5pt,minimum height=20pt}}
\tikzset{small vertex/.style={draw,circle,inner sep=0pt,minimum size=10pt}}
\tikzset{l/.style={font=\small}}

\definecolor{rwthblue}{cmyk}{1, .5, 0, 0}
\definecolor{rwthlightblue}{cmyk}{.45, .14, 0, 0}
\definecolor{rwthred}{cmyk}{.15, 1, 1, 0}
\definecolor{rwthgreen}{cmyk}{.7, 0, 1, 0}
\definecolor{rwthorange}{cmyk}{0, .4, 1, 0}
\definecolor{rwthmagenta}{cmyk}{0, 1, .25, 0}
\definecolor{rwthyellow}{cmyk}{0, 0, 1, 0}
\definecolor{rwthpetrol}{cmyk}{1, .3, .5, .3}
\definecolor{rwthturkis}{cmyk}{1, 0, .4, 0}
\definecolor{rwthmay}{cmyk}{.35, 0, 1, 0}
\definecolor{rwthbordeaux}{cmyk}{.25, 1, .7, .2}
\definecolor{rwthviolet}{cmyk}{.7, 1, .35, .15}
\definecolor{rwthlila}{cmyk}{.6, .6, 0, 0}

\newtheorem{theorem}{Theorem}
\newtheorem{lemma}[theorem]{Lemma}
\newtheorem{claim}[theorem]{Claim}
\newtheorem{corollary}[theorem]{Corollary}
\newtheorem{proposition}[theorem]{Proposition}
\newtheorem{observation}[theorem]{Observation}
\theoremstyle{definition}
\newtheorem{example}[theorem]{Example}

\let\epsilon\varepsilon
\let\emptyset\varnothing
\let\tilde\widetilde

\newcommand{\abs}[1]{\left\lvert #1 \right\rvert}
\newcommand{\set}[1]{\left\{#1 \right\}}

\newcommand{\jb}{{j^\bullet}}
\newcommand{\dnew}{d^{\text{new}}}
\newcommand{\dnewb}{\f{d}^{\text{\textbf{new}}}}
\newcommand{\bnew}{b^{\text{new}}}
\newcommand{\bnewb}{\f{b}^{\text{\textbf{new}}}}
\newcommand{\pnew}{p^{\text{new}}}
\newcommand{\pnewb}{\f{p}^{\text{\textbf{new}}}}
 
\newcommand{\f}{\boldsymbol}
\newcommand*\samethanks[1][\value{footnote}]{\footnotemark[#1]}

\newenvironment{subproof}[1][\proofname]{\vspace{-3ex}\begin{proof}[#1]}{\end{proof}}

\DeclareMathOperator\val{val}
\DeclareMathOperator\capa{cap}

\newcommand{\Mnat}{$\mathrm{M}^\natural$}
\newcommand{\Lnat}{$\mathrm{L}^\natural$}
\newcommand{\DO}{\mathsf{DO}}
\newcommand{\ExO}{\mathsf{ExO}}
\newcommand{\TO}{\mathsf{TO}}

\title{A flow-based ascending auction to compute buyer-optimal Walrasian prices\footnote{This work has been published in Networks under open access \url{http://doi.org/10.1002/net.22218}.}}

\date{\vspace{2ex} April 18, 2024}
\author{
	Katharina Eickhoff\thanks{School of Business and Economics, RWTH Aachen University. Email: \{eickhoff, peis, rieken\}@oms.rwth-aachen.de.}
    \footnote{K. Eickhoff is funded by the Deutsche Forschungsgemeinschaft 
		(DFG, German Research Foundation) – 2236/2}\and
	S. Thomas McCormick\footnote{Sauder School of Business, University of British Columbia. Email:  thomas.mccormick@sauder.ubc.ca} \and
	Britta Peis\samethanks[2] \and
	Niklas Rieken\samethanks[2] \and
	Laura Vargas Koch\thanks{Research Institute of Discrete Mathematics, Uni Bonn, \texttt{lvargas-koch@dm.uni-bonn.de}}
}

\begin{document}
	
	\maketitle
	
	\begin{abstract}
		We consider a market where a set of objects is sold to a set of buyers, each equipped with a valuation function for the objects. 
		The goal of the auctioneer is to determine reasonable prices together with a stable allocation.
		One definition of ``reasonable'' and ``stable'' is a Walrasian equilibrium, which is a tuple consisting of a price vector together with an allocation satisfying the following desirable properties: \begin{enumerate*}[label=(\roman*)]\item the allocation is market-clearing in the sense that as much as possible is sold, and \item the allocation is stable in the sense that every buyer ends up with an optimal set with respect to the given prices.\end{enumerate*} 
		Moreover, ``buyer-optimal'' means that the prices are smallest possible among all Walrasian prices.
		
		In this paper, we present a combinatorial network flow algorithm to compute buyer-optimal Walrasian prices in a multi-unit matching market with truncated additive valuation functions.
		The algorithm can be seen as a generalization of the classical housing market auction and mimics the very natural procedure of an ascending auction. 
		We use our structural insights to prove monotonicity of the buyer-optimal Walrasian prices with respect to changes in supply or demand.
	\end{abstract}

    \vspace{2ex} 
	\section{Introduction}\label{sec:intro}
	We consider a market where $m$ different indivisible types of \emph{objects} $\Omega = \{i_1, \ldots, i_m\}$ are sold to $n$ \emph{buyers} $N = \{j_1, \ldots, j_n\}$.
	Every object $i \in \Omega$ has a \emph{supply} of $b_i \in \mathbb{Z}_+$ identical copies of that object.
	All buyers $j \in N$ have a \emph{demand} $d_j \in \mathbb{Z}_+$, which is the maximum number of items they want to purchase.
	The goal of the auctioneer is to find a per-unit \emph{price} $p(i)$ for each object $i \in \Omega$ together with an \emph{allocation} $\f{x} \in \mathbb{Z}_+^{\Omega \times N}$ of items to buyers such that the prices $\f{p}$ and the allocation $\f{x}$ satisfy certain desirable properties.
	Certainly, the allocation $\f{x}$ should be \emph{feasible} in the sense that at most $b_i$ units of each object $i \in \Omega$ are sold, and each buyer $j \in N$ gets awarded at most $d_j \in \mathbb{Z}_+$ items.
	We consider \emph{truncated additive valuations} $v_j \colon \mathbb{Z}_+^{\Omega} \to \mathbb{R}_+$, based on a value $v_{ij} \in \mathbb{Z}_+$ that a buyer $j \in N$ has per unit of object $i \in \Omega$, and demand $d_j \in \mathbb{Z}_+$, 
	\[
	v_j(\f{y}) \coloneqq \max_{\f{y'} \leq \f{y}}\set{\sum_{i \in \Omega} v_{ij} y'_{i} \,\middle\vert\, \sum_{i \in \Omega} y'_i \leq d_j} \qquad \text{for all } j \in N, \f{y} \in \mathbb{Z}_+^{\Omega} .
	\]
	The \emph{utility} of one unit of object $i \in \Omega$ to buyer $j \in N$ is then given by $v_{ij} - p(i)$.
    
	A standard and natural equilibrium concept for markets are \emph{Walrasian equilibria}, named after L\'eon Walras \cite{walras1874}.
    To properly define those equilibria we first need some more notation. 
	We call a feasible allocation \emph{stable}, when each buyer $j \in N$ gets a best possible allocation $\f{x}_{\bullet j}$ given prices $\f{p}$, in other words, every buyer is assigned to one of her \emph{preferred bundles}, i.e.\ a vector from the set
	\begin{equation}\label{eq:demand-set}
		D_j(\f{p}) = \arg\max_{\f{x}_{\bullet j}} \left\{\sum_{i \in \Omega} (v_{ij} - p(i))x_{ij} \,\middle\vert\, \sum_{i \in \Omega} x_{ij} \leq d_j, 0 \leq x_{ij} \leq b_i, i \in \Omega\right\}.
	\end{equation}
	If a price vector $\f{p}$ admits a stable allocation, then $\f{p}$ is called \emph{competitive}.
	Note that competitive prices always exist as $p(i) = 1 + \max_{j \in N} v_{ij}$ for all $i \in \Omega$ is competitive with stable allocation $\f{x} = \f{0}$.
	However, these prices are not interesting.
	We want to find prices that are not only competitive but also \emph{market-clearing}, i.e., as much as possible should be sold, which is $\min\{\sum b_i, \sum d_j\}$ in our model.
	Prices that are competitive and market-clearing are called \emph{Walrasian} and a pair $(\f{p}, \f{x})$ of Walrasian prices and supporting stable allocation is called \emph{Walrasian equilibrium}.
	Walrasian prices where the prices are as low as possible, i.e., where $\sum p(i)$ is minimum among all Walrasian prices, are called \emph{buyer-optimal} Walrasian prices or \emph{minimum Walrasian prices}. 
	
	The definitions above assume that the minimum selling price for an item is $0$. 
	If there are positive minimum selling prices, the notion of market-clearing changes. 
	An allocation is market-clearing if everything with a price above the minimum selling price is sold. 
	We can easily adapt our results for this more general case, but for simplicity we assume minimum selling prices to be $0$ in the following.
	
	Recall that in our model the buyers' valuations $v_j\colon \mathbb{Z}_+^\Omega \to \mathbb{R}$ belong to the class of truncated additive valuations.
	This is an important special case of \emph{gross substitutes} \cite{kelso1982job}, which form the biggest class of valuation functions that are in some sense well-behaved, which will be covered in Subsection~\ref{sec:intro:related-work}.
	A fundamental property is that Walrasian prices are guaranteed to exist if all buyers have gross substitute valuations (see \cite{ausubel2006efficient} for multi-supply and \cite{kelso1982job} for unit-supply settings) and that the (unique) buyer-optimal Walrasian price vector can be found using an ascending auction (\cite{ausubel2006efficient,gul2000english}).
	
	Truncated additive valuations are a natural class of valuation functions.  
	It will turn out that they allow for a simple flow-based algorithm to execute the auction. 
	Moreover, truncated additive valuations generalize the special case of single-unit matching markets.
	
	\paragraph{Single-unit matching markets.}
	The model we consider generalizes the classical matching market model, in the literature also commonly referred to as housing market (see e.g.~\cite{easley2010}).
    In those markets buyers have only unit-demand, i.e., $d_j = 1$ for all $j \in N$, or more formally, $v_j(\f{x}_{\bullet j}) = \max\{v_j(\f{\chi}_i) \mid i \in \Omega \text{ and } x_{ij} > 0\}$, and also each object is only available in one copy, i.e., $b_i = 1$ for all $i \in \Omega$.
    Here, we use $\f{\chi}_X \in \{0,1\}^\Omega$ to denote the characteristic vector of $X$, i.e., the zero-one vector in $\{0,1\}^\Omega$ with a $1$ precisely in the entries corresponding to objects in $X$.
	
	For this setting, a seminal paper by Demange et al. \cite{demange1986} describes an ascending auction which starts at the minimum possible selling price, i.e.\ $p(i) = 0$ for all $i \in \Omega$, and then iteratively raises the prices on a set of items that are overdemanded (which is a set of neighbors of a Hall set, i.e., a subset of buyers that are jointly interested in a set of items without enough supply for the demand in the subset) until the prices are market-clearing.
	If one only raises prices on an inclusion-wise minimal overdemanded set, they show that this process yields component-wise minimum (in fact unique) competitive prices, and that these prices are market-clearing, and hence, Walrasian.
	It is also well known that one can determine a socially optimal allocation together with buyer-optimal market-clearing prices with an adaption of the Hungarian algorithm, a primal-dual algorithm (see~\cite{shapley1971assignment}).
	For a good overview on these topics, we refer to the book \cite{easley2010} or the short but self-contained paper by Kern et al. \cite{kern2016}.
	
	\paragraph{The duplication method.}
	A na\"ive approach to reduce our more general multi-unit auction to a single-unit auction is via the following \emph{duplication method}: 
	we replace the $b_i$ copies of object $i \in \Omega$ by $b_i$ individual objects with unit-supply, and we replace a buyer $j \in N$ demanding $d_j$ items by $d_j$ unit-demand buyers, with the same valuations. 
	Certainly, an ascending auction of the single-unit instance will return Walrasian prices, but these prices are in general not buyer-optimal:
	\begin{example}\label{ex:1}
		Consider one buyer $N = \{1\}$ with a demand of $d_1 = 2$, and two different items $\Omega = \{\alpha, \beta\}$ with a supply of $b_\alpha = b_\beta = 1$ which are valued differently by the sole buyer, say $v_{\bullet 1} = (5, 1)$.
		If we copy the buyer, there is no stable allocation if $p(\alpha) < 4$ as both copies compete for object $\alpha$. If $p(\alpha) = 4$, both copies of the sole buyer are indifferent between the objects and thus, $\f{p} = (4, 0)$ and $\f{x} = ((1, 0), (0, 1))$ is a Walrasian equilibrium.
		However, considering the original situation, since the buyer is alone $\f{p} = (0, 0)$ and $\f{x} = (1, 1)$ is the buyer-optimal Walrasian equilibrium.
        The intuitive explanation is that the duplication method is oblivious to the fact that the two copies represent the same buyer and hence, the computed Hall sets are wrongly interpreted as competition between two buyers.
		Thus, the prices computed by the {duplication method} are not buyer-optimal.
	\end{example}
	
	As we just saw in the example above, an algorithm to compute buyer-optimal prices in single-unit matching markets does not trivially give an algorithm for buyer-optimal prices when buyers have truncated additive valuations. 
	The good news is that these valuation functions still allow for a nice flow-based algorithm to compute overdemanded sets. 
	This generalizes  the method to use a Hall set which works in the single-unit case and is more intuitive than the submodular function minimization methods used for strong gross substitute valuations.

	\subsection{Our contribution}
	In this paper, we provide a flow-based ascending auction for multi-unit markets where all buyers have truncated additive valuations. 
	This auction enables us to compute buyer-optimal Walrasian prices and a stable allocation.
	It generalizes the matching-based ascending auction introduced by \cite{demange1986}.
	
	Although there already exist ascending auctions that compute the minimum Walrasian prices by e.g. Ausubel \cite{ausubel2006efficient} and Murota et al.\ \cite{murota2013computing}, these algorithms strongly rely on the theory of submodular function minimization and discrete convexity to compute the set of objects on which prices should be increased in the ascending auction. 
	In contrast, we can compute this set with a simple and fast algorithm based on a single max flow min cut computation.
	This enables us to prove all our results independent from the literature on the more general case of auctions where the buyers have gross substitute valuation functions.
	Moreover, it allows us to show that minimum Walrasian prices react to changes in supply and demand in a natural way, i.e., they can only increase when supply decreases or demand increases. 
	This is a very natural monotonicity property, but it was not addressed in the literature yet.

	More concretely, in Section~\ref{sec:flow}, we present an ascending auction which iteratively raises prices on the objects in the left-most (i.e., inclusion-wise minimal) min $s$-$t$ cut in an associated auxiliary flow network.
	In Section~\ref{sec:walrasian}, we prove that the auction indeed terminates with minimum Walrasian prices and a stable allocation.
	Section~\ref{sec:monotonicity} shows how one can obtain price monotonicity results purely from structural insights of the flow-based auction.
	Section~\ref{sec:comparison} compares our work to previous work.

	\subsection{Known results and related work}\label{sec:intro:related-work}
	In this subsection, we give a short summary on what is known about ascending auctions (single- and multi-unit), in particular with valuation functions that are less restrictive.
	We also refer the reader to the excellent survey paper by Paes Leme \cite{leme2017gross}.
	The basic setup remains the same as above, i.e., an auctioneer wants do determine prices $\f{p} \in \mathbb{Z}_+^\Omega$ and an allocation $\f{x} \in \mathbb{Z}_+^{\Omega \times N}$ satisfying $\sum_{j \in N} x_{ij} \leq b_i$ for all $i \in \Omega$.
	However, each buyer $j \in N$ might not have a constant $v_{ij}$ for each object $i \in \Omega$ but instead a valuation function $v_j(\f{x}_{\bullet j})$.
	The net utility buyer $j$ gets given a price vector $\f{p}$ and allocation $\f{x}$ is then $v_j(\f{x}_{\bullet j}) - \f{p}^T \f{x}_{\bullet j}$ and hence, a buyer's preferred bundle under prices $\f{p}$ is given by
	\[
	D_j(\f{p}) \coloneqq \arg\max\{v_j(\f{x}_{\bullet j}) - \f{p}^T \f{x}_{\bullet j} \mid \f{0} \leq \f{x}_{\bullet j} \leq b, \f{x}_{\bullet j} \in \mathbb{Z}^\Omega\}.
	\]
	In the sequel, it is also convenient to define a buyer's \emph{indirect utility function}, the maximum utility a buyer can get under a given price vector:
	\[
	V_j(\f{p}) \coloneqq \max\{v_j(\f{x}_{\bullet j}) - \f{p}^T \f{x}_{\bullet j} \mid \f{0} \leq \f{x}_{\bullet j} \leq b, \f{x}_{\bullet j} \in \mathbb{Z}^\Omega\}.
	\]
	A valuation function $v_j\colon \mathbb{Z}_+^\Omega \to \mathbb{Z}$ is \emph{gross substitute} if for all price vectors $\f{p}, \f{q} \in \mathbb{R}^\Omega$ with $\f{p} \leq \f{q}$ it holds that for all $\f{x}_{\bullet j} \in D_j(\f{p})$ there exists $\f{y}_{\bullet j} \in D_j(\f{q})$ such that $x_{ij} \leq y_{ij}$ for all $i \in \Omega$ with $p(i) = q(i)$.
	If additionally $\sum_{i \in \Omega} x_{ij} \geq \sum_{i \in \Omega} y_{ij}$ holds for these allocations $\f{x}, \f{y}$, then $v_j$ is \emph{strong gross substitute}.
	Intuitively, this condition expresses that if one increases the price on one object, the demand on other objects (whose prices did not increase) does not diminish.
	What makes this class of functions (introduced by Kelso and Crawford \cite{kelso1982job}) particularly interesting is that Walrasian prices are guaranteed to exist if all buyers have gross substitute valuation functions (see also \cite{ausubel2006efficient}).
	This is not true for more general valuation functions, which makes gross substitutes essentially the widest class of valuation functions that can be handled.\footnote{However, we should point out that there are other classes that are in some sense orthogonal to gross substitutes that have different interesting applications. In particular, \emph{complementarities} form one such interesting class.}
	Gul and Stacchetti \cite{gul1999walrasian} proved some equivalent characterizations of gross substitutes (such as the \emph{single improvement} or \emph{no complementarities} condition) which are insightful on their own, but which also make gross substitute more convenient to consider from an algorithmic point of view. 
	Fujishige and Yang \cite{fujishige2003note} also showed that there is a very fundamental connection of gross substitutes to discrete convex analysis, namely, a valuation function is gross substitute if and only if it is \Mnat-concave.
	A few more characterizations of gross substitutes can be found in the more recent work of Ben-Zwi \cite{ben2017walrasian}, which also provides more fundamental insights to ascending auctions and \emph{overdemanded sets}.
	However, the aforementioned publications only handle the single-unit case, i.e., each object is only available in one copy.
	Murota et al. \cite{murota2013computing} showed that these results all transfer when going to the multi-unit setting if the valuation functions are \emph{strong} gross substitutes (going to the multi-unit setting in mathematical terms is to go from the Boolean lattice to the integer lattice as the domain of the valuation functions).
	
	The connection to discrete convex analysis is helpful to define an ascending auction that can find Walrasian prices.
	For the single-unit case, Gul and Stacchetti \cite{gul2000english} laid out the framework for an ascending auction which naturally increases prices on subsets of objects that are overdemanded.
	However, they only showed that overdemanded sets must exist if the current price vector is not Walrasian using matroid theory (and giving a Hall-type condition), they did not show how to efficiently find those sets.
	Ausubel \cite{ausubel2006efficient} and Murota et al. \cite{murota2013computing,murota2016time} also considered auctions for the multi-unit setting.
	Their auctions follow essentially the same idea, i.e., increasing prices on overdemanded sets but also allow different start prices and price reduction steps (on underdemanded sets, which do not occur if the starting prices are low enough and the price increment step is implemented correctly).
	The key contribution of their work is the algorithm to find overdemanded sets efficiently. 
	All of those auctions rely on a potential function (the \emph{Lyapunov} function)
	\[
	L(\f{p}) = \sum_{j \in N} V_j(\f{p}) + \f{p}^T \f{b}.
	\]
	The main features of the Lyapunov function are that it is minimized exactly at Walrasian prices, and that it is \Lnat-convex (and in particular, submodular) if all valuation functions are \Mnat-concave (or equivalently, strong gross substitute).
    A buyer-optimal Walrasian price vector can be found if one selects an (inclusion-wise) minimal minimizer of $X \mapsto L(\f{p} + \f{\chi}_X)$.
	Hence, a Walrasian price vector can be found efficiently by using submodular function minimization but also a steepest descent direction of $L$ (which corresponds to a maximum\footnote{Here \emph{maximum} does not refer to the size of the set but to the \emph{overdemandedness}, i.e., the difference between demand and supply of that set.} overdemanded set) can be computed in strongly polynomial time \cite{murota2013computing}.

    In~\cite{eickhoff2023faster}, a follow-up paper of the paper at hand, it is described how to find the overdemanded sets for general strong gross substitute valuation functions using polymatroid sum. This leads to the so far best known running time $\mathcal{O}(\abs{N} \cdot \DO + \abs{N} \abs{\Omega}^3 \cdot \ExO)$ for one iteration where buyers have gross substitute valuations, and where $\DO$ is the running time of an demand oracle and $\ExO$ is the running time of an exchange oracle. For a detailed description of the oracle models see \Cref{sec:structure}.
    
	Murota et al.~\cite{murota2013computing} also showed that the ascending auction needs $\|\f{p} - \f{p}_0\|$ iterations to terminate, where $\f{p}_0$ is the initial price vector and $\f{p} \geq \f{p}_0$ a Walrasian price vector.
    
    \vspace{2ex} \ 
	
	\section{A flow-based ascending auction}\label{sec:flow}
	We first sketch our flow-based ascending auction, called the \ref{alg:price-raising}.
	The auction starts with the all-zero vector $p_0(i) = 0$ for all $i \in \Omega$ (or with any initial price vector $\f{p}_0$ known to be a lower bound on the minimum Walrasian price vector $\f{p}^*$).
	In each iteration, given the current price vector $\f{p} \in \mathbb{R}_+^{\Omega}$, the algorithm computes an integral $s$-$t$-flow $f$ of maximum value in an auxiliary flow network $G(\f{p})$ (described below). 
	If the value $\val(f)$ of flow $f$ equals the sum of capacities on the $s$-leaving arcs in $G(\f{p})$, denoted as $D_{\f{p}}$, the algorithm stops.
	Otherwise ($\val(f) < D_{\f{p}}$), the prices on all objects in the left-most min cut are raised by one unit, and the algorithm iterates with the updated price vector.
    Note that the notation \emph{left-most} is motivated from the natural way of drawing a layered $s$-$t$-network with the source on the left, the sink on the right and all remaining vertices layer-by-layer in between.
    However, we can also define the left-most min cut without using any kind of topology on the vertices by defining it as the (unique) \emph{inclusion-wise minimal} set $S$, with $s \in S$ that minimizes the sum of capacities on the outgoing arcs, $\capa(S)$. 
    We can compute this set via a breadth-first-search (BFS) from $s$ in the residual network which corresponds to a max flow.
	Then Theorem~\ref{thm:algo_computes_minimal_competitive_prices} shows that the final price vector $\f{p}^*$ returned by the algorithm is the minimum (and thus buyer-optimal) competitive price vector. 
	
	For computing a corresponding stable and market-clearing allocation~$\f{x}^*$ such that $(\f{p}^*, \f{x}^*)$ is a buyer-optimal Walrasian equilibrium, we modify $G(\f{p}^*)$ slightly to network $H(\f{p}^*)$, which allows us to find a stable allocation where $\min\{\sum_{i \in \Omega} b_i, \sum_{j \in N} d_j\}$ items are sold.
	Furthermore, we show that each object $i \in \Omega$ with $p^*(i) > 0$ is completely sold.

	\subsection{Structure of preferred bundles and oracle models} \label{sec:structure}
	For a fixed price vector $\f{p}$ and a fixed buyer $j$, a minimal preferred bundle of buyer $j$ can be computed with the following greedy approach.
	Let $\prec_j$ be a total ordering of the items by non-decreasing payoffs, i.e., satisfying
	\begin{equation}\label{lin-extension}
		v_{ij} - p(i) \geq v_{kj} - p(k) \quad \text{whenever} \quad i \prec_j k.
	\end{equation}
	For ease of notation, assume that $\prec_j = (1, \ldots, m)$. 
	Note that the ordering $\prec_j$ might not be unique due to ties in the payoffs.
	However, each ordering $\prec_j$ satisfying \eqref{lin-extension} uniquely defines a minimum preferred bundle of buyer $j$ which can be constructed as follows:
    
    \begin{algorithm}
		\SetAlgoRefName{Algorithm to construct a minimum preferred bundle for buyer $j$}
		\KwIn{Supplies $b_i$, demand $d_j$, valuations $v_{ij}$ for all $i \in \Omega$ and an ordering $\prec_j = (1, \ldots, m)$ satisfying \eqref{lin-extension}}
		\KwOut{A minimal preferred bundle for buyer $j$ and the last item $k_j$ selected for this bundle}
		Initially, no items are selected, the residual demand $d_j^{res}$ is equal to $d_j$, and $i \coloneqq 1$\\
        \While{$d_j^{res} > 0$, $v_{ij} > p(i)$ and $i\leq m$}{
            Set $x_{ij} \coloneqq \min\{b_i, d_j^{res}\}$\\
            Iterate with residual demand $d_j^{res} \coloneqq d_j^{res} - x_{ij}$, and $i \coloneqq i+1$
        }
        \Return{$x_j$ and $k_j \coloneqq i-1$}
        \caption{}
        \label{alg:preferred_bundles}
	\end{algorithm}
	
	We observe that the greedy algorithm selects exactly $b_i$ copies of all items in $\{1, \ldots, k_j - 1\}$ and $d_j - \sum_{\ell=1}^{k_j-1} b_\ell$ copies from item $k_j$, and that the set of \emph{all} minimal preferred bundles obeys the following structure.
	Consider the two item sets $\Omega'_j$ and $\Omega_j''$ consisting of all items of larger or equal payoff than the payoff of item $k_j$, which was selected last by the greedy algorithm, i.e.,
	\begin{align*}
		\Omega_j'(\f{p}) &\coloneqq \{i \in \Omega \mid v_{ij}-p(i)>v_{k_jj}-p(k_j)\}, \text{ and}\\
		\Omega_j''(\f{p}) &\coloneqq \{i \in \Omega \mid v_{ij}-p(i)=v_{k_jj}-p(k_j)\}.
	\end{align*}
	Note that $\Omega_j'(\f{p})$ can be empty. 
	This happens when all elements in the preferred bundle have the same payoff\footnote{Note that this also holds even if there is no item that is dropped (i.e., the supply for these objects does not exceed the demand). This way, the formal definition is straightforward and we do not have to deal with case distinctions in the analysis.}.
	
	If $\sum_{i \in \Omega_j'(\f{p}) \cup \Omega_j''(\f{p})} b_i < d_j$, a minimal preferred bundle contains less than $d_j$ items. 
	In such a case, and if there are objects $i \in \Omega$ of payoff $v_{ij} - p(i) = 0$, we assume what is called ``free disposal'' in the literature. 
	Namely, that buyer $j$ is indifferent between choosing a preferred bundle as described, or filling the demand up with items in 
	\[
        \Omega_j'''(\f{p}) \coloneqq \{i \in \Omega \mid v_{ij} - p(i) = 0\}.
	\]
	Thus, in every preferred bundle, buyer $j$ buys exactly $d_{j'}(\f{p}) \coloneqq \sum_{i \in \Omega_j' (\f{p})} b_i$ items from objects in $\Omega_j'(\f{p})$ and $d_{j''}(\f{p}) \coloneqq \min \{\sum_{i \in \Omega_j''(\f{p}) } b_i,\ d_j - d_{j'}(\f{p})\}$ items from objects in $\Omega_j''(\f{p})$. 
	In addition, there might be up to $d_{j'''}(\f{p}) \coloneqq \min \{\sum_{i \in \Omega_j'''(\f{p})} b_i,\ d_j - d_{j'}(\f{p}) - d_{j''}(\f{p})\}$ items of objects with zero payoff in a preferred bundle.
	
	To shorten notation, we omit the price when $\f{p}$ is clear from the context, e.g., we write $\Omega_j'$ instead of $\Omega_j'(\f{p})$, or $d_{j'}$ instead of $d_{j'}(\f{p})$.

    \subsubsection{Oracles.}
    Since we consider this market in an auction setting, it is natural (or even necessary) to assume that the auctioneer has some sort of communication protocol with the buyers.
    Ascending auctions, such as those by \cite{ausubel2006efficient,gul2000english}, require the buyers to reveal their demand correspondence $D_j(\f{p})$ in every iteration for some price vector $\f{p}$ (which can have size exponential in $|\Omega|$).
    In the model with truncated additive valuations, we can ask a buyer to provide her demand $d_j$ and her demand sets $\Omega_j'(\f{p})$, $\Omega_j''(\f{p})$, and $\Omega_j'''(\f{p})$ (the latter one is not required for computing overdemanded sets, however).
    Asking the buyers for those sets is quite intuitive as $\Omega'$ denotes the objects which she wants strongly, $\Omega''$ denotes the objects of which she wants to get as many as she can to fill up the demand, and $\Omega'''$ denotes the objects she is indifferent to buy (as buying them results in payoff zero).
    Hence, it is natural in our setting that the auctioneer only asks the buyers for those three sets and her total demand to get a much more compact representation of the demand sets.
    We denote such an oracle call, in other words a question from the auctioneer to one buyer for $\Omega', \Omega'', \Omega'''$ and $d$, as \emph{tier oracle} and use $\TO$ for short. In sequel, in our runtime analysis we include $\TO$ for these oracle calls. Note that a buyer, who knows her valuations can answer the oracle call in $O(\abs{\Omega} \log(\abs{\Omega}))$ (see \ref{alg:preferred_bundles}).

    In the literature often a demand oracle $\DO$ is used (i.e., asking for an minimal preferred bundle) or an exchange oracle $\ExO$ (i.e., asking how many items of one object the buyer is willing to exchange against the same amount of items of a different object). Each of these oracles can be used in turn, and it depends on the objective which one is most efficient to use. In our case the tier oracle is the most suitable.

	\subsection{Construction of auxiliary flow networks}\label{section:aux-network}
	In each iteration of our flow-based ascending auction with current price $\f{p}$, buyer $j$ reveals to the auctioneer the following information: 
	the two  sets $\Omega_j'$ and $\Omega_j''$, together with the amounts $d_{j'}$ and $d_{j''}$ they want to buy from  these sets, respectively, and the set $\Omega_j''' = \Omega_j'''(\f{p})$ of items of payoff $0$ with the amount $d_{j'''}$.
	
	Given this information, our algorithm constructs a network $G(\f{p})$ and uses a max-flow computation to compute the set of objects on which the price has to be increased. 
	In the following we describe the construction of this network. 
	An example with two buyers and three objects is given in \Cref{fig:examplenetwork}.

    \begin{figure}
		\centering
		\begin{tikzpicture}[>=latex',every node/.style={circle,inner sep=0pt,minimum size=15pt},scale=1.3]
			\node[vertex] (s) at (0, 0) {$s$};
			\draw[box=rwthblue and rwthlightblue] (1.7, .7) rectangle (2.3, 2.3);
			\node (j1) at (2, 1.5) {$j_1$};
			\node[small vertex] (j11) at (2, 2) {$\prime$};
			\node[small vertex] (j12) at (2, 1) {$\prime\prime$};
			\draw[box=rwthbordeaux and rwthred!60] (1.7, -.7) rectangle (2.3, -2.3);
			\node (j2) at (2, -1.5) {$j_2$};
			\node[small vertex] (j21) at (2, -1) {$\prime$};
			\node[small vertex] (j22) at (2, -2) {$\prime\prime$};
			\node[vertex] (i1) at (5, 1.5) {$\alpha$};
			\node[vertex] (i2) at (5, 0) {$\beta$};
			\node[vertex] (i3) at (5, -1.5) {$\gamma$};
			\node[vertex] (t) at (7, 0) {$t$};
			\draw[->] (s) -- (j11) node[rotate=45,pos=.6,above=-3ex]{$d_{j_1'} = 2$};
			\draw[->] (s) -- (j12) node[rotate=25,pos=.5,below=-2.5ex]{$d_{j_1''} = 2$};
			\draw[->] (s) -- (j21) node[rotate=-25,pos=.48,above=-2.5ex]{$d_{j_2'} = 0$};
			\draw[->] (s) -- (j22) node[rotate=-45,pos=.6,below=-3ex]{$d_{j_2''} = 1$};
			\draw[->,rwthblue] (j11) -- (i1) node[rotate=-8,pos=.5,above=-3ex]{$c_{j_1'\alpha }=1$};
			\draw[->,rwthblue] (j11) -- (i2) node[rotate=-31,pos=.5,above=-3ex]{$c_{j_1'\beta }=1$};
			\draw[->,rwthblue] (j12) -- (i3) node[rotate=-40,pos=.28,above=-3ex]{$c_{j_1''\gamma }=2$};
			\draw[->,rwthbordeaux] (j22) -- (i2) node[rotate=30,pos=0.3,below=-2.5ex]{$c_{j_2' \beta}=1$};
			\draw[->] (i1) -- (t) node[rotate=-36,pos=.4,above=-2ex]{$b_{\alpha} = 1$};
			\draw[->] (i2) -- (t) node[pos=.4,above=-2ex]{$b_{\beta} = 1$};
			\draw[->] (i3) -- (t) node[rotate=36,pos=.4,below=-2ex]{$b_{\gamma} = 4$};
		\end{tikzpicture}
		\caption{The network $G(\f{p})$ with two buyers $j_1, j_2$ and three objects $\alpha, \beta, \gamma$. We have valuations $v_{j_1} = (3, 2, 1)$ and $v_{j_2} = (0, 2, 0)$, demands $d_{j_1} = 4, d_{j_2} = 2$, supply $b_\alpha = b_\beta = 1, b_\gamma = 4$, and current prices $p(\alpha) = p(\beta) = p(\gamma) = 0$. In this example, the item sets at prices $\f{p} = (0,0,0)$ are given by $\Omega_1' = \{\alpha, \beta \}$, $\Omega_1'' = \{\gamma\}$, $\Omega_1''' = \emptyset$ and $\Omega_2' = \emptyset$, $\Omega_2'' = \{\beta\}$, $\Omega_1''' = \{\alpha, \gamma\}$.}
		\label{fig:examplenetwork}
	\end{figure}
	
	The vertex set of $G(\f{p})$ consists of a source $s$, a sink $t$, one vertex for each object $i \in \Omega$, and two vertices $j'$ and $j''$ for each buyer $j\in N$.
	We denote the collection of the $j'$ vertices as $N'$ and the one of $j''$ as $N''$.
	Vertices $j'$ and $j''$ correspond to the sets $\Omega_j'(\f{p})$ and $\Omega_j''(\f{p})$, respectively.
 
    \vspace{2ex} \ 
 
	The arcs (with positive capacity) are defined as follows:
	\begin{align*}
		(s, j') & \quad\text{with capacity $d_{j'}$ for all $j \in N$,} \\
		(s, j'') & \quad\text{with capacity $d_{j''}$ for each $j \in N$,} \\
		(j', i) & \quad\text{with capacity $c_{j'i} \coloneqq b_i$ for all $j \in N$, $i \in \Omega_j'$} \\
		(j'', i) & \quad\text{with capacity $c_{j''i} \coloneqq \min\{b_i, d_{j''}\}$ for all $j \in N$, $i \in \Omega_j''$, and} \\
		(i, t) & \quad\text{with capacity $b_i$ for all $i \in \Omega$.}
	\end{align*}
	
	We denote the total capacity on the $s$-leaving arcs by $\capa(s)$, and observe that 
	$\capa(s) = \sum_{j \in N} (d_{j'} + d_{j''})$. 
	That is, $\capa(s)$ is equal to the sum of sizes of the buyers' minimal preferred bundles.
	The following lemma states that prices $\f{p}$ are competitive if and only if the value of a max flow in $G(\f{p})$ is equal to $\capa(s)$.
	
	\begin{lemma}\label{lem:competitive_flow}
		The prices $\f{p}$ are competitive if and only if there is a flow $f$ in $G(\f{p})$ of value
		$\val(f) = \capa(s)$.
		Moreover, given a competitive price vector $\f{p}$, an associated stable allocation $\f{x}$ can be computed via a single max flow computation in $G(\f{p})$.
	\end{lemma}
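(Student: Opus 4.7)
My plan is to set up a direct correspondence between integral $s$-$t$-flows of value $\capa(s)$ in $G(\f{p})$ and stable allocations at prices $\f{p}$. Since all arc capacities of $G(\f{p})$ are integral, integral max flows exist, so both directions of the biconditional as well as the algorithmic statement will follow from this correspondence.

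For the ``if'' direction, I would take an integral flow $f$ with $\val(f) = \capa(s) = \sum_j (d_{j'} + d_{j''})$. Saturation of every source arc then forces exactly $d_{j'}$ units to leave $j'$ (each routed on some arc $(j',i)$ with $i \in \Omega_j'$) and exactly $d_{j''}$ units to leave $j''$ (routed to items in $\Omega_j''$). Setting $x_{ij}$ equal to the flow on $(j',i)$ or $(j'',i)$, and zero otherwise, the capacity constraints on $(i,t)$ yield feasibility $\sum_j x_{ij} \leq b_i$, while the structural description of $D_j(\f{p})$ in Section~\ref{sec:structure} — every preferred bundle contains exactly $d_{j'}$ items from $\Omega_j'$ and $d_{j''}$ from $\Omega_j''$ — immediately gives $\f{x}_{\bullet j} \in D_j(\f{p})$. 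Conversely, given a stable allocation $\f{x}$, the same structural fact lets me define a flow by routing $x_{ij}$ units along $s \to j' \to i \to t$ for each $i \in \Omega_j'$ and along $s \to j'' \to i \to t$ for each $i \in \Omega_j''$, while simply discarding any mass $x_{ij}$ on items $i \in \Omega_j'''$, which do not appear in the network. The arc capacities on $(j',i)$, $(j'',i)$, and $(i,t)$ are respected thanks to $x_{ij} \leq b_i$ and $\sum_j x_{ij} \leq b_i$; the source arcs $(s,j')$ and $(s,j'')$ are saturated by the structural identities; hence $\val(f) = \capa(s)$.

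For the algorithmic statement, I would compute any integral max flow in $G(\f{p})$, which by the first direction has value $\capa(s)$, and then apply the read-off construction above to obtain a stable $\f{x}$ in a single max flow computation. The only non-trivial ingredient is the structural fact, already established in Section~\ref{sec:structure}, that every preferred bundle distributes its mass over $\Omega_j'$, $\Omega_j''$, and $\Omega_j'''$ in the precise amounts $d_{j'}$, $d_{j''}$, and (at most) $d_{j'''}$; everything else is routine bookkeeping of arc capacities against the supply and demand constraints, with no real conceptual obstacle.
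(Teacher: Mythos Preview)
Your proposal is correct and follows essentially the same approach as the paper: both set up the correspondence between integral $s$-$t$-flows saturating the source arcs and stable allocations via $x_{ij} = f_{j'i} + f_{j''i}$, relying on the structural description of preferred bundles from Section~\ref{sec:structure}. Your write-up is simply more explicit (in particular, you spell out the converse direction and the handling of $\Omega_j'''$-mass), whereas the paper asserts the ``clear one-to-one correspondence'' more tersely.
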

	\begin{proof}
		There is a clear one-to-one correspondence between an integral $s$-$t$-flow and a feasible assignment of objects. 
		Since all capacities in $G(\f{p})$ are integral, there exists a max flow with integral values. 
		By assigning $f_{j'i} + f_{j''i}$ items of object $i$ to buyer $j$ we obtain a feasible assignment.
		Note that at least one of the summands is zero since at most one of the arcs has positive capacity ($\Omega_j' \cap \Omega_j'' = \emptyset$).
		This assignment is competitive if and only if the demand of every buyer at prices $\f{p}$ is satisfied. 
		This is equivalent to the requirement that the flow satisfies all $s$-leaving arcs, i.e., $\val(f) = \sum_{j\in N} (d_{j'} + d_{j''}) = \capa(s)$.
	\end{proof}
	
	However, in this allocation $\f{x}$, no item with payoff zero is included. 
	Since we aim for an allocation where as much as possible is sold, we have to allocate the other items as well. 
	This is easy for those items which have price zero, since then every buyer with unsatisfied demand is willing to buy them.
	We show in \Cref{theorem:market-clearing} that there are enough buyers who are willing to buy the items with a positive price as well.
	
	We extend $G(\f{p}^*)$ to a flow network $H(\f{p}^*)$ such that the assignment of buyers to objects of payoff zero is possible.
	To do so, we first balance the supply and the demand. 
	If $\sum_{i \in \Omega} b_i < \sum_{j \in N} d_j$, we add a dummy object $i_0$ with supply $b_{i_0} = \sum_{j \in N} d_j - \sum_{i \in \Omega} b_i$ and valuations $v_{i_0j} = 0$ for all $j \in N$.
	If $\sum_{i \in \Omega} b_i > \sum_{j \in N} d_j$, we add a dummy buyer $j_0$ with demand $d_{j_0} = \sum_{i \in \Omega} b_i - \sum_{j \in N} d_j$ and valuations $v_{ij_0} = 0$ for all $i \in \Omega$.
	Now we can assume that $\sum_{i \in \Omega} b_i = \sum_{j \in N} d_j$.
	Note that the \ref{alg:price-raising} computes the same prices with the dummy object or buyer as without.
	To construct $H(\f{p}^*)$ from $G(\f{p}^*)$, additionally we add a vertex $j'''$ for each buyer $j \in N$ and an arc $(s,j''')$ with capacity $d_{j'''}$.
	Furthermore, we add for each $j \in N$ the arc $(j''',i)$ with capacity $b_i$ for $i \in \Omega_j'''$.
	
	\begin{proposition}\label{proposition:allocation}
		A max flow in network $H(\f{p})$ and its corresponding allocation satisfy:
		\begin{enumerate}[topsep=0.5ex, itemsep=0pt]
			\item A feasible flow in $G(\f{p})$ is a feasible flow in $H(\f{p})$.
			\item If for buyer $j \in N$ the flow on the arcs $(s, j')$ and $(s, j'')$ is saturated, $j$ is assigned to one of her preferred bundles at prices $\f{p}$.
			\item If $\f{p}$ is a Walrasian price vector, the allocation induced by a max flow is stable, i.e., every buyer obtains a preferred bundle.
		\end{enumerate}
	\end{proposition}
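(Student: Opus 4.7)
The plan is to tackle the three claims in order, each leveraging the previous one.

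The first claim is immediate: $H(\f{p})$ is obtained from $G(\f{p})$ by adding fresh vertices $j'''$ together with fresh arcs $(s,j''')$ and $(j''',i)$ for $i \in \Omega_j'''$, all disjoint from the arc set of $G(\f{p})$; extending any feasible flow in $G(\f{p})$ by sending zero on these new arcs trivially preserves capacity bounds and flow conservation at every vertex (the $j'''$ have zero in- and out-flow, and the in-flow to each object $i$ is unchanged).

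For the second claim, fix a buyer $j$ and a flow $f$ in $H(\f{p})$ that saturates $(s,j')$ and $(s,j'')$. Flow conservation at $j'$ gives $\sum_{i \in \Omega_j'} f(j',i) = d_{j'}$; the capacity $c_{j',i} = b_i$ combined with the identity $d_{j'} = \sum_{i \in \Omega_j'} b_i$ forces $f(j',i) = b_i$ for every $i \in \Omega_j'$, so $j$ receives exactly $b_i$ copies of each such object. Analogously, conservation at $j''$ delivers $d_{j''}$ units into $\Omega_j''$, respecting the capacities $\min\{b_i, d_{j''}\}$, and the (possibly unsaturated) arc $(s,j''')$ contributes at most $d_{j'''}$ units to $\Omega_j'''$. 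Comparing this with the structural characterization of $D_j(\f{p})$ given in \Cref{sec:structure} identifies $j$'s bundle as a preferred one.

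For the third claim I would argue via a max-flow value computation. Walrasian-ness of $\f{p}$ yields a stable and market-clearing allocation $\f{x}^*$; after the dummy balancing we have $\sum_i b_i = \sum_j d_j$, so market-clearing forces $\sum_{i,j} x^*_{ij} = \sum_j d_j$ and, together with the per-buyer bound $|\f{x}^*_{\bullet j}| \leq d_j$, this means every buyer receives exactly $d_j$ items. Stability then forces each such bundle to lie in $D_j(\f{p})$, which by the characterization used above can have size $d_j$ only if $d_{j'} + d_{j''} + d_{j'''} = d_j$. Translating $\f{x}^*$ into a flow in $H(\f{p})$ (splitting $x^*_{ij}$ according to whether $i \in \Omega_j'$, $\Omega_j''$, or $\Omega_j'''$) then gives a feasible flow of value $\sum_j d_j$, which matches the capacity of the cut $\{s\}$. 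Hence every max flow in $H(\f{p})$ has value $\sum_j d_j$ and therefore saturates every arc leaving $s$; applying the second claim to each buyer produces a stable allocation.

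The most delicate point, and the one I would verify most carefully, is the identity $d_{j'} + d_{j''} + d_{j'''} = d_j$ inside Part~(3): a priori $d_{j'''}$ can be capped by $\sum_{i\in \Omega_j'''} b_i$ and be strictly smaller than $d_j - d_{j'} - d_{j''}$. The identity is therefore not a consequence of the definitions alone but of the \emph{existence} of a Walrasian allocation in the dummy-balanced market, which is precisely what guarantees that the zero-payoff supply visible to each buyer is abundant enough to fill up her demand.
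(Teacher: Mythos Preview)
Your proof is correct and follows the same route as the paper, which in fact offers only a one-sentence sketch of Part~3 (a Walrasian price vector admits an allocation selling everything in the dummy-balanced market, so the max flow saturates every $s$-leaving arc and Part~2 applies to each buyer). Your closing remark about the identity $d_{j'}+d_{j''}+d_{j'''}=d_j$---equivalently $\capa(\{s\})=\sum_j d_j$ in $H(\f p)$---correctly isolates the one nontrivial point the paper leaves implicit, and your justification via the existence of a size-$d_j$ preferred bundle is exactly right.
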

	Note that part 3 of the proposition is due to the fact that there exists an allocation where everything is sold (including dummy items), so the max flow saturates all $s$-leaving arcs, since demand and supply are balanced. 
	Hence in the corresponding allocation every buyer gets a preferred bundle.
	
    \vspace{2ex}
 
	\subsection{Computation of the buyer-optimal Walrasian equilibrium}
	Here we formally describe the \ref{alg:price-raising}.
	Each of its iterations can be done in polynomial time since the network can be constructed in polynomial time and only one max flow computation is needed.
	
	The intuition of the \ref{alg:price-raising} is to increase the price of a set of objects whenever the demand on this set exceeds the supply. 
	It is natural to increase the prices of the objects of an overdemanded subset until the buyers that were interested in these objects get interested in other objects as well. 
	This is exactly what happens in the following algorithm.
	\begin{algorithm}
		\SetAlgoRefName{Price-Raising Algorithm}
		\KwIn{Supplies $b_i$, demands $d_j$, valuations $v_{ij}$ for all $i \in \Omega$ and all $j \in N$, initial prices $\f{p}_0 \coloneqq \f{0}$}
		\KwOut{Buyer-optimal Walrasian prices $\f{p}^*$}
		Initialize: $\ell \coloneqq 0$\\
		Construct the auxiliary network $G_0 \coloneqq G(\f{p}_0)$ and find an integral max flow $f^0$ in it\\
		\While{$\val(f^\ell) < \capa_{G_\ell}(s)$}{
			Determine the left-most min cut $C_\ell$\\
            Increase prices on $I_\ell \coloneqq C_{\ell} \cap \Omega$, i.e., $\f{p}_{\ell+1} \coloneqq \f{p}_{\ell} + \f{\chi}_{I_\ell}$\\
			Construct the network $G_{\ell+1} \coloneqq G(\f{p}_{\ell +1 })$\\
			Find an integral max flow in $G_{\ell+1}$\\
			$\ell \coloneqq \ell +1$
		}
		\Return{$\f{p}^* \coloneqq \f{p}_{\ell}$}
		\caption{}
		\label{alg:price-raising}
	\end{algorithm}
	
	In \Cref{section:prices} we show that the prices computed by the \ref{alg:price-raising} are the component-wise minimum competitive prices.
	
	Given component-wise minimum competitive prices $\f{p}^*$, the \ref{alg:allocation} constructs the auxiliary network $H(\f{p}^*)$ and its max flow, leading to allocation $\f{x}^*$.
	Hence $\f{x}^*$ can be found in polynomial time. \Cref{section:allocation} shows that the value of the maximum flow is $\max\{\sum_{i \in \Omega} b_i, \sum_{j \in N} d_j\}$, since we include a dummy buyer resp.\ a dummy object.
	With \Cref{proposition:allocation} this implies that each buyer is assigned to one of her preferred bundles.
	
	\begin{algorithm}
		\SetAlgoRefName{Allocation Algorithm}
		\KwIn{Supplies $b_i$, demands $d_j$, $\Omega_j'(p^*)$, $\Omega_j''(p^*)$, $\Omega_j'''(p^*)$ for all buyers $j \in B$, prices $\f{p}^*$}
		\KwOut{Stable allocation $\f{x}^*$ where as much as possible is sold}
		Construct the auxiliary network $H(\f{p}^*)$\\
		Find an integral max flow $f$ in $H(\f{p}^*)$\\
		Determine the allocation $\f{x}^*$ given by the flow on the arcs $(j,i)$\\
		\Return{$\f{x}^*$}
		\caption{}
		\label{alg:allocation}
	\end{algorithm}
	
	If the supply does not exceed the demand, everything is sold. 
	Otherwise, only the items which are allocated to the dummy buyer $j_0$ are not sold. 
	Since $v_{ij_0}=0$ for all $i \in \Omega$, the price of an object allocated to $j_0$ has to be zero.
	Thus, all objects with positive price are completely sold.
    \pagebreak
    
	\begin{theorem}\label{thm:algo_computes_minimal_market-clearing_prices}
		The prices $\f{p}^*$ computed by the \ref{alg:price-raising} are the component-wise minimum competitive prices. 
		Moreover, under the stable allocation $\f{x}^*$ returned by the \ref{alg:allocation} as much as possible is sold, and every item with positive price is sold. 
		Thus, the prices $\f{p}^*$ coincide with the buyer-optimal Walrasian price vector.
	\end{theorem}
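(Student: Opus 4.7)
The plan is to prove the three claims of the theorem in order, with the component-wise minimality of $\f{p}^*$ being the main work.

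First, termination and competitiveness of $\f{p}^*$ are routine: each iteration strictly increases at least one price coordinate by one, and any $\f{p}$ with $p(i) > \max_{j \in N} v_{ij}$ for all $i$ trivially satisfies $\val(f) = 0 = \capa(s)$, so the loop terminates after finitely many steps. At termination $\val(f^\ell) = \capa(s)$, and \Cref{lem:competitive_flow} immediately yields that $\f{p}^*$ is competitive.

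The central step is an induction on $\ell$ of the invariant that $\f{p}_\ell \leq \f{p}$ for every competitive price vector $\f{p}$, which together with competitiveness of $\f{p}^*$ gives component-wise minimality. The base case $\f{p}_0 = \f{0} \leq \f{p}$ is immediate. For the inductive step, given $\f{p}_\ell \leq \f{p}$ with $\f{p}$ competitive, I need $p(i) \geq p_\ell(i) + 1$ for every $i \in I_\ell$. The key structural lemma reinterprets the left-most min cut $C_\ell$ in $G(\f{p}_\ell)$ as a Hall-type overdemand witness: letting $B'_\ell \subseteq N'$ and $B''_\ell \subseteq N''$ denote the buyer-side vertices on the $s$-side of $C_\ell$, inclusion-minimality of the $s$-side (via residual reachability from $s$) forces all unsaturated out-neighbors of $B'_\ell \cup B''_\ell$ into $I_\ell$, and $\capa(C_\ell) < \capa(s)$ then implies that the demand of $B'_\ell \cup B''_\ell$ that must route into $I_\ell$ strictly exceeds $\sum_{i \in I_\ell} b_i$. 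Assuming for contradiction $p(i^*) = p_\ell(i^*)$ for some $i^* \in I_\ell$, the greedy preferred-bundle characterization of \Cref{sec:structure} shows that $i^*$ remains at least as attractive to each buyer in $B'_\ell \cup B''_\ell$ under $\f{p}$ as under $\f{p}_\ell$, because payoffs on $\Omega \setminus I_\ell$ can only drop while $i^*$'s payoff is unchanged. The same overdemand then survives in $G(\f{p})$, forcing $\val(\text{max flow in } G(\f{p})) < \capa_{G(\f{p})}(s)$ and contradicting competitiveness of $\f{p}$ by \Cref{lem:competitive_flow}.

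Given that $\f{p}^*$ is competitive and minimum, the remaining two claims follow from \Cref{proposition:allocation} applied to $H(\f{p}^*)$. With the dummy object or buyer balancing $\sum_i b_i = \sum_j d_j$, one can extend any stable allocation for the original instance at $\f{p}^*$ to a flow in $H(\f{p}^*)$ saturating every $s$-leaving arc, since excess supply or demand is absorbed by the dummy through its $j'''$-vertex. Hence a max flow in $H(\f{p}^*)$ attains this full saturation, and by part 2 of \Cref{proposition:allocation} every buyer obtains a preferred bundle. Any real object routed to the dummy buyer $j_0$ has payoff $v_{ij_0} - p^*(i) = -p^*(i)$, which is a preferred-bundle choice only when $p^*(i) = 0$; therefore every item with $p^*(i) > 0$ is fully sold to real buyers. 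This yields market-clearing, so $\f{p}^*$ is Walrasian, and minimality among competitive prices (which contain all Walrasian prices) makes it the buyer-optimal Walrasian price vector. The main obstacle throughout is the structural overdemand lemma linking $C_\ell$ to a Hall-type condition that is robust under partial price increases; the subtlety lies in the split of each buyer into $j'$ and $j''$ vertices with different outgoing arc capacities, which couples the routing arguments in $G(\f{p}_\ell)$ and $G(\f{p})$ in a delicate way that must be unwound via careful tracking of saturated and unsaturated arcs in the residual network together with the greedy preferred-bundle description of \Cref{sec:structure}.
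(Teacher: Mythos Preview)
Your high-level plan matches the paper's: establish the invariant $\f{p}_\ell \leq \f{p}$ for every competitive $\f{p}$, then separately prove market-clearing. Both main steps in your proposal, however, have gaps precisely where the paper does its real work.

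For minimality, you aim for the contradiction ``overdemand survives in $G(\f{p})$'', whereas the paper derives a different contradiction, namely that $C_\ell$ is not left-most in $G(\f{p}_\ell)$. Your version would require naming a concrete overdemanded set at $\f{p}$. The single item $i^*$ need not be overdemanded by itself, and $I_\ell$ need not survive either, since items of $I_\ell \setminus \{i^*\}$ may have their prices raised under $\f{p}$ and drop out of every buyer's $\Omega' \cup \Omega''$. The natural candidate is $I^= \coloneqq \{i \in I_\ell : p(i) = p_\ell(i)\}$, but proving $I^=$ overdemanded at $\f{p}$ requires exactly the demand-shifting analysis you defer; it is essentially dual to the paper's route and no easier. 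The paper instead stays inside $G(\f{p}_\ell)$: it uses competitiveness of $\f{p}$ only to deduce that $I^=$ is \emph{not} overdemanded at $\f{p}$, converts this into the supply bound~\eqref{I_1 not overdemanded} (the Claim inside \Cref{lem:smaller_mincut}), and then shows that the cut $\tilde C$ induced by $I^< = I_\ell \setminus I^=$ already has capacity at most $\capa(C_\ell)$, contradicting left-most-ness. Note that left-most-ness enters the contradiction itself; in your sketch it appears only when describing the structure of $C_\ell$, not in the punchline.

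For market-clearing, your claim that a stable allocation extends to a flow in $H(\f{p}^*)$ saturating all $s$-leaving arcs is exactly what has to be proved, and ``excess supply is absorbed by the dummy'' does not do it: the dummy buyer's vertex $j_0'''$ connects only to items $i$ with $p^*(i) = 0$, so it cannot absorb leftover copies of a positively-priced object. Showing that \emph{real} buyers pick these up is the content of \Cref{theorem:market-clearing}, which relies on the \ref{alg:flow_update} and \Cref{lemma:flow_demand}: in the last iteration where an item's price was raised, the buyers whose flow through it was removed have that item land in their $\Omega'''$, and \Cref{lemma:flow_demand} guarantees their residual demand suffices to re-absorb exactly those units.
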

	
	We prove Theorem~\ref{thm:algo_computes_minimal_market-clearing_prices} in Section~\ref{sec:walrasian}.
	Note that the fact that the flow-based ascending auction returns the buyer-optimal Walrasian prices can also be shown directly by combining results of the literature on strong substitute valuation functions \cite{ausubel2006efficient} with some observation (see \Cref{sec:comparison}). 
	Ausubel \cite{ausubel2006efficient} showed that the buyer-optimal prices are computed by an ascending auction, where prices are always raised on the inclusion-wise minimal set corresponding to the steepest descent direction of the Lyapunov function. 
	We observe that these sets correspond to minimal maximum overdemanded sets which are exactly the sets we compute with a left-most min cut computation.
	However, the proof given in the subsequent sections has two advantages. 
	First, we show that the minimum Walrasian prices $\f{p}^*$ coincide with the minimum competitive prices, which will turn out to be crucial when proving our monotonicity results in Section~\ref{sec:monotonicity}. 
	Second, our proof is independent from the literature on strong substitute and \Lnat-convex functions and uses only network flow arguments.

    Regarding the running time, our considerations above yield the following lemma.
    \begin{lemma}
		One iteration of the \ref{alg:price-raising} requires at most $\abs{N}$ tier oracle calls, $\mathcal{O}(\abs{N}\abs{\Omega})$ time to construct the network and one max flow min cut computation.
	\end{lemma}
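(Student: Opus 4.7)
The plan is to unpack the three cost components separately and bound each by inspecting the iteration as described in \Cref{section:aux-network} and the pseudocode of the \ref{alg:price-raising}.

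First I would handle the oracle calls. To build $G(\f{p}_\ell)$ we need, for each buyer $j\in N$, the sets $\Omega_j'(\f{p}_\ell)$ and $\Omega_j''(\f{p}_\ell)$ together with the demand quantities $d_{j'}$ and $d_{j''}$ (note that $\Omega_j'''$ is not needed here, as pointed out in \Cref{sec:structure}). All of this information is returned by a single tier oracle call per buyer, so the iteration issues exactly $\abs{N}$ calls.

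Next I would bound the construction cost. The vertex set of $G(\f{p}_\ell)$ has $2 + \abs{\Omega} + 2\abs{N}$ elements, hence size $\mathcal{O}(\abs{N} + \abs{\Omega})$. For the arcs: there are $2\abs{N}$ arcs leaving $s$, $\abs{\Omega}$ arcs entering $t$, and for each $j\in N$ at most $\abs{\Omega_j'}+\abs{\Omega_j''}\leq \abs{\Omega}$ arcs from $\{j',j''\}$ to object vertices. Summing gives $\mathcal{O}(\abs{N}\abs{\Omega})$ arcs, and each arc capacity ($d_{j'}, d_{j''}, c_{j'i}=b_i, c_{j''i}=\min\{b_i,d_{j''}\}, b_i$) is computable in constant time from data that is either part of the input or already delivered by the oracle. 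Hence the whole construction runs in $\mathcal{O}(\abs{N}\abs{\Omega})$ time.

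Finally, the iteration needs the maximum flow $f^{\ell+1}$ and, in case $\val(f^{\ell+1})<\capa(s)$, the left-most min cut $C_\ell$. The former is exactly one max-flow computation on $G_{\ell+1}$. The latter, as remarked right after the sketch of the auction in \Cref{sec:flow}, can be obtained by a single BFS from $s$ in the residual network of $f^{\ell+1}$, whose cost is absorbed into the max-flow computation (or, being $\mathcal{O}(\abs{N}\abs{\Omega})$, into the network construction term). Combining the three bounds yields the claim. The most delicate point is merely to observe that extracting the inclusion-wise minimal min cut does not cost extra beyond the max-flow call; everything else is a direct inspection of the construction.
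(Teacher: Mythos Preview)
Your proposal is correct and matches the paper's approach: the paper does not give a separate proof but simply states that ``our considerations above yield the following lemma,'' so your explicit unpacking of the three cost components (one tier oracle call per buyer, an $\mathcal{O}(\abs{N}\abs{\Omega})$ arc count from the construction in \Cref{section:aux-network}, and a single max-flow plus BFS for the left-most min cut) is exactly the intended justification made explicit.
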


    For the max flow min cut computation, one can use, for example, the algorithm of Chen et al.~\cite{chen2022maximum} to find a max flow in $\mathcal{O}((\abs{N}\abs{\Omega})^{1+o(1)})$, and to find the left most min cut, one can use a breadth first search (BFS) from $s$ in the residual network in $\mathcal{O}(\abs{N} \abs{\Omega})$) time.
    This results in total running time of $\mathcal{O}(\abs{N}\TO + (\abs{N}\abs{\Omega})^{1+o(1)})$.

    Thus, our \ref{alg:price-raising} is faster than the running time of the algorithm for the general strong gross substitute case by Murota et al.~\cite{murota2013computing}, which is $\mathcal{O}(\abs{N} \DO + \abs{N} \abs{\Omega}^4 \log(\abs{N} \abs{\Omega} B) \ExO)$, where $B$ is the maximum number of copies of an object.
    Note that different oracle models are used in the two algorithms, but since the running time of the tier oracle can be bounded by $O(\abs{\Omega} \log(\abs{\Omega}))$ for a value oracle see \Cref{alg:preferred_bundles}, we consider this a fair statement. 
    In a follow up paper of this work~\cite{eickhoff2023faster} we improved the running time for the general case to $\mathcal{O}(\abs{N}\DO + \abs{N} \abs{\Omega}^3 \ExO)$.
    Again different oracle models are used and thus the running times are not directly comparable. As before, we can afford a running time of $O(\abs{\Omega} \log(\abs{\Omega}))$ for the tier oracle without exceeding the running time of the general mechanism.
    If we assume in favor of the general mechanism that the exchange and demand oracle queries takes $\mathcal{O}(1)$ time, our algorithm for the special case is still faster in all instances where $|N|^{o(1)} < |\Omega|^{2-o(1)}$, i.e., in all instances where the number of buyers is not drastically larger than the number of items.

	As shown in \cite{murota2013computing}, the number of iterations of the \ref{alg:price-raising} is at most $\left\Vert \f{p}^\ast - \f{p}_0\right\Vert_\infty$, where $\f{p}^\ast$ is the minimum Walrasian price vector. 
	Since $p^*(i) \leq \max_{j \in \Omega} v_{ij}$ this is pseudo-polynomial.
	Note, however, that we may as well start with any alternative start vector $\f{p}_0$, as long as $\f{p}_0$ is known to be component-wise smaller or equal to the unique minimum competitive price vector.
 
	In \Cref{sec:adapted_step_length} we present a variation of the auction where the price is raised as far as possible for the same inclusion-wise minimal maximum overdemanded set.
	
	\subsection{Auctioneer with memory -- Warm start with flow updates}\label{sec:flow_updates}

    A very natural idea is to not compute the flow completely from scratch in every iteration but instead to update the flow, when updating the price vector.
	We will describe this procedure and analyze the resulting running time.
	Interestingly, in one iteration this may not speed up the algorithm, but when combining this with the adapted step length algorithm described in \Cref{sec:adapted_step_length}, then the running time of the complete auction improves.
	
	To update the flow from one iteration to the next, assume we are given a maximum flow $f^\ell$ in the network induced by price vector $\f{p}_\ell$. 
	Moreover, let $C$ be the left most min cut in the network $G(\f{p}_\ell)$ and assume $C \neq \{s\}$ (because in this case the computed prices are already competitive and the auction terminates). 
	Then the \ref{alg:flow_update} shows how to update the flow to obtain a new feasible flow in $G(\f{p}_{\ell+1})$.
	The idea is to keep as much flow as possible from the previous assignment. 
	We use that items might change their set, i.e., go from  $\Omega_{j}'(\f{p}_\ell)$ to $\Omega_{j}''(\f{p}_{\ell +1})$ or vice versa, i.e., from $\Omega_{j}''(\f{p}_\ell)$ to $\Omega_{j}'(\f{p}_{\ell +1})$ but that for a fixed buyer this is possible in at most one direction.
	A third possibility is that a buyer might lose interest in an item, but in this case this item is not in $\Omega_{j}'(\f{p}_{\ell+1}) \cup \Omega_{j}''(\f{p}_{\ell+1})$ and thus no flow is assigned.
	
	\begin{algorithm}
		\SetAlgoRefName{Flow Update Algorithm}
		\KwIn{Network $G(\f{p}_\ell)$ with max flow $f^{\ell}$, $\f{p}_{\ell+1} \coloneqq \f{p}_\ell + \f{\chi}_I$, new network $G(\f{p}_{\ell+1})$}
		\KwOut{A feasible flow $f$ in $G(\f{p}_{\ell+1})$ with $D_{\f{p}_\ell} -\val(f^\ell) \geq D_{\f{p}_{\ell+1}} - \val(f)$}
		$f \coloneqq 0$ for all arcs in $G(\f{p}_{\ell+1})$\\
		\For{$(j,i)\in N \times \Omega$ with $f^\ell_{j'i}>0$ or $f^\ell_{j''i}>0$}{
			\uIf{$(j',i)$ is an arc in $G(\f{p}_{\ell+1})$, i.e., $i \in \Omega_j'(\f{p}_{\ell+1})$}{
				Add $\left(f^\ell_{j'i} + f^\ell_{j''i}\right)$ flow units to $f$ on the path $s$-$j'$-$i$-$t$
			}
			\ElseIf{$(j'',i)$ is an arc in $G(\f{p}_{\ell+1})$, i.e., $i \in \Omega_j''(\f{p}_{\ell+1})$}{
				Add $\left(f^\ell_{j'i} + f^\ell_{j''i}\right)$ flow units to $f$ on the path $s$-$j''$-$i$-$t$
			}
		}
		\Return{$f$}
		\caption{}
		\label{alg:flow_update}
	\end{algorithm}
	
	This flow update is well defined, since feasibility is shown in the following Lemma.
	\begin{lemma}\label{lemma:flow_update_feasible}
		The flow $f$ computed in \ref{alg:flow_update} is a feasible flow in $G(\f{p}_{\ell+1})$.
	\end{lemma}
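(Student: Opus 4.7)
The plan is to verify two things: that $f$ satisfies flow conservation at every internal vertex of $G(\f{p}_{\ell+1})$, and that no arc capacity is exceeded. Conservation is immediate, since the algorithm only pushes flow along entire $s$-$t$-paths of the form $s$-$j'$-$i$-$t$ or $s$-$j''$-$i$-$t$, so the increment at each internal vertex is balanced.

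For capacity, I would split the arcs into three groups. On $(i,t)$-arcs the new flow is at most $\sum_{j\in N}(f^\ell_{j'i}+f^\ell_{j''i})$, which equals the value of $f^\ell$ on $(i,t)$ and hence is at most $b_i$. On $(j',i)$- and $(j'',i)$-arcs I would use that $\Omega_j'(\f{p}_\ell)\cap\Omega_j''(\f{p}_\ell)=\emptyset$, so at most one of $f^\ell_{j'i}$, $f^\ell_{j''i}$ is positive; the pushed amount $f^\ell_{j'i}+f^\ell_{j''i}$ therefore equals whichever summand was positive in $f^\ell$, and is bounded by that arc's old capacity and in particular by $b_i$. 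The remaining and main step concerns the $s$-leaving arcs. Here I would invoke the observation flagged in the algorithm preamble: for each fixed buyer $j$, items can move between $\Omega_j'$ and $\Omega_j''$ in at most one direction when passing from $\f{p}_\ell$ to $\f{p}_{\ell+1}=\f{p}_\ell+\f{\chi}_I$, because the payoffs of items in $I$ drop by exactly one while all other payoffs stay fixed, and the greedy characterization behind Algorithm~\ref{alg:preferred_bundles} then forces either $\Omega_j'(\f{p}_{\ell+1})\subseteq\Omega_j'(\f{p}_\ell)$ or $\Omega_j''(\f{p}_{\ell+1})\subseteq\Omega_j''(\f{p}_\ell)$. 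In the first case the flow routed to $j'$ equals $\sum_{i\in\Omega_j'(\f{p}_{\ell+1})}f^\ell_{j'i}\leq\sum_{i\in\Omega_j'(\f{p}_{\ell+1})}b_i=d_{j'}(\f{p}_{\ell+1})$ by per-arc feasibility of $f^\ell$, while the $(s,j'')$-bound follows by combining the identity $d_{j''}(\f{p}_{\ell+1})=\min\{\sum_{i\in\Omega_j''(\f{p}_{\ell+1})}b_i,\, d_j-d_{j'}(\f{p}_{\ell+1})\}$ with the observation that the total amount routed into $j$ by the update never exceeds the total old flow into $j$, which in turn is at most $d_j$. The second case is handled symmetrically.

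The main obstacle is making the one-direction claim rigorous and carrying out the accompanying case analysis, especially the subcase in which the threshold item $k_j$ itself lies in $I$; then the new threshold $k_j^{\ell+1}$ can shift and $d_{j''}(\f{p}_{\ell+1})$ may strictly shrink relative to $d_{j''}(\f{p}_\ell)$, so one has to argue that the flow rerouted through $j''$ still fits. Once this structural statement and the case split are set up cleanly, every remaining capacity check reduces to per-arc feasibility of the old flow $f^\ell$.
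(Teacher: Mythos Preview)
Your plan has a genuine gap: you never invoke the hypothesis that $I$ is the item part of the \emph{left-most min cut} $C$, and your closing assertion that ``every remaining capacity check reduces to per-arc feasibility of the old flow $f^\ell$'' is not correct. The trouble is precisely your case~2, where some set $M=\Omega_j''(\f{p}_\ell)\cap\Omega_j'(\f{p}_{\ell+1})$ of items migrates from $\Omega_j''$ to $\Omega_j'$. Then $d_{j''}$ drops by $\sum_{i\in M}b_i$, whereas the amount you reroute away from $(s,j'')$ is only $\sum_{i\in M}f^\ell_{j''i}$. Per-arc feasibility gives $f^\ell_{j''i}\le b_i$, which is the wrong inequality: writing $S=\Omega_j''(\f{p}_\ell)\cap\Omega_j''(\f{p}_{\ell+1})$, the new load on $(s,j'')$ is $\sum_{i\in S}f^\ell_{j''i}$, and from $\sum_{i\in S\cup M}f^\ell_{j''i}\le d_{j''}(\f{p}_\ell)$ you obtain the required bound $\sum_{i\in S}f^\ell_{j''i}\le d_{j''}(\f{p}_\ell)-\sum_{i\in M}b_i$ only if every arc $(j'',i)$ with $i\in M$ is \emph{saturated}. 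Your ``total flow into $j$ is at most $d_j$'' alternative hits the same wall, since it would need the new flow on $(s,j')$ to be at least $d_{j'}(\f{p}_{\ell+1})$, again a saturation statement rather than an upper bound. The argument is not symmetric between $j'$ and $j''$ because $d_{j'}$ equals $\sum_{i\in\Omega_j'}b_i$ by definition, while $d_{j''}$ is a minimum of two terms and the dangerous one is $d_j-d_{j'}(\f{p}_{\ell+1})$.

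The paper's proof supplies exactly the missing saturation fact via the min-cut structure, splitting on whether $j''\in C$. If $j''\in C$, then every arc $(j'',i)$ with $i\in M\subseteq\Omega\setminus I$ leaves $C$ and hence is saturated in the max flow $f^\ell$, giving $f^\ell_{j''i}=b_i$. If $j''\notin C$, then for each $i\in S\subseteq I\subseteq C$ a positive $f^\ell_{j''i}$ would create a residual arc $i\to j''$ making $j''$ reachable from $s$, a contradiction; hence $f^\ell_{j''i}=0$ on all of $S$ and the new load on $(s,j'')$ is zero. Either way the bound follows, but one of these two facts is indispensable---neither is implied by the one-direction observation or by per-arc feasibility of $f^\ell$ alone. (A side remark: your stated dichotomy ``$\Omega_j'(\f{p}_{\ell+1})\subseteq\Omega_j'(\f{p}_\ell)$ or $\Omega_j''(\f{p}_{\ell+1})\subseteq\Omega_j''(\f{p}_\ell)$'' is slightly off, since $\Omega_j''$ can also pick up items from strictly below the old threshold; but that is easy to patch and not the real issue.)
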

	\begin{proof}
		Note that flow conservation is fulfilled in every node by construction. 
		It remains to show that the capacities in $G(\f{p}_{\ell+1})$ are obeyed.
		Let $C$ be the left-most min cut in iteration $\ell$ and $I=C \cap \Omega$ the set of items on which the prices are increased.
		\begin{itemize}
			\item For $(s,j')$ this is given: 
			we assign $\sum_{i \in \Omega_j'(\f{p}_{\ell+1})} \left(f^{\ell}_{j'i}+f^\ell_{j''i}\right)$ units of flow and the capacity is given by $\sum_{i \in \Omega_j'(\f{p}_{\ell+1})}b_i$. 
			Note that $\left(f^{\ell}_{j'i}+f^\ell_{j''i}\right) \leq b_i$ since the only $i$ leaving arc has capacity $b_i$.
			\item For $(s,j'')$ we consider how $d_{j''}(\f{p}_\ell)$ changes if we increase the prices on objects in $I$.
			\begin{itemize}
				\item If there are objects in $\Omega_j''(\f{p}_{\ell}) \cap \Omega_j'(\f{p}_{\ell+1})$, let $M \coloneqq \Omega_j''(\f{p}_{\ell}) \cap \Omega_j'(\f{p}_{\ell+1})$ be the objects which move from $\Omega_j''$ to $\Omega_j'$ and $S \coloneqq \Omega_j''(\f{p}_{\ell}) \cap \Omega_j''(\f{p}_{\ell+1})$ be the items that stay in $\Omega_j''$.
				
				We know that the objects in $\Omega_j''(\f{p}_{\ell}) \cap I \subseteq \Omega_j''(\f{p}_{\ell+1})$. 
				Moreover, the objects in $\Omega_j'(\f{p}_{\ell})$ stay in $\Omega_j'(\f{p}_{\ell+1})$, i.e.\ $\Omega_j'(\f{p}_{\ell}) \subseteq \Omega_j'(\f{p}_{\ell+1})$.
				Hence, the demand $d_{j'}$ increases and the demand $d_{j''}$ decreases by the number of items in $M$, i.e., $\sum_{i \in M} b_i$.
				
				Recall that left-most min cut $C$ is defined by the vertices reachable from $s$ in the residual network corresponding to $f^\ell$.
				By definition $M \cap I = \emptyset$ and $S \subseteq I$.
				
				If $j''$ is in the left-most min cut, the arcs in the cut which are leaving this node are fully used in any maximum flow. 
				Thus, $f^\ell_{j''i} = c_{j''i}=b_i$ for $i \in M$.
				That $c_{j''i} = \min \{ b_i, d_{j''} \} = b_i$ follows since otherwise with this item the complete demand could be satisfied. 
				But this is a contradiction since buyer $j$ wants to buy items with less payoff at prices $\f{p}_{\ell + 1}$, in other words $i \in \Omega_j'(\f{p}_{\ell+1})$ and not $i \in \Omega_j''(\f{p}_{\ell+1})$.
				
				Since this flow is shifted to $j'$ after the price update, the flow is reduced by the number of items in $M$ as well, thus the capacities are still obeyed.
				
				If $j''$ is not in the left-most min cut, we cannot reach $j''$ from items in $S \subseteq I$. 
				Hence, the flow on the edges $(j''i)$ with $i \in S$ is zero (otherwise the backwards arc exists in the residual network and $j''$ is reachable). 
				Hence, in the \ref{alg:flow_update} we do not assign any flow to an edge through the vertex $j''$. 
				Thus, the capacity of $(s,j'')$ is still not exceeded.
				
				\item Consider the case where $\Omega_j''(\f{p}_{\ell}) \cap \Omega_j'(\f{p}_{\ell+1}) = \emptyset$.
				
				If $\Omega_j''$ does not lose any objects by the price update, we know that
				\[
				d_j''(\f{p}_{\ell+1}) = d_j''(\f{p}_{\ell}) + \hspace{3ex} \sum_{\mathclap{\hspace{7ex} i \in \Omega_j'(\f{p}_{\ell}) \cap \Omega_j''(\f{p}_{\ell+1})}}\hspace{1ex} b_i \hspace{7ex} .
				\]
				Hence, the capacity is not exceeded on $(s,j'')$.
				
				It remains to show that the capacity is not exceeded if $\Omega_j''$ loses some objects (it might get new ones from $\Omega_j'$ as well). 
				The only situation which can cause problems is if the demand of objects in $\Omega''$ decreases.
				The demand $d_j''$ only decreases if there are objects moving from $\Omega_j''$ to $\Omega_j'$ (which does not happen by assumption) or if there are not enough objects with a positive payoff. 
				The latter case implies that all items available in $\Omega_j''(\f{p}_{\ell+1})$ are demanded and thus, the capacity constraint on $(s,j'')$ is fulfilled.
			\end{itemize}
			
			\item For $(i,t)$ the capacity does not change, so the capacity constraints are fulfilled.
			
			\item For $(j',i)$ and $(j'',i)$ it follows directly by the definition of the capacity and since the capacity on the $s$-leaving and $t$-entering arcs is not exceeded.\qedhere
		\end{itemize}
	\end{proof}

    \begin{lemma}
        \label{lem:running_time_adaption}
        The adaption of the network and the check whether the set of objects $I$ in the left-most min cut changes runs in time $\mathcal{O}(\abs{N}\TO + \abs{N}\abs{\Omega})$.
    \end{lemma}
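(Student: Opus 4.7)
The plan is to separately bound three pieces of work: (i) the tier oracle queries that detect how each buyer's demand partition has changed after the unit price increase on $I$, (ii) the update of the capacities in the auxiliary network, and (iii) the test that verifies whether $I$ still equals the object part of the left-most min cut in $G(\f{p}_{\ell+1})$.

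For piece (i), a single tier oracle call per buyer returns the updated sets $\Omega_j',\Omega_j'',\Omega_j'''$ together with the scalars $d_{j'},d_{j''},d_{j'''}$, contributing $\abs{N}\,\TO$ in total. For piece (ii), only capacities on $(s,j'),(s,j'')$, on the middle arcs $(j',i),(j'',i)$ and on $(i,t)$ can be affected, and the network contains $\mathcal{O}(\abs{N}\abs{\Omega})$ arcs in total; a single pass over the new partitions therefore suffices to rebuild it in $\mathcal{O}(\abs{N}\abs{\Omega})$ time. For piece (iii), we invoke the \ref{alg:flow_update}, which by construction iterates once over every arc of $G(\f{p}_\ell)$ that was carrying flow and so runs in $\mathcal{O}(\abs{N}\abs{\Omega})$ time, producing a feasible flow $f$ in $G(\f{p}_{\ell+1})$ whose deficiency satisfies $D_{\f{p}_{\ell+1}}-\val(f) \leq D_{\f{p}_\ell}-\val(f^\ell)$. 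A breadth-first search from $s$ in the residual network of $f$ then reads off, in time linear in the arc count and hence $\mathcal{O}(\abs{N}\abs{\Omega})$, the $s$-side of the (unique) inclusion-wise minimal cut; comparing its object part with $I$ costs an additional $\mathcal{O}(\abs{\Omega})$. Summing the three contributions yields the stated bound $\mathcal{O}(\abs{N}\TO+\abs{N}\abs{\Omega})$.

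The main obstacle will be justifying that a BFS on the residual graph of $f$ -- a flow that is only guaranteed to be feasible, not maximum -- faithfully witnesses whether $I$ remains the left-most min cut. The intended argument combines the deficiency bound provided by the Flow Update Algorithm with the observation that, as long as no item moves between $\Omega_j'$ and $\Omega_j''$ and no item leaves $\Omega_j' \cup \Omega_j''$ as a consequence of the price raise on $I$, the residual network of $f$ admits no augmenting $s$–$t$ path, so $f$ is in fact maximum and the BFS correctly identifies $C_{\ell+1}$; conversely, if the BFS does expose reachability of $t$, then $I$ has necessarily changed and a full max-flow recomputation is triggered outside this lemma's scope, preserving the claimed time bound for the update-and-check routine itself.
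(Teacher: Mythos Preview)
Your first two paragraphs match the paper's proof almost exactly: tier oracle calls for each buyer, a single pass over the $\mathcal{O}(\abs{N}\abs{\Omega})$ arcs to execute the \ref{alg:flow_update}, and a BFS in the residual network to read off the left-most min cut. That is all the paper does, and it is all the lemma asks for, since the statement is purely a running-time bound.

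Your third paragraph, however, introduces a correctness question that the lemma does not pose and that the paper's own proof does not address. More importantly, the resolution you sketch is not sound. The claim ``if the BFS does expose reachability of $t$, then $I$ has necessarily changed'' does not follow from the premise you give. Items can migrate between $\Omega_j'$ and $\Omega_j''$ (or from $\Omega_j''$ to $\Omega_j'''$) after a price raise on $I$ while the object part of the left-most min cut remains $I$; in that case the updated flow $f$ need not be maximum, BFS may reach $t$, and yet $I$ is unchanged. Your dichotomy ``either no sets move (so $f$ is max) or $I$ changed'' is therefore false, and the converse you assert is unproven. If you want to argue correctness of the check, you would need to show directly that whenever $C_{\ell+1}=C_\ell$ the warm-started flow saturates every arc leaving $C_\ell$ in $G(\f{p}_{\ell+1})$, which requires a more careful case analysis of the capacity changes on cut arcs than what you have written. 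For the lemma as stated, simply drop the third paragraph.
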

    \begin{proof}
        After asking every buyer to report their preferences for the new prices, every arc is checked and then in constant time the flow is adapted (see \ref{alg:flow_update}). Afterwards, the left-most min cut can be computed by a breadth first search (BFS) in the residual network. As there are $\mathcal{O}(\abs{N}\abs{\Omega})$ many arcs, the statement follows.
    \end{proof}
	
	We can use the structure of the left-most min cut to show that if the flow decreases in an update step, than the demand decreases by at least the same amount. 
	This will help us to show that the prices returned by the auction are market-clearing (see \Cref{section:allocation}).
	Moreover, this enables us to give an upper bound on the overall running time of the ascending auction with adapted step length (see \Cref{sec:adapted_step_length}).
	
	\begin{lemma}\label{lemma:flow_demand}
		Given a max flow in $G(\f{p}_\ell)$ with corresponding left-most min cut $C$. 
		Let $I = C \cap \Omega$ be the overdemanded set and $\f{p}_{\ell+1}$ be the price vector after the price update. 
		Then, the demand $d_{j'} + d_{j''}$ of buyer $j$ will decrease at least by
		\begin{equation}
			\sum_{\mathclap{\substack{i \in \Omega_j''(\f{p}_{\ell}): \\ i \notin \Omega_j'(\f{p}_{\ell+1}) \cup \Omega_j''(\f{p}_{\ell+1})}}} \hspace{1ex} f^\ell_{j''i}.
			\label{removed_flow}
		\end{equation}
	\end{lemma}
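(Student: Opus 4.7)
The strategy is a case analysis on how the marginal payoff of buyer $j$ evolves. Let $m^{\mathrm{old}} := v_{k_jj} - p_\ell(k_j)$ denote the old marginal payoff and $m^{\mathrm{new}}$ the analogous quantity at the new prices $\f{p}_{\ell+1}$. In the easy case $m^{\mathrm{new}} < m^{\mathrm{old}}$, every item $i \in \Omega_j''(\f{p}_\ell)$ has new payoff equal to either $m^{\mathrm{old}}$ (if $i \notin I$) or $m^{\mathrm{old}}-1$ (if $i \in I$); both values are at least $m^{\mathrm{new}}$ by integrality of valuations and prices, so $i$ stays in $\Omega_j'(\f{p}_{\ell+1}) \cup \Omega_j''(\f{p}_{\ell+1})$. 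Therefore $L = \emptyset$ and the inequality holds trivially.

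Suppose now $m^{\mathrm{new}} \geq m^{\mathrm{old}}$, and set $A := \Omega_j'(\f{p}_\ell) \cup \Omega_j''(\f{p}_\ell)$ and $B := \Omega_j'(\f{p}_{\ell+1}) \cup \Omega_j''(\f{p}_{\ell+1})$. Any $i \notin A$ had old payoff strictly less than $m^{\mathrm{old}}$ and a new payoff no larger, so $i \notin B$; combined with $L \subseteq A \setminus B$ this gives $\sum_B b_i \leq \sum_A b_i - \sum_L b_i$. Writing the demand as $(d_{j'}+d_{j''})(\f{p}) = \min\{d_j, \sum_{\Omega_j'\cup\Omega_j''(\f{p})} b_i\}$, I split into subcases. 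If $\sum_A b_i \leq d_j$, the decrease equals $\sum_A b_i - \sum_B b_i \geq \sum_L b_i \geq \sum_L f^\ell_{j''i}$, using the capacity bound $f^\ell_{j''i} \leq c_{j''i} \leq b_i$. If $m^{\mathrm{new}} > m^{\mathrm{old}}$, every item in $\Omega_j''(\f{p}_\ell)$ drops out of $B$, so $L = \Omega_j''(\f{p}_\ell)$ and $B \subseteq \Omega_j'(\f{p}_\ell)$; then $\sum_L f^\ell_{j''i}$ is just the flow out of $j''$, which is at most $d_{j''}(\f{p}_\ell) = d_j - d_{j'}(\f{p}_\ell) \leq d_j - \sum_B b_i$, matching the demand decrease.

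The main obstacle is the remaining case $m^{\mathrm{new}} = m^{\mathrm{old}}$ with $\sum_A b_i > d_j$, in which $L = \Omega_j''(\f{p}_\ell) \cap I \subseteq C$. My plan is to split further on whether $j'' \in C$. If $j'' \notin C$, every arc $(j'', i)$ with $i \in L \subseteq C$ crosses from $\bar C$ into $C$ and therefore carries no flow in any max flow (by the same residual argument used in the proof of \Cref{lemma:flow_update_feasible}), so $\sum_L f^\ell_{j''i} = 0$ and the inequality is immediate. If $j'' \in C$, the arcs leaving $\{j''\}$ toward $\bar C$ must be saturated, i.e., $f^\ell_{j''i'} = c_{j''i'} = \min\{b_{i'}, d_{j''}(\f{p}_\ell)\}$ for every $i' \in \Omega_j''(\f{p}_\ell) \setminus L$. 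Combining this with flow conservation at $j''$ and the $s$-arc capacity $\sum_{i \in \Omega_j''(\f{p}_\ell)} f^\ell_{j''i} \leq d_{j''}(\f{p}_\ell)$ yields $\sum_L f^\ell_{j''i} \leq d_{j''}(\f{p}_\ell) - \sum_{\Omega_j''(\f{p}_\ell) \setminus L} c_{j''i'}$, and I expect the hard calculation to consist in identifying this right-hand side with the demand decrease $d_j - \sum_B b_i = d_{j''}(\f{p}_\ell) - \sum_{\Omega_j''(\f{p}_\ell) \setminus L} b_{i'}$, using that on the non-$L$ items in $\Omega_j''(\f{p}_\ell)$ one has $b_{i'} \leq d_{j''}(\f{p}_\ell)$ and hence $c_{j''i'} = b_{i'}$, which closes the remaining case.
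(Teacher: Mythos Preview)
Your proof is essentially correct and follows the same core idea as the paper's: establish $L \subseteq I$, then split on whether $j'' \in C$, using that the cut arcs from $j''$ into $\bar C$ are saturated to bound the removed flow by the demand drop. Your extra case analysis on $m^{\mathrm{new}}$ versus $m^{\mathrm{old}}$ is more explicit than the paper's (which compresses everything by directly asserting the formula $\max\{0, d_{j''}(\f{p}_\ell) - \sum_{\Omega_j''(\f{p}_\ell)\setminus I} b_i\}$ for the demand decrease and then doing the $j''\in C$ / $j''\notin C$ split), but the structure and the decisive min-cut argument are the same.

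There is one small gap, which the paper's proof shares. In the $j'' \in C$ subcase you assert $b_{i'} \leq d_{j''}(\f{p}_\ell)$ for every $i' \in \Omega_j''(\f{p}_\ell)\setminus L$, so that $c_{j''i'} = b_{i'}$; this is unjustified and can fail. What actually holds is a dichotomy. If $\sum_{i'\in\Omega_j''(\f{p}_\ell)\setminus L} b_{i'} \geq d_{j''}(\f{p}_\ell)$ (equivalently $\sum_B b_i \geq d_j$, so the demand decrease is $0$), then the saturated cut arcs out of $j''$ already carry at least $d_{j''}(\f{p}_\ell)$ units (either all $b_{i'}<d_{j''}$ and the sum does it, or one item alone has $c_{j''i'}=d_{j''}$), forcing $\sum_L f^\ell_{j''i}=0$. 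If instead $\sum_{i'\in\Omega_j''(\f{p}_\ell)\setminus L} b_{i'} < d_{j''}(\f{p}_\ell)$, then each individual $b_{i'} < d_{j''}(\f{p}_\ell)$, so $c_{j''i'}=b_{i'}$ indeed holds and your computation goes through verbatim. The paper has the identical lacuna at the same step (it writes ``at most $d_{j''}(\f{p}_\ell) - \sum_{i \in \Omega_j''(\f{p}_\ell)\setminus I} b_i$'' where strictly $c_{j''i}$ should appear), so you are in good company; just add the one-line case split.
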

	\begin{proof}
		If \eqref{removed_flow} is zero, the statement is directly fulfilled. 
		Thus, from now on, we consider the cases when flow is removed. 
		This can only occur if some objects get a payoff of zero after the price update, i.e., if
		\[
		S \coloneqq \Omega_j''(\f{p}_{\ell}) \setminus (\Omega_j'(\f{p}_{\ell+1}) \cup \Omega_j''(\f{p}_{\ell+1})) \neq \emptyset.
		\]
		Note that $S$ describes the set of objects which move from $\Omega_j''$ to $\Omega_j'''$. 
		Since $\Omega_j'''$ just contains objects with utility zero and $\Omega_j''$ only those with a positive payoff, all objects in $S$ are contained in $I$.
		
		By definition, the demand $d_{j'}+d_{j''}$ will decrease by
		\begin{align*}
			\max\Big\{0, d_{j''}(\f{p}_{\ell}) - \hspace{2ex} \sum_{\hspace{3ex} \mathclap{i \in \Omega_j''(\f{p}_{\ell}) \setminus I}} b_i \hspace{3ex}\Big\}.
		\end{align*}
		
		If $j''$ is contained in the left-most min cut, all arcs from $j''$ to $\Omega_j''(\f{p}_{\ell}) \setminus I$ are fully satisfied (since they are not reachable from $j''$).
		Hence, there can be at most $d_{j''}(\f{p}_{\ell}) - \sum_{i \in \Omega_j''(\f{p}_{\ell}) \setminus I} b_i \geq 0$ units of flow going through $j''$ to vertices in $S$. 
		Thus, for buyer $j$ we reduce the demand at least by the removed flow units traveling through a vertex of $j$.
		
		If $j'$ is not contained in the left-most min cut, there is no flow on the arcs from $j''$ to $I$. 
		Thus, no flow through buyer $j$ is removed and we are done in this case as well.
	\end{proof}
	
	\begin{corollary}\label{coro:demand-flow_value_non_decreasing}
		The flow $f$ computed in the \ref{alg:flow_update} satisfies
		\[
		D_{\f{p}_\ell}- \val(f^\ell) \geq D_{\f{p}_{\ell+1}} - \val(f). 
		\]
	\end{corollary}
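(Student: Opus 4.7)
The plan is to rewrite the desired inequality as $D_{\f{p}_\ell} - D_{\f{p}_{\ell+1}} \geq \val(f^\ell) - \val(f)$, i.e., the drop in total demand dominates the drop in flow, and to bound both sides by the same per-buyer quantity that already appears in \Cref{lemma:flow_demand}. First I would decompose $D_{\f{p}_\ell} - D_{\f{p}_{\ell+1}} = \sum_{j \in N} \bigl[(d_{j'}(\f{p}_\ell) + d_{j''}(\f{p}_\ell)) - (d_{j'}(\f{p}_{\ell+1}) + d_{j''}(\f{p}_{\ell+1}))\bigr]$ and apply \Cref{lemma:flow_demand} term-by-term to get the lower bound $\sum_{j \in N} \sum_{i \in S_j} f^\ell_{j''i}$, where $S_j \coloneqq \Omega_j''(\f{p}_\ell) \setminus (\Omega_j'(\f{p}_{\ell+1}) \cup \Omega_j''(\f{p}_{\ell+1}))$.

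Next I would account for the flow loss incurred by the \ref{alg:flow_update}. By construction, a unit of $f^\ell$ on the arc through $j'$ or $j''$ into item $i$ is routed into $f$ exactly when $i \in \Omega_j'(\f{p}_{\ell+1}) \cup \Omega_j''(\f{p}_{\ell+1})$, and is otherwise discarded. Using the monotonicity $\Omega_j'(\f{p}_\ell) \subseteq \Omega_j'(\f{p}_{\ell+1})$ already invoked in the proof of \Cref{lemma:flow_update_feasible}, no flow carried on arcs $(j',i)$ is ever discarded; thus the only discarded units are those $f^\ell_{j''i}$ with $i \in \Omega_j''(\f{p}_\ell)$ and $i \notin \Omega_j'(\f{p}_{\ell+1}) \cup \Omega_j''(\f{p}_{\ell+1})$, i.e., exactly $i \in S_j$. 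Since the \ref{alg:flow_update} does not alter flow on arcs $(i,t)$ and preserves flow conservation, one concludes $\val(f^\ell) - \val(f) = \sum_{j \in N} \sum_{i \in S_j} f^\ell_{j''i}$.

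Chaining the two bounds gives $D_{\f{p}_\ell} - D_{\f{p}_{\ell+1}} \geq \sum_{j \in N} \sum_{i \in S_j} f^\ell_{j''i} = \val(f^\ell) - \val(f)$, which rearranges to the claim. The only non-trivial step is showing that the loss term in the flow update matches the right-hand side of \Cref{lemma:flow_demand} exactly (up to the direction of the inequality); everything else is bookkeeping. The key ingredient that makes this matching clean is the one-directional migration between $\Omega_j'$ and $\Omega_j''$ under a unit price increase on a left-most min cut, which is precisely what the proof of \Cref{lemma:flow_update_feasible} established and which I would cite rather than reprove.
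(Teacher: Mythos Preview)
Your approach is exactly the one the paper intends: the corollary is stated without proof immediately after \Cref{lemma:flow_demand}, and your derivation---bounding the demand drop per buyer via that lemma and matching it against the explicit flow loss of the \ref{alg:flow_update}---is the natural unpacking of that implication.

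One imprecision is worth flagging. The inclusion $\Omega_j'(\f{p}_\ell) \subseteq \Omega_j'(\f{p}_{\ell+1})$ you cite from the proof of \Cref{lemma:flow_update_feasible} is asserted there only under the hypothesis that some object migrates from $\Omega_j''$ to $\Omega_j'$; in the complementary case the paper explicitly allows items to move from $\Omega_j'$ into $\Omega_j''$ (``it might get new ones from $\Omega_j'$ as well''), and then the inclusion fails. What you actually need---and what does hold unconditionally---is the weaker containment $\Omega_j'(\f{p}_\ell) \subseteq \Omega_j'(\f{p}_{\ell+1}) \cup \Omega_j''(\f{p}_{\ell+1})$, which still guarantees that the \ref{alg:flow_update} reroutes every unit of $f^\ell$ carried on an arc $(j',i)$. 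This is precisely the one-directional migration property you invoke in your last paragraph, so the argument goes through once the citation is adjusted; the equality $\val(f^\ell) - \val(f) = \sum_{j \in N} \sum_{i \in S_j} f^\ell_{j''i}$ and the final chaining are correct as written.
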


	\section{Prices are Walrasian and buyer-optimal}\label{sec:walrasian}
	We will show that the \ref{alg:price-raising} returns buyer-optimal Walrasian prices by showing separately that the prices are the component-wise minimum competitive ones and that the prices are market clearing.

	\subsection{The auction returns component-wise minimum competitive prices}\label{section:prices}
	We now analyze the \ref{alg:price-raising} by considering the behavior of the buyers at given prices $\f{p}$ using the structure of $G(\f{p})$. 
	As usual, for a given digraph $(V, A)$, we use $\Gamma^-(v)$ to refer to all nodes that are the starting node of an incoming arc into $v$, i.e., $\Gamma^-(v) \coloneqq \{u \in V \mid (u, v) \in A\}$, analogously for $\Gamma^+(v)$. 
	We extend this definition also to sets, so that $\Gamma^-(I) \coloneqq \{u \in V \mid \text{there exists } v \in I \text{ with } (u, v) \in A\}$.
	
	For a given set of items $I \subseteq \Omega$, we denote the sum of demands of all buyers which cannot be fulfilled by items which are not in $I$ by
	\begin{equation*}
		d_{\f{p}}(I) \coloneqq \sum_{\jb \in \Gamma^-(I)} \max\Big\{0, d_{\jb} - \hspace{1ex} \sum_{\mathclap{\hspace{3ex} i \in \Gamma^+(\jb) \setminus I}} \hspace{1ex} c_{ij} \hspace{1ex} \Big\}.
	\end{equation*}
	Note that $d_{\f{p}}(I)$ is a natural lower bound on the number of copies from $I$ that are needed in every stable allocation.
	We call an item set $I$ \emph{overdemanded} if $\sum_{i \in I} b_i < d_{\f{p}}(I)$.
	The following condition \eqref{Hall-condition} is a consequence of Hall's Theorem for $b$-matching (cf. \cite{hall1987representatives}).
	
	\begin{lemma}
		Prices $\f{p}$ are competitive if and only if
		\begin{equation}\label{Hall-condition}
			\sum_{i \in I} b_i \geq d_{\f{p}}(I) \quad \text{for all } I \subseteq \Omega.
		\end{equation}
		Moreover, if prices $\f{p}$ are not competitive, and $C$ is the left-most min cut in $G(\f{p})$, then $C \cap \Omega$ is an overdemanded set. 
	\end{lemma}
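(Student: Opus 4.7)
The plan is to reduce everything to max-flow min-cut on $G(\f{p})$. By Lemma~\ref{lem:competitive_flow}, the prices $\f{p}$ are competitive if and only if the maximum flow value in $G(\f{p})$ equals $\capa(s)$; equivalently, by the max-flow min-cut theorem, if and only if every $s$-$t$-cut has capacity at least $\capa(s)$. So the goal is to translate the condition ``all $s$-$t$-cuts have capacity $\geq \capa(s)$'' into the Hall-type condition \eqref{Hall-condition}, and to identify the set $C\cap\Omega$ cut out by the left-most min cut with an overdemanded set when prices are not competitive.

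For this I would parametrize cuts $(S,T)$ of $G(\f{p})$ by the set $I \coloneqq S\cap\Omega$ of items on the $s$-side, and then optimize over the placement of the buyer-nodes $\jb \in N' \cup N''$. For fixed $I$, each buyer-node $\jb$ contributes the capacity $d_\jb$ of $(s,\jb)$ if $\jb\in T$, and the capacity $\sum_{i\in\Gamma^+(\jb)\setminus I} c_{\jb i}$ of its outgoing arcs to $T\cap\Omega = \Omega\setminus I$ if $\jb\in S$; the optimal choice is therefore the minimum of these two, while items in $I$ contribute $b_i$ through the cut arcs $(i,t)$. Summing yields
\[
\text{min cut capacity for fixed } I \;=\; \sum_{\jb}\min\!\Bigl(d_\jb,\;\sum_{i\in\Gamma^+(\jb)\setminus I} c_{\jb i}\Bigr) \;+\; \sum_{i\in I} b_i.
\]

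Next, I would rewrite each term $\min(d_\jb,x) = d_\jb-\max(0,d_\jb-x)$ and observe that for $\jb\notin\Gamma^-(I)$ the set $\Gamma^+(\jb)\setminus I$ equals the full out-neighborhood of $\jb$, whose total arc capacity is already at least $d_\jb$ (this is immediate for $j'$, whose outgoing capacity is exactly $\sum_{i\in\Omega'_j} b_i = d_{j'}$, and a short case distinction on whether $d_{j''}=\sum_{i\in\Omega''_j}b_i$ or $d_{j''}=d_j-d_{j'}$ handles $j''$). Hence those terms vanish and the remaining sum of $\max(0,\cdot)$-terms equals precisely $d_{\f{p}}(I)$. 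Consequently, the min cut capacity for fixed $I$ equals $\capa(s) - d_{\f{p}}(I) + \sum_{i\in I} b_i$, and the condition that this is at least $\capa(s)$ is exactly $\sum_{i\in I}b_i\geq d_{\f{p}}(I)$. Ranging over all $I$ gives the equivalence \eqref{Hall-condition}.

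For the ``moreover'' part, if $\f{p}$ is not competitive then the min cut capacity is strictly less than $\capa(s)$, so the left-most min cut $C$ with $I = C\cap\Omega$ is among the cuts achieving this value and must satisfy $\sum_{i\in I}b_i < d_{\f{p}}(I)$; that is, $I$ is overdemanded. The main technical obstacle I expect is a clean verification that the minimization over buyer-node placements is really a minimization over all $s$-$t$-cuts of $G(\f{p})$ (in particular, that no strictly better cut can arise from putting items on both sides in some unexpected way), together with the bookkeeping needed to match $\sum_\jb\max(0,d_\jb-\sum_{\Gamma^+(\jb)\setminus I}c_{\jb i})$ precisely with the definition of $d_{\f{p}}(I)$, including the restriction of that sum to $\Gamma^-(I)$.
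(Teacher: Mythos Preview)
Your proposal is correct and follows essentially the same max-flow/min-cut route as the paper. The one organizational difference is that the paper proves necessity of \eqref{Hall-condition} directly from a saturating flow via flow conservation (bounding $d_{\f p}(I)$ above by the flow into $I$, hence by $\sum_{i\in I} b_i$), whereas you derive both directions from the single cut-capacity identity $\capa(s) - d_{\f p}(I) + \sum_{i\in I} b_i$; the computation for the ``moreover'' part is identical in substance, with the paper simply writing out $\capa(C)$ for the specific left-most min cut and rearranging, rather than first establishing the parametrized formula.
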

	\begin{proof}
		Recall from Lemma~\ref{lem:competitive_flow} that prices $\f{p}$ are competitive if and only if there exists a flow $f$ in $G(\f{p})$ of value $\val(f) = \capa(\{s\}) = \sum_{j \in N} d_{j'} + d_{j''}$. 
		Thus, by the Max Flow Min Cut Theorem, prices $\f{p}$ are competitive if and only if $\{s\}$ is a min cut in $G(\f{p})$.
		
		First, we show that \eqref{Hall-condition} is necessary, i.e.\ that it holds if $\f{p}$ is competitive.
		Consider a flow $f$ in $G(\f{p})$ of value $\val(f) = \sum_{j \in N} d_{j'} + d_{j''}$ and some arbitrary set $I \subseteq \Omega$. 
		According to the flow conservation, for each vertex $\jb$ corresponding to a buyer $j \in N$ (i.e., either $j'$ or $j''$) we have
		
		\begin{align*}	
			\sum_{\mathclap{i \in \Gamma^+(\jb)\cap I}} \hspace{1ex} f_{\jb i} = \overbrace{f_{s\jb}}^{= d_{j'} \text{ resp. } d_{j''}} - \hspace{1ex} \sum_{\mathclap{\hspace{3ex} i \in \Gamma^+(\jb) \setminus I}} \hspace{1ex} \overbrace{f_{\jb i}}^{\leq c_{\jb i}} \hspace{1ex}
			\geq \max\Big\{0,\ d_{\jb} - \hspace{1ex} \sum_{\mathclap{ \hspace{3ex}i \in \Gamma^+(\jb) \setminus I}} \hspace{1ex} c_{\jb i} \hspace{1ex}\Big\}.
		\end{align*}
		If we sum over all $\jb \in \Gamma^-(I)$ we get
		\begin{align*}
			d_{\f{p}}(I)=\sum_{\mathclap{\jb \in \Gamma^-(I)}} \hspace{1ex} \max\Big\{0,\ d_{\jb} - \hspace{1ex} \sum_{\mathclap{\hspace{3ex} i \in \Gamma^+(\jb)\setminus I}} \hspace{1ex} c_{\jb i} \hspace{1ex} \Big\}
			\leq \sum_{\jb \in \Gamma^-(I)}  \hspace{4ex} \sum_{\mathclap{i \in \Gamma^+(\jb)\cap I }} f_{\jb i} 
			= \sum_{\mathclap{ i \in I}} \sum_{\jb \in \Gamma^-(I)} f_{\jb i} \hspace{1ex}
			= \sum_{i \in I} f_{it}
			\leq \sum_{i \in I} b_i.
		\end{align*}
		
		Thus, $I$ is not an overdemanded set. 
		Since we chose $I$ arbitrary, we have that \eqref{Hall-condition} is fulfilled.
		
		Now, we show that condition \eqref{Hall-condition} implies that prices $\f{p}$ are competitive.
		To see this, suppose that $\f{p}$ is not competitive, and let $C$ be the left-most min cut of $G(\f{p})$. 
		Since $\f{p}$ is not competitive, we know that 
		\begin{equation}
			\capa(C) < \capa(\{s\}) = \sum_{\jb \in N' \cup N''} d_{\jb}.\label{overdemanded_2}
		\end{equation}
		Define $I \coloneqq C \cap \Omega$ and $T \coloneqq C \cap (N' \cup N'')$. 
		The capacity of $C$ is given by
		\begin{equation}
			\capa(C) = \sum_{i \in I} b_i + \sum_{\jb \in T} \hspace{2ex} \sum_{\mathclap{\hspace{3ex} i \in \Gamma^+(\jb) \setminus I}} \hspace{1ex} c_{\jb i} \hspace{1ex} + \hspace{1ex} \sum_{\mathclap{\hspace{3ex} \jb \in (N' \cup N'') \setminus T}} \hspace{1ex} d_{\jb} \hspace{2ex}. \label{overdemanded_1}
		\end{equation}
		By combining and rearranging \eqref{overdemanded_1} and \eqref{overdemanded_2} we get the following chain of inequalities:
		\begin{equation*}
			\sum_{i \in I} b_i < \sum_{\jb \in T} d_{\jb} - \sum_{\jb \in T} \hspace{2ex} \sum_{\mathclap{\hspace{3ex} i \in \Gamma^+(\jb)\setminus I}} \hspace{1ex} c_{\jb i} \hspace{1ex} = \sum_{\jb \in T} \Big( d_{\jb} - \hspace{1ex} \sum_{\mathclap{\hspace{3ex} i \in \Gamma^+(\jb)\setminus I}} \hspace{1ex} c_{\jb i} \hspace{1ex} \Big) \leq \sum_{\jb \in T} \max\Big\{0, d_{\jb} - \hspace{1ex} \sum_{\mathclap{\hspace{3ex} i \in \Gamma^+(\jb) \setminus I}} \hspace{1ex} c_{\jb i} \hspace{1ex}\Big\} = d_{\f{p}}(I).
		\end{equation*}
		Hence, $I = C \cap \Omega$ is an overdemanded set and \eqref{Hall-condition} does not hold in this case.
	\end{proof}
	
	\begin{theorem}\label{thm:algo_computes_minimal_competitive_prices}
		The prices $\f{p}^*$ returned by our \ref{alg:price-raising} are the (unique) component-wise minimum competitive prices, i.e., if $\f{q}$ is a competitive price vector then $p^*(i) \leq q(i)$ for all $i \in \Omega$.
	\end{theorem}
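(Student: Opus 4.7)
The plan is induction on the iteration counter $\ell$ to prove the stronger statement that $\f{p}_\ell \leq \f{q}$ componentwise throughout the algorithm, for every competitive price vector $\f{q}$. Termination and competitiveness of the returned vector $\f{p}^*$ come essentially for free: prices are integer and monotonically increasing, bounded above by $\max_{i,j} v_{ij}$ (beyond which all buyers lose interest), and the stopping criterion $\val(f^\ell) = \capa(s)$ together with the preceding Hall-type lemma certifies that $\f{p}^*$ is competitive. Minimality and uniqueness will then follow from the chain $\f{p}^* \leq \f{q}$ for every competitive $\f{q}$.

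The base case $\f{p}_0 = \f{0} \leq \f{q}$ holds since prices are nonnegative. For the inductive step, assume $\f{p}_\ell \leq \f{q}$; since $\f{p}_{\ell+1} = \f{p}_\ell + \f{\chi}_{I_\ell}$ and everything is integer-valued, it suffices to show $p_\ell(i) < q(i)$ for every $i \in I_\ell$. I would proceed by contradiction: let $K \coloneqq \{ i \in I_\ell : q(i) = p_\ell(i) \}$, assume $K \neq \emptyset$, and aim to exhibit an overdemanded set at prices $\f{q}$, which contradicts competitiveness of $\f{q}$ via the Hall-type lemma just established.

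The natural candidate overdemanded set at $\f{q}$ is $K$ itself. The key structural observation is that for every buyer $j$, the payoff of each item in $K$ is identical at $\f{p}_\ell$ and at $\f{q}$, while the payoff of every item outside $K$ is weakly smaller at $\f{q}$. Hence any buyer who strongly or weakly preferred items in $K$ over items outside $I_\ell$ at $\f{p}_\ell$ still does so at $\f{q}$, possibly even more so. To convert this qualitative statement into a quantitative overdemand, I would exploit the structure of the left-most min cut $C_\ell$: writing $T = C_\ell \cap (N' \cup N'')$, every cut-crossing arc from $T$ to items outside $I_\ell$ is saturated in the max flow at $\f{p}_\ell$, and the overdemand certificate from the previous lemma gives $\sum_{\jb \in T} d_{\jb} - \sum_{\jb \in T}\sum_{i \in \Gamma^+(\jb)\setminus I_\ell} c_{\jb i} > \sum_{i \in I_\ell} b_i$. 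Rebuilding the analogous count at $\f{q}$ with $K$ in place of $I_\ell$, and using that the $T$-buyers' demand onto $K$ can only grow when prices on $I_\ell \setminus K$ and items outside $I_\ell$ weakly increase, should yield $d_{\f{q}}(K) > \sum_{i \in K} b_i$.

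The main obstacle is precisely the bookkeeping in this final step: as prices move from $\f{p}_\ell$ to $\f{q}$, items can migrate among the tiers $\Omega_j'$, $\Omega_j''$, $\Omega_j'''$, and the greedy quantities $d_{j'}, d_{j''}, c_{j'i}, c_{j''i}$ all evolve. I expect the cleanest argument to partition buyers in $T$ according to whether an element of $K$ lies in their new $\Omega_j'(\f{q})$ or $\Omega_j''(\f{q})$, and to use the saturated flow on the cut-crossing arcs at $\f{p}_\ell$ as a lower bound on the demand for $K$ that must be absorbed at $\f{q}$. Here the truncated-additive structure and the fact that $K \subseteq I_\ell$ (so no element of $K$ has a strictly cheaper substitute outside $I_\ell$ at $\f{q}$ than at $\f{p}_\ell$) are what make the accounting go through.
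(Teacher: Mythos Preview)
Your plan is correct and essentially matches the paper's approach: both localize to the first iteration where $\f{p}_\ell \le \f{q}$ could fail, split $I_\ell$ into $K=\{i\in I_\ell: p_\ell(i)=q(i)\}$ and its complement, and rely on the same tier-migration bookkeeping you flag as the main obstacle. The only difference is where the contradiction lands---the paper constructs from $I_\ell\setminus K$ a cut $\widetilde C\subsetneq C$ and shows it is also a min cut (violating left-most-ness), whereas you aim to show $K$ is overdemanded at $\f{q}$ (violating competitiveness)---but these are contrapositives of the same inequality, and the key estimate (that the $T$-buyers' unavoidable demand on $K$ can only grow when passing from $\f{p}_\ell$ to $\f{q}$) is exactly the Claim the paper proves inside its accompanying lemma.
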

	\begin{proof}
		Assume towards a contradiction that for a competitive price vector $\f{q}$, there is an object $i \in \Omega$ such that $p^*(i) > q(i)$. 
		Let $\f{p}_\tau$ denote the price vector in iteration $\tau$.
		Then, for the start vector $\f{p}_0$ we have $\f{p}_0 \leq \f{q}$. 
		Let $t$ be the last iteration where $\f{p}_t \leq \f{q}$, i.e., there is an object $i \in \Omega$ such that $p_{t+1}(i) > q(i)$. 
		Let $I$ be the overdemanded set chosen by the algorithm in iteration $t$ and split it into two parts $I^ = \cup I^<$ as follows:
		\begin{align*}
			I &= \{i \in \Omega \mid p_t(i) < p_{t+1}(i)\},
			& I^= &= \{i \in I \mid p_t(i) = q(i)\},
			& I^< &= \{i \in I \mid p_t(i) < q(i)\}.
		\end{align*}
		We will derive a contradiction by showing that $I^<$ induces a min cut $\tilde{C}$ which is a strict subset of $C$, since $I^=$ is non-empty by choice of $t$. 
		To do so, we start with analyzing the behavior of buyer $j \in N$ at prices $\f{q}$ by comparing it with the behavior at prices $\f{p}_t$. 
		For this purpose, we fix the network properties, i.e., we talk about everything w.r.t prices $\f{p}_t$ if not stated otherwise.
		In the following lemma, we will show that 
		\[
		\widetilde{C} = \{s\} \cup \Big\{\jb \in N' \cup N'' \mid d_\jb > \sum_{\substack{i \in \Gamma^+(\jb) \setminus I^<}} c_{\jb i}\Big\} \cup I^<
		\]
		is also a min cut with $\tilde{C} \subsetneq C$ (see Lemma~\ref{lem:smaller_mincut} below). 
		This, however, is a contradiction to $C$ being a left-most min cut, implying that the assumption that there is an object $i$ with price $p^*(i) > q(i)$ cannot be true. 
		Therefore, the \ref{alg:price-raising} finds the component-wise minimum competitive price vector.
	\end{proof}
	
	The following lemma is needed in the proof of     \Cref{thm:algo_computes_minimal_competitive_prices} and uses the notation which is described there.
    \pagebreak
        
	\begin{lemma}\label{lem:smaller_mincut}
		The cut $\widetilde{C} = \{s\} \cup \Big\{\jb \in N' \cup N'' \mid d_\jb > \sum_{\substack{i \in \Gamma^+(\jb) \setminus I^<}} c_{\jb i}\Big\} \cup I^<$ is a min cut and $\tilde{C} \subsetneq C$.
	\end{lemma}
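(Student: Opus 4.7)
The plan is to establish three claims: $\tilde C \subseteq C$, strict inclusion $\tilde C \subsetneq C$, and the capacity bound $\capa(\tilde C) \leq \capa(C)$. Since $C$ is a min cut we automatically get $\capa(\tilde C) \geq \capa(C)$, so the third claim forces $\capa(\tilde C) = \capa(C)$, producing a strictly smaller min cut and contradicting that $C$ is the left-most min cut, as needed to finish the proof of Theorem~\ref{thm:algo_computes_minimal_competitive_prices}.

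For the inclusion $\tilde C \subseteq C$: clearly $s \in C$ and $I^< \subseteq I \subseteq C$. Write $T^< := \{\jb \in N' \cup N'' \mid d_\jb > \sum_{i \in \Gamma^+(\jb) \setminus I^<} c_{\jb i}\}$ and $T := C \cap (N' \cup N'')$. For any $\jb \in T^<$, since $I^< \subseteq I$ yields $\Gamma^+(\jb) \setminus I \subseteq \Gamma^+(\jb) \setminus I^<$, the same strict inequality also holds with $I$ in place of $I^<$. If $\jb$ were not in $C$, then enlarging $C$ to $C \cup \{\jb\}$ would change its capacity by $\sum_{i \in \Gamma^+(\jb) \setminus I} c_{\jb i} - d_\jb < 0$, contradicting that $C$ is a min cut; hence $\jb \in T$ and $T^< \subseteq T$. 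For the strict inclusion, the choice of iteration $t$ guarantees some $i \in I$ with $p_{t+1}(i) > q(i)$; combined with $\f{p}_t \leq \f{q}$ and the fact that the price on $I$ goes up by one integer unit in the step, this forces $p_t(i) = q(i)$, so $i \in I^=$. Thus $I^= \neq \emptyset$ while $I^= \cap \tilde C = \emptyset$.

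The substantive step is $\capa(\tilde C) \leq \capa(C)$. Expanding both capacities into the three groups of cut arcs ($s$-leaving, buyer-to-object, object-to-$t$) and simplifying using $T^< \subseteq T$ and $I^< \subseteq I$, the difference rearranges to
\[
\capa(\tilde C) - \capa(C) \;=\; \sum_{\jb \in T} \min\!\Big\{A_\jb,\ \sum_{i \in \Gamma^+(\jb) \cap I^=} c_{\jb i}\Big\} \;-\; \sum_{i \in I^=} b_i,
\]
where $A_\jb := d_\jb - \sum_{i \in \Gamma^+(\jb) \setminus I} c_{\jb i}$ is positive for $\jb \in T$. It therefore suffices to bound the positive sum above by $\sum_{i \in I^=} b_i$, which is precisely where competitiveness of $\f{q}$ enters essentially: Hall's condition applied to $I^<$ in the network $G(\f{q})$ gives $\sum_{i \in I^<} b_i \geq d_{\f{q}}(I^<)$, and the gross-substitutes property of truncated additive valuations — together with the crucial fact that between $\f{p}_t$ and $\f{q}$ prices strictly increase only on $I^<$ while remaining exactly fixed on $I^=$ — should let the bound be transported into $G(\f{p}_t)$. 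I expect the main obstacle to lie in reconciling the different arc sets of $G(\f{p}_t)$ and $G(\f{q})$, because the demand tiers $\Omega_j'$ and $\Omega_j''$ shift as prices move; handling this will require a careful per-buyer comparison of preferred bundles at the two price vectors, in particular observing that any item of $I^=$ lying in $j$'s preferred bundle at $\f{p}_t$ has the same non-negative payoff at $\f{q}$ and is therefore still wanted (or at worst indifferent) there.
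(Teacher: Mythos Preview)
Your overall architecture matches the paper's proof exactly: the inclusion $\tilde C\subseteq C$, the strict inclusion via $I^=\neq\emptyset$, and the capacity comparison reducing to
\[
\capa(\tilde C)-\capa(C)=\sum_{\jb\in T}\min\Big\{d_\jb-\sum_{i\in\Gamma^+(\jb)\setminus I}c_{\jb i},\ \sum_{i\in\Gamma^+(\jb)\cap I^=}c_{\jb i}\Big\}-\sum_{i\in I^=}b_i
\]
are all correct and coincide with what the paper does. The gap is in how you propose to close the last inequality.

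First, a concrete error: you invoke Hall's condition at $\f q$ for the set $I^<$, obtaining $\sum_{i\in I^<}b_i\ge d_{\f q}(I^<)$. That is true but useless here; the bound you need is on $\sum_{i\in I^=}b_i$, so Hall must be applied to $I^=$ (not $I^<$) in $G(\f q)$. The paper's Claim is precisely that $\sum_{i\in I^=}b_i$ dominates the sum of minima above, and it is obtained by showing that the right-hand sum is a \emph{lower bound} on the demand for $I^=$ at prices $\f q$, then using that $I^=$ is not overdemanded at the competitive vector $\f q$.

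Second, your sketch of the ``transport'' step is too vague and contains an imprecision: it is not true that between $\f p_t$ and $\f q$ prices increase \emph{only} on $I^<$; they may also increase on $\Omega\setminus I$. What is true (and what the paper uses) is that prices on $I^=$ stay fixed while all others weakly increase and those on $I^<$ strictly increase. The paper then argues separately for $j'$ and $j''$: for $j'$, every unit of $I^=$ demanded at $\f p_t$ is still demanded at $\f q$; for $j''$, since all items of $\Omega_j''(\f p_t)$ have equal payoff and the payoff on $I^=$ is unchanged while that on $I^<$ strictly drops, buyer $j$ fills her residual demand from $I^=$ (and possibly $\Omega\setminus I$) before touching $I^<$, yielding at least $\min\{A_{j''},\sum_{i\in\Gamma^+(j'')\cap I^=}c_{j''i}\}$ units of demand on $I^=$ at $\f q$. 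This per-tier argument is the actual content you flagged as ``the main obstacle'' and it cannot be replaced by a one-line appeal to gross substitutes.
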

	\begin{proof}
		Consider the buyers who need to buy items of $I$ such that $I$ becomes overdemanded.
		Let $T = C \cap (N' \cup N'')$, $T_1$ be the subset of buyers ($N' \cup N''$) demanding some objects in $I^=$, and $T_2$ are those who do not, i.e.,
		\begin{align*}
			T &= \Big\{\jb \in N' \cup N'' \mid d_\jb > \hspace{1ex} \sum_{\mathclap{\hspace{3ex }i \in \Gamma^+(\jb) \setminus I}} \hspace{1ex} c_{\jb i} \Big\},
			&T_1 &= \Big\{ \jb \in T \mid c_{\jb i} > 0 \text{ for an } i \in I^= \Big\},
			&T_2 &= T \setminus T_1.
		\end{align*}
		To see that this fits to the definition of $T$, we only need to check which vertices in $N' \cup N''$ are contained in the cut $C$. An illustration of the cut can be found in \Cref{fig:proof}.
		Recall that the capacity $\capa(C)$ of cut $C$ is defined as the sum of capacities on all outgoing arcs of $C$. 
		If a vertex $\jb$ is in the cut, $\capa(C)$ includes the sum over all capacities of arcs $(\jb, i)$ where $i \notin I$. 
		If vertex $\jb$ is not in the cut, $\capa(C)$ includes the capacity of the arc $(s, \jb)$ which is $d_\jb$. 
		Because $C$ is a left-most min cut, $C$ contains $\jb$ if and only if
		\[
		d_\jb > \hspace{1ex} \sum_{\mathclap{\hspace{3ex} i \in \Gamma^+(\jb) \setminus I}} \hspace{1ex} c_{\jb i}.
		\]
		
		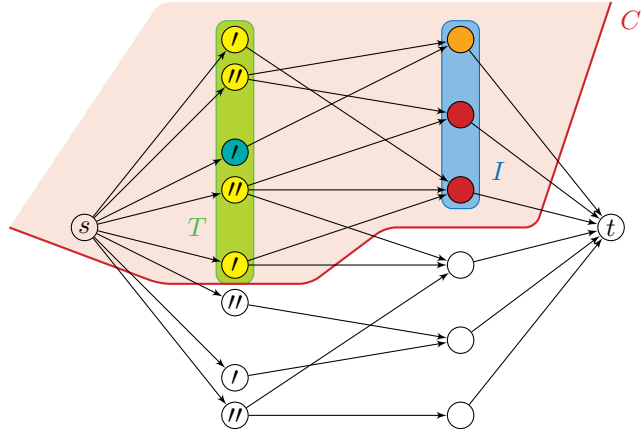
\begin{figure}[t]
			\centering
			\begin{tikzpicture}
				\path[rwthred,rounded corners,fill=rwthred!10] (8, 3) node[below right]{$C$} -- (7, 0) -- (5, 0) -- (4, -.75) -- (2, -.75) [sharp corners]-- (0, 0) [rounded corners]-- (2, 3) [sharp corners]-- cycle;
				
				\node[small vertex] (s) at (1, 0) {$s$};
				
				\draw[box=rwthgreen and rwthmay] (2.75, 2.75) rectangle (3.25, -.75); 
				\node[rwthgreen] at (2.5, 0) {$T$};
				\node[small vertex,fill=rwthyellow] (j11) at (3, 2.5) {$\prime$};
				\node[small vertex,fill=rwthyellow] (j12) at (3, 2) {$\prime\prime$};
				\node[small vertex,fill=rwthturkis] (j21) at (3, 1) {$\prime$};
				\node[small vertex,fill=rwthyellow] (j22) at (3, .5) {$\prime\prime$};
				\node[small vertex,fill=rwthyellow] (j31) at (3, -.5) {$\prime$};
				\node[small vertex] (j32) at (3, -1) {$\prime\prime$};
				\node[small vertex] (j41) at (3, -2) {$\prime$};
				\node[small vertex] (j42) at (3, -2.5) {$\prime\prime$};
				
				\draw[box=rwthblue and rwthlightblue] (5.75, 2.75) rectangle (6.25, .25); 
				\node[rwthblue] at (6.5, .75) {$I$};
				\node[small vertex,fill=rwthorange] (i1) at (6, 2.5) {};
				\node[small vertex,fill=rwthred] (i2) at (6, 1.5) {};
				\node[small vertex,fill=rwthred] (i3) at (6, .5) {};
				\node[small vertex] (i4) at (6, -.5) {};
				\node[small vertex] (i5) at (6, -1.5) {};
				\node[small vertex] (i6) at (6, -2.5) {};
				
				\node[small vertex] (t) at (8, 0) {$t$};
				
				\draw[rwthred,rounded corners,thick] (8, 3) -- (7, 0) -- (5, 0) -- (4, -.75) -- (2, -.75) -- (0, 0); 
				
				\foreach \j in {1, ..., 4} {
					\draw[->] (s) -- (j\j1);
					\draw[->] (s) -- (j\j2);
				}
				
				\draw[->] (j11) -- (i3);
				\draw[->] (j12) -- (i1);
				\draw[->] (j12) -- (i2);
				\draw[->] (j21) -- (i1);
				\draw[->] (j22) -- (i2);
				\draw[->] (j22) -- (i3);
				\draw[->] (j22) -- (i4);
				\draw[->] (j31) -- (i3);
				\draw[->] (j31) -- (i4);
				\draw[->] (j32) -- (i5);
				\draw[->] (j41) -- (i5);
				\draw[->] (j42) -- (i4);
				\draw[->] (j42) -- (i6);
				
				\foreach \i in {1, ..., 6} {
					\draw[->] (i\i) --(t);
				}
			\end{tikzpicture}
			\caption{Proof sketch. Illustration of a minimum $s$-$t$-cut $C$ and the induced sets $I$ (light blue) and $T$ (light green). The set $I^=$ is red, $I^<$ is orange, $T_1$ is yellow and $T_2$ is petrol.}
			\label{fig:proof}
		\end{figure}
		
		\begin{claim}
			It holds that
			\begin{equation}\label{I_1 not overdemanded}
				\sum_{i \in I^=} b_i \geq \sum_{\jb \in T_1} \min\Big\{d_\jb - \hspace{1ex} \sum_{\mathclap{\hspace{3ex} i \in \Gamma^+(\jb)\setminus I}} \hspace{1ex} c_{\jb i}\hspace{1ex}, \hspace{2ex} \sum_{\mathclap{\hspace{3ex}i \in \Gamma^+(\jb)\cap I^=}} \hspace{1ex} c_{\jb i} \hspace{1ex}\Big\}.
			\end{equation}
		\end{claim}
		\begin{subproof}
			First, consider the demand that $j'$ has on objects in $I^=$ at price $\f{p}_t$, i.e.,
			\[
			\sum_{i \in \Gamma^+(j') \cap I^=} c_{j'i}.
			\]
			Comparing $\f{q}$ and $\f{p}_t$, remember that the price of any item only increases, i.e., $p_t(i) \leq q(i)$, while the prices in $I^=$ remain the same. 
			That is why the buyer $j$ likes to buy at prices $\f{q}$ at least the same amount of items from objects in $I^=$ as at prices $\f{p}_t$.
			
			Next, we consider the demand $j''$ has on objects in $I$ at prices $\f{p}_t$. 
			Recall that $j''$ gets the same utility from every item in $\Omega_j''$. 
			Comparing $\f{p}$ and $\f{q}$ we have that the prices on $I^=$ remain constant while the prices on $I^<$ strictly increase. 
			Thus, $j$ likes to fill up the preferred bundle with items in $I^=$ and maybe with items outside of $I$ before buying the items in $I^<$. 
			Note however, that for prices $\f{q}$ it is not clear to which copy of buyer $j$ this demand is assigned. 
			Since furthermore, we have that buyer $j$ cannot buy more objects than available in $I^=$ we get the following lower bound on the demand reassigned from a buyer $j''$ for prices $\f{p}_t$ to buyer $j'$ or $j''$ for prices $\f{q}$:	
			\begin{equation*}
				\min\Big\{d_{j''} - \hspace{1ex} \sum_{\mathclap{\hspace{3ex} i \in \Gamma^+(j'')\setminus I}} \hspace{1ex} c_{j''i}\hspace{1ex} , \hspace{3ex} \sum_{\mathclap{\hspace{3ex} i \in \Gamma^+(j'')\cap I^=}} \hspace{1ex} c_{j''i}\Big\}.
			\end{equation*}
			Finally, if we sum up the demands of all $j' \in N'$ and $j'' \in N''$ at prices $q$ we obtain the following lower bound of the total demand of all buyers $j \in T_1$
			\begin{align*}
				&\sum_{j' \in T_1 \cap N'} \sum_{\substack{i \in \Gamma^+(j')\\ i \in I^=}} c_{j'i} +
				\sum_{j'' \in T_1 \cap N''} \min\Big\{d_{j''} - \sum_{\substack{i \in \Gamma^+(j'')\\ i \notin I}} c_{j''i}, \sum_{\substack{i \in \Gamma^+(j'')\\ i \in I^=}} c_{j''i}\Big\}\\
				\geq &\sum_{\jb \in T_1} \min\Big\{d_\jb - \sum_{\substack{i \in \Gamma^+(\jb)\\ i \notin I}} c_{\jb i}, \sum_{\substack{i \in \Gamma^+(\jb)\\ i \in I^=}} c_{\jb i}\Big\}.
			\end{align*}
			The set $I^=$ is not overdemanded at price $\f{q}$. 
			Thus, the demand of items in $I^=$ at price $\f{q}$ is smaller or equal than the total supply of all objects in $I^=$. 
			Using the bound of the total demand of all buyers in $T_1$, we obtain the inequality of the claim.
		\end{subproof}
		
		Next, we show that $\widetilde{C}$ is a min cut by comparing the capacities of $\widetilde{C}$ and of the left-most min cut $C$.
		\begin{align*}
			&\ \capa(C) - \capa(\widetilde{C})\\
			=&\ \Bigg(\sum_{i \in I} b_i + \sum_{\jb \in N' \cup N''} \hspace{-1ex} \min\Big\{d_\jb, \hspace{1ex} \sum_{\mathclap{\hspace{3ex} i \in \Gamma^+(\jb)\setminus I}}\hspace{1ex} c_{\jb i} \hspace{1ex}\Big\}\Bigg) - \Bigg(\sum_{i \in I^<} b_i + \sum_{\jb \in N' \cup N''} \hspace{-1ex} \min\Big\{d_\jb, \hspace{1ex} \sum_{\mathclap{\hspace{3ex} i \in \Gamma^+(\jb)\setminus I^<}} \hspace{1ex} c_{\jb i} \hspace{1ex} \Big\}\Bigg)\\
			=&\ \sum_{i \in I^=} b_i + \sum_{\jb \in N' \cup N''} \underbrace{\bigg(\min\Big\{d_\jb, \underbrace{\hspace{2ex} \sum_{\mathclap{\hspace{3ex} i \in \Gamma^+(\jb)\setminus I}} \hspace{1ex} c_{\jb i} \hspace{1ex}}_{\text{\small{$\eqqcolon \alpha$}}}\Big\} - \min\Big\{d_\jb, \underbrace{\hspace{2ex} \sum_{\mathclap{\hspace{3ex}i \in \Gamma^+(\jb)\setminus I_2}} \hspace{1ex} c_{\jb i} \hspace{1ex}}_{\text{\small{$\eqqcolon \beta$}}}\Big\}\bigg)}_{\text{\small{ $= 0$ for $\jb \notin T$, since $d_\jb \leq \alpha$ (by definition) and $\alpha \leq \beta$}}}\\
			=&\ \sum_{i \in I^=} b_i + \sum_{\jb \in T} \Bigg(\hspace{2ex}\sum_{\mathclap{\hspace{3ex} i \in \Gamma^+(\jb)\setminus I}} \hspace{1ex} c_{\jb i} \hspace{1ex} - \hspace{1ex} \min\Big\{d_\jb, \hspace{1ex} \sum_{\mathclap{\hspace{3ex}i \in \Gamma^+(\jb)\setminus I^<}} \hspace{1ex} c_{\jb i} \hspace{1ex} \Big\}\Bigg)\\
			=&\ \sum_{i \in I^=} b_i - \sum_{\jb \in T}\min\Big\{d_\jb - \hspace{1ex} \sum_{\mathclap{\hspace{3ex} i \in \Gamma^+(\jb)\setminus I}} \hspace{1ex} c_{\jb i} \hspace{1ex}, \hspace{2ex} \sum_{\mathclap{\hspace{3ex} i \in \Gamma^+(\jb)\cap I^=}} \hspace{1ex} c_{\jb i} \hspace{1ex}\Big\} \overset{\eqref{I_1 not overdemanded}}{\geq} 0.	
		\end{align*}
		Therefore $\widetilde{C}$ is a min cut too.
		
		It remains to show that $\tilde{C} \subsetneq C$. 
		This follows directly by the definitions: 
		$\tilde{C} \cap (N' \cup N'') \subseteq T = C \cap (N' \cup N'')$ and $\tilde{C} \cap \Omega = I^< \subsetneq I = C \cap \Omega$, since $I^=$ is not empty.
	\end{proof}
	
	\begin{observation}
		The proof of Theorem~\ref{thm:algo_computes_minimal_competitive_prices} does not use that the \ref{alg:price-raising} starts at prices zero. 
		As long as $p_0(i) \leq p^*(i)$ for all $i \in \Omega$, the proof works. 
		Thus, the algorithm finds the minimum competitive price vector $\f{p}^*$ if it starts at prices $\f{p}_0 \leq \f{p}^*$.
	\end{observation}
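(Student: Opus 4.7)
The plan is to revisit the proof of Theorem~\ref{thm:algo_computes_minimal_competitive_prices} and isolate the single place where the hypothesis $\f{p}_0 = \f{0}$ is actually invoked. A careful read shows that this hypothesis enters only in one sentence, namely the conclusion ``for the start vector $\f{p}_0$ we have $\f{p}_0 \leq \f{q}$'', which is used to guarantee that the last iteration $t$ with $\f{p}_t \leq \f{q}$ is well-defined. Every subsequent step (the decomposition $I = I^= \cup I^<$, the construction of the alternative cut $\widetilde{C}$, the capacity comparison, and the whole bookkeeping inside Lemma~\ref{lem:smaller_mincut}) depends only on the two networks $G(\f{p}_t)$ and $G(\f{p}_{t+1})$, on the competitive price vector $\f{q}$, and on the local relation $\f{p}_t \leq \f{q} < \f{p}_{t+1}$ in the offending coordinate; none of these arguments references the particular starting vector.

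To obtain the observation I would therefore replace the original justification of $\f{p}_0 \leq \f{q}$ by the following. Let $\f{p}^*$ denote the unique component-wise minimum competitive price vector, whose existence is already established in Theorem~\ref{thm:algo_computes_minimal_competitive_prices}. For any competitive $\f{q}$ we then have $\f{p}^* \leq \f{q}$ by minimality, and combining this with the assumption $\f{p}_0 \leq \f{p}^*$ yields $\f{p}_0 \leq \f{q}$, which is precisely what the original proof needs. Consequently, the contradiction derived in Lemma~\ref{lem:smaller_mincut} still applies in every iteration $t$, and one concludes that the output $\f{p}_{\mathrm{out}}$ of the algorithm satisfies $\f{p}_{\mathrm{out}} \leq \f{q}$ for every competitive $\f{q}$, and in particular $\f{p}_{\mathrm{out}} \leq \f{p}^*$.

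To finish, one notes that the algorithm terminates: the price vector increases strictly in at least one coordinate per iteration, is integer-valued, and is bounded from above by $\max_{i,j} v_{ij}$, so the loop halts at some competitive $\f{p}_{\mathrm{out}}$. Combining $\f{p}_{\mathrm{out}} \leq \f{p}^*$ with the minimality $\f{p}^* \leq \f{p}_{\mathrm{out}}$ forces $\f{p}_{\mathrm{out}} = \f{p}^*$. The only mildly delicate point — and what I would regard as the main obstacle — is the bookkeeping: confirming line by line that no step inside Lemma~\ref{lem:smaller_mincut} implicitly relies on being able to reach $\f{p}_t$ from $\f{0}$ via the algorithm's trajectory. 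An inspection shows that all inequalities in that lemma compare capacities in the two consecutive networks and use only the fact that prices on $I^=$ are unchanged between $\f{p}_t$ and $\f{q}$ while prices on $I^<$ strictly increase, so the argument is genuinely local and the observation follows.
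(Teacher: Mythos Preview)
Your proposal is correct and matches the paper's intent. In the paper the observation is stated without a separate proof; it simply points out that the only place the hypothesis $\f{p}_0=\f{0}$ enters the proof of Theorem~\ref{thm:algo_computes_minimal_competitive_prices} is the line ``for the start vector $\f{p}_0$ we have $\f{p}_0 \leq \f{q}$'', and that this line remains valid whenever $\f{p}_0 \leq \f{p}^*$. Your write-up makes this explicit by invoking the already-established minimality $\f{p}^* \leq \f{q}$ for every competitive $\f{q}$ and chaining it with the assumption $\f{p}_0 \leq \f{p}^*$, and then closes the argument by noting termination and competitiveness of the output. This is exactly the verification the paper leaves implicit; nothing further is needed.
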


    \pagebreak
    
	\subsection{The auction returns market-clearing prices}\label{section:allocation}
	To complete the proof of Theorem~\ref{thm:algo_computes_minimal_market-clearing_prices} it remains to show that the \ref{alg:price-raising} computes prices that are market-clearing, i.e., that an allocation exists where $D = \min \{\sum_{i \in \Omega} b_i, \sum_{j \in N} d_j\}$ is sold.
	
	\begin{theorem}\label{theorem:market-clearing}
		Given prices $\f{p}^*$ computed by the \ref{alg:price-raising}, there exists an allocation $\f{x}^* \in \mathbb{Z}_+^{\Omega \times N}$ such that:
		\begin{enumerate}
			\item Any buyer $j \in N$ gets a preferred bundle, i.e., $\f{x}^*_{\bullet j} \in D_j(\f{p}^*)$. \label{property:pref_bundle}
			\item If there is an item which is not sold, it has price zero, i.e., $\sum_{j \in N} x^*_{ij} = b_i$ for all $i \in \Omega$ with $p^*(i) > 0$.
			\item As much as possible is sold, i.e., $\sum_{i \in \Omega} \sum_{j \in N} x^*_{ij} = D$.
		\end{enumerate}
	\end{theorem}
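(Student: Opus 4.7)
The plan is to take $\f{x}^*$ to be the allocation returned by the Allocation Algorithm, i.e.\ the one induced by a max flow in the auxiliary network $H(\f{p}^*)$ after balancing supply and demand with a dummy buyer or dummy item. With the balancing we have $\sum_i b_i = \sum_j d_j$ and $d_j = d_{j'} + d_{j''} + d_{j'''}$ for every (real and dummy) buyer. All three properties reduce to one quantitative statement: \emph{the max flow in $H(\f{p}^*)$ attains the balanced total $\sum_j d_j$}, so that every $s$-leaving and every $t$-entering arc is saturated. From this single fact, properties 1--3 follow by the bookkeeping already foreshadowed in Section~\ref{section:aux-network}.

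First I would harvest the max flow $f^G$ in $G(\f{p}^*)$ guaranteed by Theorem~\ref{thm:algo_computes_minimal_competitive_prices} and Lemma~\ref{lem:competitive_flow}; this $f^G$ saturates every $(s, j')$ and $(s, j'')$ arc and, by Proposition~\ref{proposition:allocation}(1), is also feasible in $H(\f{p}^*)$. The residual $(i,t)$-capacity on item $i$ is then $r_i \coloneqq b_i - f^G_{it}$, and $\sum_i r_i = \sum_i b_i - \sum_j (d_{j'} + d_{j''}) = \sum_j d_{j'''}$. Extending $f^G$ to a flow of value $\sum_j d_j$ in $H(\f{p}^*)$ is therefore equivalent to solving a bipartite $b$-matching instance in which each $j'''$ has supply $d_{j'''}$, each item $i$ has capacity $r_i$, and $j'''$ is adjacent to $i$ iff $i \in \Omega_j'''$. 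Since the supply and capacity totals match, this $b$-matching has a solution iff Hall's condition holds: for every $T \subseteq N$,
\[
\sum_{j \in T} d_{j'''} \;\le\; \sum_{i \in \Omega_T'''} r_i, \qquad \Omega_T''' \coloneqq \bigcup_{j \in T} \Omega_j'''.
\]

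Establishing this Hall condition is the crux of the argument and the main obstacle. I would proceed by contradiction: if some $T$ violated Hall, I would combine $T$ with its associated $j', j''$ vertices in $G(\f{p}^*)$ and an item set containing $\Omega_T'''$ (augmented by items consumed by $j'$-vertices of buyers outside $T$) to construct an $s$-$t$ cut in $G(\f{p}^*)$ of capacity strictly below $\capa_{G(\f{p}^*)}(\{s\})$, contradicting the competitiveness of $\f{p}^*$ established in Theorem~\ref{thm:algo_computes_minimal_competitive_prices}. The key technical driver is that every item $i \in \Omega_j'''$ has payoff $v_{ij} - p^*(i) = 0$, the boundary case in which extra $j'''$-demand can be transferred to a $G$-level overdemand; the balance identity $\sum_i b_i = \sum_j d_j$ is used to convert a capacity deficit on $\Omega_T'''$ into the corresponding matching deficit.

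Once the Hall condition is verified, the max flow $f^H$ in $H(\f{p}^*)$ has value $\sum_j d_j$ and saturates every $(s, \jb)$ and every $(i, t)$ arc. Saturation of the $(s, j')$ and $(s, j'')$ arcs together with Proposition~\ref{proposition:allocation}(2) yields property~1. Property~3 follows because the total allocation to real buyers equals the max flow value minus any flow to the dummy buyer, which simplifies to $D = \min\{\sum_i b_i, \sum_j d_j\}$. For property~2, any item $i$ with $p^*(i) > 0$ satisfies $v_{i j_0} - p^*(i) = -p^*(i) < 0$, so $i$ lies in none of $\Omega_{j_0}', \Omega_{j_0}'', \Omega_{j_0}'''$; hence no flow reaches $i$ via the dummy buyer's vertices, and the full $b_i$ units of $i$ are sold to real buyers.
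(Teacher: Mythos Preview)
Your approach is genuinely different from the paper's: the paper exploits the \emph{history} of the ascending auction (via the \ref{alg:flow_update} and Lemma~\ref{lemma:flow_demand}) to show that every item whose price was ever raised can be reassigned to a buyer who wanted it at the moment the flow on it was dropped; you instead try a purely static argument, showing directly that the max flow in $H(\f{p}^*)$ saturates all $t$-entering arcs via a Hall condition on the residual $j'''$-layer. That would be a pleasant simplification if it worked, but the sketch has two real gaps.

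First, the identity $d_j = d_{j'} + d_{j''} + d_{j'''}$ is not automatic. By definition $d_{j'''} = \min\{\sum_{i\in\Omega_j'''} b_i,\, d_j - d_{j'} - d_{j''}\}$, so the identity fails whenever the zero-payoff items available to $j$ cannot absorb her residual demand. This is precisely the situation for the dummy buyer $j_0$: her set $\Omega_{j_0}'''$ consists only of items with $p^*(i)=0$, and nothing you have written rules out $\sum_{i:\,p^*(i)=0} b_i < d_{j_0}$. Without the identity your balance equation $\sum_i r_i = \sum_j d_{j'''}$ collapses, and with it the reduction to a square $b$-matching instance.

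Second, and more fundamentally, your contradiction step (``a Hall violator $T$ yields a cut in $G(\f{p}^*)$ of capacity below $\capa(\{s\})$'') invokes only competitiveness of $\f{p}^*$. But competitiveness alone cannot force market-clearing: the vector $p(i)=1+\max_j v_{ij}$ is competitive yet sells nothing. So either your cut construction is wrong, or it must somewhere use that $\f{p}^*$ is the \emph{minimum} competitive price, and the sketch gives no indication of how minimality enters. Concretely, the items in $\Omega_T'''$ have payoff zero for the buyers in $T$ and hence carry no arcs from their $j',j''$ vertices in $G(\f{p}^*)$; it is not clear how overdemand in the $j'''$-layer can be translated into overdemand visible in $G(\f{p}^*)$. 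The paper sidesteps this entirely by remembering, for each positively priced item, the iteration in which it was last raised and which buyer's flow through it was removed (Lemma~\ref{lemma:flow_demand}); that dynamic information is what certifies property~2, and your static argument has not found a substitute for it.
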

	To show this theorem, we will use the following notation. 
	An allocation $\bar{\f{x}}$ is an \emph{extension} of $\f{x}$ if $\bar{x}_{ij} \geq x_{ij}$ for all $j \in N$ and $i \in \Omega$. 
	A flow $f$ in $G(\f{p})$ \emph{induces} an allocation $\f{x}^f$ by defining $x_{ij} = f_{ij'} + f_{ij'}$. 
	Note that an allocation induced by a flow in $G(\f{p})$ always distributes subsets of a preferred bundle w.r.t.\ prices $\f{p}$.
	\begin{proof}
		We construct an allocation with the desired properties by reusing the flow obtained during the execution of the \ref{alg:price-raising} which uses the \ref{alg:flow_update} as described in \Cref{sec:flow_updates}.
		The constructed allocation $\f{x}^*$ will be induced by the max flow in $G(\f{p}^*)$, thus property \ref{property:pref_bundle} is fulfilled by definition.
		
		We extend $\f{x}^*$ to an allocation where every item with a positive price is sold. 
		For $i \in \Omega$ with $\sum_{j \in N} x_{ij}^* < b_i$ we assign the remaining items to buyers $j$ who demanded items of this objects in the last iteration where the price on $i$ was increased, i.e., in the last iteration with $i \in I$. 
		More precisely, we assign it to those buyers for which the flow was removed. 
		Since the price is not increased in later iterations, the object is still in $\Omega_j'''(\f{p}^*)$.
		By \Cref{lemma:flow_demand} we know that the removed flow did not exceed $d_j - d_{j'}(\f{p}^*) - d_{j''}(\f{p}^*)$, since $d_{j'}+d_{j''}$ just decreases during the algorithm.
		Hence, the demand is not exceeded if we assign items according to the removed flow.
		
		This means all items with a positive price are assigned.
		This is due to the fact that objects in $I$ are fully assigned, since $I$ is given by a left-most min cut. 
		The flow on $(b_i, t)$ remains if we do not remove flow through $i$ which can only happen if $i \in I$. 
		Thus, all items $i$ with a positive price are assigned, either induced by the max flow or by the removed-flow in the last iteration with $i \in I$.
		
		It remains to extend this assignment to an allocation where as much as possible is sold.
		All unsold items have price $0$ and all players with left over demand have payoff $0$ on all these items. 
		This means we get a complete bipartite graph in which every maximum flow assigns either all remaining objects or satisfies all demands. 
		Since all prices and payoffs are 0, any such induced allocation can just be added to the computed assignment.
	\end{proof}
	
	Note that the allocation can also be computed by the \ref{alg:allocation}. 
	We mainly use the described procedure to prove that there is an allocation with the desired properties. 
	This answers the question why the algorithm returns market-clearing prices.
	
	\begin{proof}[Proof of \Cref{thm:algo_computes_minimal_market-clearing_prices}]
		The prices are the minimum competitive prices by \Cref{thm:algo_computes_minimal_competitive_prices}. 
		By \Cref{theorem:market-clearing} there is an associated stable allocation where $D$ is sold and where each item with positive price is sold. 
		This existence together with the construction of the network $H(\f{p}^*)$ gives us that the \ref{alg:allocation} computes such an allocation. 
		The existence of an allocation where as much as possible is sold also implies that the minimum competitive prices computed by the \ref{alg:price-raising} are buyer-optimal Walrasian prices.
	\end{proof}

	\section{Monotonicity}\label{sec:monotonicity}
	The \ref{alg:price-raising} determines for each instance
    the unique component-wise minimum Walrasian price vector $\f{p}^*$.
	In this section, we analyze the monotonicity of the auction with respect to changes of the demand and supply.
    This research question belongs to the field of comparative statics, which is commonly used in economics to analyze markets.
	In particular, we show that, as intuitively expected, the auction is monotone in the sense that the returned prices can only increase if the demand increases or the supply decreases. 
	
	\begin{theorem}\label{thm:monotonicity}
		Given an instance with valuations $v$, demands $\f{d}$, supplies $\f{b}$ and the corresponding buyer-optimal Walrasian prices $\f{p}$, consider a second instance with the same valuation functions but increased demands and decreased supplies, i.e., demands $\dnewb$ with $0 \leq d_j \leq \dnew_j$ for all $j \in N$ and supplies~$\bnewb$ with $0 \leq \bnew_i \leq b_i$ for all $i \in \Omega$ with buyer-optimal Walrasian prices $\pnewb$. 
		Then we have $p(i) \leq \pnew(i)$ for all $i \in \Omega$ with $\bnew_i > 0$.
	\end{theorem}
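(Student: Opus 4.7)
The plan is to adapt the contradiction argument used in the proof of Theorem~\ref{thm:algo_computes_minimal_competitive_prices}, in particular Lemma~\ref{lem:smaller_mincut}, now applied to the buyer-optimal prices $\pnewb$ of the \emph{new} instance. Suppose for contradiction that there exists $i^* \in \Omega$ with $\bnew_{i^*} > 0$ and $p(i^*) > \pnew(i^*)$. I would run the Price-Raising Algorithm on the \emph{old} instance starting at $\f{p}_0 = \f{0}$ and let $t$ be the last iteration for which $\f{p}_t(i) \leq \pnew(i)$ holds for every $i$ with $\bnew_i > 0$; such a $t$ exists because the inequality holds at iteration $0$ and fails at termination by the contradiction assumption. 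Let $C$ be the left-most min cut of $G(\f{p}_t)$ computed in that iteration, set $I = C \cap \Omega$, and split
\[
I^= = \{i \in I \mid \bnew_i > 0 \text{ and } p_t(i) = \pnew(i)\}, \qquad I^< = I \setminus I^=,
\]
so that $I^=$ is non-empty by the choice of $t$.

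Following the template of Lemma~\ref{lem:smaller_mincut}, I would introduce the candidate cut
\[
\widetilde C \;=\; \{s\} \;\cup\; \Big\{\jb \in N' \cup N'' : d_\jb > \sum_{i' \in \Gamma^+(\jb) \setminus I^<} c_{\jb i'}\Big\} \;\cup\; I^<,
\]
and show that $\capa(\widetilde C) \leq \capa(C)$ with $\widetilde C \subsetneq C$, contradicting the left-most minimality of $C$. The capacity bookkeeping is identical to that in the existing proof, so everything reduces to establishing the analogue of the Hall-type inequality
\[
\sum_{i \in I^=} b_i \;\geq\; \sum_{\jb \in T_1} \min\Big\{ d_\jb - \sum_{i' \in \Gamma^+(\jb) \setminus I} c_{\jb i'}, \; \sum_{i' \in \Gamma^+(\jb) \cap I^=} c_{\jb i'} \Big\},
\]
where $T_1 \subseteq C \cap (N' \cup N'')$ denotes the buyer-copies in $C$ with positive capacity to some item in $I^=$.

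The original proof obtains this bound by invoking competitiveness of the comparison price vector $\f{q}$ and tracking how a buyer's demand on $I^=$ changes when prices move from $\f{p}_t$ to $\f{q}$. In our setting I would instead apply the Hall condition to $\pnewb$ in the \emph{new} network, obtaining $\sum_{i \in I^=} \bnew_i \geq d_{\pnewb}(I^=)$; together with $\bnew_i \leq b_i$ it then suffices to bound $d_{\pnewb}(I^=)$ from below by the right-hand side above. For each buyer $j$ whose copy sits in $T_1$, the payoff $v_{ij} - \pnew(i) = v_{ij} - p_t(i)$ of items $i \in I^=$ is unchanged, competing items $i' \neq i$ with $\bnew_{i'} > 0$ satisfy $\pnew(i') \geq p_t(i')$ and hence become weakly less attractive, items with $\bnew_{i'} = 0$ have disappeared altogether, and $\dnew_j \geq d_j$. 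A case analysis based on the greedy description of preferred bundles from Subsection~\ref{sec:structure} should then show that buyer $j$'s demand on $I^=$ in the new instance is at least the summand associated with $\jb$ on the right-hand side, and summing over $T_1$ completes the derivation.

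The main obstacle is precisely this last case analysis. The subtle point is that when items in $\Omega'_j(\f{p}_t)$ either migrate to $\Omega''_j(\pnewb)$, drop out of consideration because $\bnew_{i'} = 0$, or remain in $\Omega'_j(\pnewb)$ at a strictly higher price, the allocation of buyer $j$'s increased demand $\dnew_j$ across the tiers $j'$ and $j''$ can shift in non-obvious ways, so one must carefully verify in every case that the contribution to the demand on $I^=$ does not drop below the quantity from the old network at $\f{p}_t$. Once this bookkeeping is done, the capacity computation from Lemma~\ref{lem:smaller_mincut} carries over verbatim and produces the strictly smaller min cut $\widetilde C \subsetneq C$, yielding the desired contradiction.
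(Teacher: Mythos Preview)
Your approach is genuinely different from the paper's, and considerably more laborious. The paper does \emph{not} rerun the left-most min-cut contradiction. Instead, it exploits the equivalence ``buyer-optimal Walrasian $=$ component-wise minimum competitive'' (Theorem~\ref{thm:algo_computes_minimal_competitive_prices}) and simply shows that $\pnewb$ is \emph{competitive for the old instance}. This is done in two short lemmas: for a demand increase, take the stable allocation of the new instance at $\pnewb$ and, for each buyer with $d_j < \dnew_j$, greedily drop her lowest-payoff items until only $d_j$ remain; for a supply decrease by one unit of object $\ell$ with $\bnew_\ell>0$, add the extra copy of $\ell$ to the unique buyer (if any) whose bundle is no longer preferred, discarding her least profitable item. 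In both cases one obtains a stable allocation for the old instance at $\pnewb$, whence $\f{p}\le\pnewb$ follows immediately. The general statement is then a two-step composition via an intermediate instance.

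Your route---transporting Lemma~\ref{lem:smaller_mincut} across instances---can probably be pushed through, but the gap you identify is real and not just bookkeeping. The tiers $\Omega'_j,\Omega''_j$ and the split demands $d_{j'},d_{j''}$ depend on the \emph{supplies} (through the greedy threshold $k_j$), not only on prices; hence the buyer-copy that contributes a summand in $T_1$ for the old network at $\f{p}_t$ need not correspond to any single copy in the new network at $\pnewb$, and the capacities $c_{\jb i}$ on the two sides use different $b$-vectors. Making the lower bound $d_{\pnewb}(I^=)\ge \text{RHS}$ precise therefore requires tracking how the entire tier structure of each buyer reshuffles simultaneously under price changes on $\Omega\setminus I^=$, supply reductions (including possible deletions $\bnew_i=0$ outside $I$), and demand increases. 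None of this is needed in the paper's argument, which sidesteps the network entirely at this stage and works directly with allocations.
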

	
	The proof idea is that prices remain competitive when the supply increases or the demand decreases. Since for truncated additive valuation functions minimum competitive prices are unique and market-clearing, the buyer-optimal Walrasian prices are smaller or equal to any competitive prices.
	
	To prove the theorem we show two lemmas, analyzing the change of the demand and the supply separately.
	
	\begin{lemma}\label{lem:monotonicity_demand}
		Let $\f{d}$ and $\dnewb$ be two demand vectors with $d_j \leq \dnew_j$ for all $j \in N$ (here it is possible that $d_j = 0$ for some $j \in N$). 
		Then the buyer-optimal Walrasian prices $\f{p}$ at demand $\f{d}$ are not greater than the buyer-optimal Walrasian prices $\pnewb$ at demand $\dnewb$, i.e., $p(i)\leq \pnew(i)$ for all $i \in \Omega$.
	\end{lemma}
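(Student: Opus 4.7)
The plan is to show that the price vector $\pnewb$, which is buyer-optimal Walrasian at demand $\dnewb$, remains competitive at the smaller demand $\f{d}$. Once this is established, \Cref{thm:algo_computes_minimal_competitive_prices} together with \Cref{thm:algo_computes_minimal_market-clearing_prices} immediately implies $p(i) \leq \pnew(i)$ for every $i \in \Omega$, because $\f{p}$ coincides with the unique component-wise minimum competitive price vector at demand $\f{d}$, and every other competitive price vector at demand $\f{d}$ must dominate it.

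To prove competitiveness of $\pnewb$ at demand $\f{d}$, I would start from a stable allocation $\f{x}^{\text{new}}$ of $(\pnewb, \dnewb)$, which exists because $\pnewb$ is Walrasian at $\dnewb$, and shrink each buyer's bundle until the new demand bound is met. Concretely, for every buyer $j \in N$ with $\sum_{i \in \Omega} x^{\text{new}}_{ij} > d_j$, I would remove items from $\f{x}^{\text{new}}_{\bullet j}$ in non-decreasing order of payoff $v_{ij} - \pnew(i)$ until exactly $d_j$ items remain; for all other buyers the bundle stays unchanged. Because this operation only decreases per-buyer assignments, all supply constraints $\sum_{j \in N} x_{ij} \leq b_i$ remain satisfied.

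The core step is to verify that each resulting bundle $\f{x}_{\bullet j}$ still lies in $D_j(\pnewb)$ with respect to the smaller demand $d_j$. This should follow from the greedy characterization in \ref{alg:preferred_bundles}: a preferred bundle at demand $\dnew_j$ under truncated additive valuations is exactly the set of items greedily chosen in decreasing order of payoff (up to capacity and demand), so truncating such a bundle to its top $d_j$ entries by payoff yields precisely the output of the greedy procedure at demand $d_j$. In the easier case $\sum_i x^{\text{new}}_{ij} \leq d_j$, the bundle $\f{x}^{\text{new}}_{\bullet j}$ already maximizes utility under the weaker constraint $\sum_i x_{ij} \leq \dnew_j$, hence also under the stronger $\sum_i x_{ij} \leq d_j$.

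I expect the main technical point to be the correct choice of items to drop when payoffs are tied, and in particular the handling of zero-payoff items in $\Omega_j'''(\pnewb)$ that may appear in $\f{x}^{\text{new}}_{\bullet j}$ by free disposal. The removal order must discard zero-payoff items before items of $\Omega_j''(\pnewb)$, and those before items of $\Omega_j'(\pnewb)$, so that the reduced bundle respects the structure of minimum preferred bundles identified in \Cref{sec:structure}. Apart from this bookkeeping, the proof boils down to the uniqueness of minimum competitive prices from \Cref{thm:algo_computes_minimal_competitive_prices}.
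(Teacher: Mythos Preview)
Your proposal is correct and follows essentially the same approach as the paper: show that $\pnewb$ remains competitive at the smaller demand $\f{d}$ by taking a stable allocation at $(\pnewb,\dnewb)$ and removing, for each buyer, items of lowest payoff until only $d_j$ remain, and then invoke \Cref{thm:algo_computes_minimal_competitive_prices}. The only cosmetic difference is that the paper carries out the reduction on the max flow in $G(\pnewb)$ rather than on an allocation in $H(\pnewb)$; since $G(\pnewb)$ does not carry any zero-payoff items, this sidesteps the $\Omega_j'''$ bookkeeping you flag, but otherwise the argument is identical.
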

	\begin{proof}
		Consider an integral max flow $f'$ at prices $\pnewb$ in the auxiliary flow network $G(\pnewb)$.
		Recall that flow $f'$ corresponds to a feasible allocation from buyers to their preferred bundles. 
		Now adapt this flow as follows. 
		For each buyer $j$ with $d_j < \dnew_j$, among the paths going through $j'$ or $j''$, select one with a currently lowest payoff and reduce flow on that path, until flow through $j'$ and $j''$ gets reduced to $d_j$. 
		This procedure terminates with a flow meeting demands $d_j$. 
		Furthermore, since we still allocate the items with the highest payoff to a buyer, each buyer is allocated to a preferred bundle at prices $\pnewb$ at demand $\f{d}$. 
		Thus, $\pnewb$ is a competitive price vector for demand $\f{d}$. 
		By Theorem~\ref{thm:algo_computes_minimal_competitive_prices} it follows that $\f{p}$ is not only the minimum Walrasian price vector but also the component-wise minimum competitive one. 
		Thus, we get $p(i) \leq \pnew(i)$ for all $i \in \Omega$.
	\end{proof}
	
	Next we show that the minimum competitive prices are bigger if the supply is smaller.
	\begin{lemma}\label{lem:monotonicity_supply}
		Let $\f{b}$ and $\bnewb$ be two supply vectors with $\bnew_i \leq b_i$ for all $i \in \Omega$ (here it is possible that $\bnew_i=0$ for some $i \in \Omega$). 
		Then for the corresponding buyer-optimal Walrasian prices $\f{p}$ and $\pnewb$, it holds that $p(i) \leq \pnew(i)$ for all $i$ with $\bnew_i > 0$.
	\end{lemma}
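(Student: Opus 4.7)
The plan is to mirror the proof of Lemma~\ref{lem:monotonicity_demand}: construct from $\pnewb$ a competitive price vector at the original supply $\f{b}$, and then invoke Theorem~\ref{thm:algo_computes_minimal_competitive_prices}, which says that $\f{p}$ is the component-wise minimum competitive price vector at supply $\f{b}$. The difference is that here we must move from a smaller supply $\bnewb$ to a larger supply $\f{b}$, i.e.\ enlarge rather than shrink the allocation.

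First, since $\pnewb$ is only meaningful on items with $\bnew_i > 0$, define an extension $\tilde{\f{p}}$ by setting $\tilde p(i) := \pnew(i)$ whenever $\bnew_i > 0$ and $\tilde p(i) := \max_{i',j} v_{i'j} + 1$ whenever $\bnew_i = 0$. At prices $\tilde{\f{p}}$, every item with $\bnew_i = 0$ has strictly negative payoff for every buyer, so it is never contained in any preferred bundle. Consequently the auxiliary network $G(\tilde{\f{p}})$ built for supply $\f{b}$ effectively reduces to the subnetwork spanned by items with $\bnew_i > 0$, but with the (possibly) larger capacities $b_i \geq \bnew_i$.

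The core step is to show that $\tilde{\f{p}}$ is competitive at supply $\f{b}$, i.e.\ that a stable allocation exists there. Starting from a stable allocation $\f{y}^{\text{new}}$ at $(\pnewb, \bnewb)$ (which is also market-clearing by Theorem~\ref{theorem:market-clearing}, so every item $i$ with $\pnew(i) > 0$ is fully sold: $\sum_j y^{\text{new}}_{ij} = \bnew_i$), we build an allocation at $(\tilde{\f{p}}, \f{b})$. Increasing the supply can only enlarge each buyer's preferred bundle produced by the greedy \ref{alg:preferred_bundles}: the ordering of payoffs does not change, but the greedy can now take up to $b_i \geq \bnew_i$ copies of each positive-payoff item, and may stop at an earlier (higher-payoff) item $k_j$. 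The extra copies $b_i - \bnew_i$ of positive-priced items — which did not exist at $\bnewb$ — are exactly what is needed to cover this enlarged demand, and zero-priced items can be filled up via free disposal. Concretely, one verifies the Hall-type condition \eqref{Hall-condition} at $(\tilde{\f{p}}, \f{b})$: for every $I \subseteq \{i : \bnew_i > 0\}$,
\[
\sum_{i \in I} b_i \;\geq\; \sum_{i \in I} \bnew_i \;\geq\; d_{\pnewb,\bnewb}(I) \;\geq\; d_{\tilde{\f{p}},\f{b}}(I),
\]
where the last inequality is the subtle one and is proven by comparing the two greedy procedures buyer-by-buyer and tracking how the sets $\Omega_j'$ and $\Omega_j''$ transform as the supply grows.

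Once $\tilde{\f{p}}$ is known to be competitive at supply $\f{b}$, Theorem~\ref{thm:algo_computes_minimal_competitive_prices} gives $p(i) \leq \tilde p(i)$ for all $i \in \Omega$, which specialises to $p(i) \leq \pnew(i)$ for every $i$ with $\bnew_i > 0$, as desired. The main obstacle is the last inequality in the displayed chain above: unlike in the demand case, where extra supply was removed by simply pruning flow on lowest-payoff paths, enlarging the supply can shift $k_j$ upward, shrink $\Omega_j'$, and reshuffle $\Omega_j''$, so the change in the induced demand $d_{\f{p}}(I)$ must be controlled carefully using the structural description of preferred bundles from Section~\ref{sec:structure}.
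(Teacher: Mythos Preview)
Your overall strategy---extend $\pnewb$ to $\tilde{\f{p}}$ by pricing zero-supply items out of reach, show $\tilde{\f{p}}$ is competitive at the larger supply $\f{b}$, then invoke Theorem~\ref{thm:algo_computes_minimal_competitive_prices}---is exactly what the paper does. The difference lies in how competitiveness at $\f{b}$ is established. The paper avoids the Hall chain entirely: it reduces (without loss of generality) to the case where $\f{b}$ and $\bnewb$ differ by a single unit on a single object $\ell$, and then repairs the stable allocation at $(\pnewb,\bnewb)$ directly. If some buyer's preferred bundle changes when the extra copy of $\ell$ appears, she is necessarily the unique buyer currently holding copies of $\ell$ (otherwise $\pnewb$ would not have been competitive at $\bnewb$), so one can hand her the new copy and drop her least-profitable item without disturbing anyone else. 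The case $\bnew_\ell=0$ is your high-price trick. This one-step argument sidesteps all the bookkeeping about how $\Omega_j'$ and $\Omega_j''$ shift.

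Your Hall-condition route, by contrast, has a genuine error: the inequality $d_{\pnewb,\bnewb}(I)\geq d_{\tilde{\f{p}},\f{b}}(I)$ is false in general. Take one buyer with demand $3$, two objects $\alpha,\beta$ with payoffs $2,1$ at the common prices, $\bnew_\alpha=1,\bnew_\beta=2$, and $b_\alpha=3,b_\beta=2$. At $\bnewb$ the greedy stops at $k_j=\beta$, so $\Omega_j'=\{\alpha\}$, $d_{j'}=1$, and for $I=\{\alpha\}$ one gets $d_{\pnewb,\bnewb}(\{\alpha\})=1$. At $\f{b}$ the greedy already fills the demand with $\alpha$, so $\Omega_j''=\{\alpha\}$, $d_{j''}=3$, and $d_{\tilde{\f{p}},\f{b}}(\{\alpha\})=3$. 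Your displayed chain becomes $3\geq 1\geq 1\geq 3$, and the last step fails. The Hall condition $\sum_{i\in I}b_i\geq d_{\tilde{\f{p}},\f{b}}(I)$ still holds here (both sides equal $3$), but not via your route: the extra supply in $b_i-\bnew_i$ is precisely what absorbs the increased demand, so you cannot first discard it (second inequality) and then hope the demand did not grow (third inequality). If you want to salvage the global Hall argument you would need to compare $\sum_{i\in I}b_i$ with $d_{\tilde{\f{p}},\f{b}}(I)$ directly; but at that point the paper's inductive allocation repair is both shorter and cleaner.
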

	\begin{proof}
		Assume without loss of generality that $\f{b}$ and $\bnewb$ only differ in the supply of object $\ell$ by one item. 
		We fix the allocation computed by the minimum competitive prices at supply $\bnewb$ and we consider two cases.
		
		First, we consider the case $\bnew_\ell > 0$.
		Given the assigned bundles, we analyze the behavior of the buyers when the additional item of object $\ell$ arrives.
		If there is a buyer $j$ who is not assigned to one of her preferred bundles at supply $\f{b}$, we knew that she is the only one who is assigned to items of object $\ell$ since otherwise the prices $\pnewb$ are not competitive.
		Thus, all other buyers are assigned to one of their preferred bundles at supply $\f{b}$.
		Hence we can change the preferred bundle of buyer $j$ by assigning one more item of $\ell$ to her, if necessary by omitting the least profitable item.
		This change does not harm the other buyers, thus in the new allocation everyone is assigned to a preferred bundle at prices $\pnewb$.
		Thus, prices $\pnewb$ are competitive for the instance with supply $\f{b}$. 
		The prices $\f{p}$ are the minimum competitive prices by Theorem~\ref{thm:algo_computes_minimal_competitive_prices}, which yields $p(i) \leq \pnew(i)$.
		
		Next, we consider the case $\bnew_\ell=0$. 
		We adapt the prices $\pnewb$ to prices $\bar{\f{p}}$ by setting $\bar{p}(i) = \pnew(i)$ for $i \in \Omega \setminus \{\ell\}$ and $\bar{p}(\ell) = \max_{j \in B} v_{\ell j} +1$. 
		Thus, no buyer wants to buy an item of object $\ell$. 
		Therefore, the given allocation is an assignment of buyers to one of their preferred bundles at prices $\bar{\f{p}}$ for supply $\f{b}$. 
		Using again that $\f{p}$ is the minimum competitive price vector at supply $\f{b}$, we get $p(i) \leq \bar{p}(i) = \pnew(i)$ for all $i \in \Omega \setminus \{\ell\}$.
	\end{proof}
	
	\begin{proof}[Proof of \Cref{thm:monotonicity}]
		For a given modified instance we can construct an intermediate instance where only the demand is changed and the supply remains as in the original instance. 
		With \Cref{lem:monotonicity_demand} this implies that prices only increase compared to the original instance. 
		Now applying \Cref{lem:monotonicity_supply} to the intermediate instance gives the statement of the theorem.
	\end{proof}
	
	The monotonicity allows for faster reoptimization by starting with the old Walrasian price vector.
	\begin{corollary}
		Given prices $\f{p}$ we can compute $\pnewb$ by applying at most $\|{\f{p}-\pnewb}\|_{\infty}$ iterations of the \ref{alg:price-raising} with start prices $p(i)$ for all $i \in \Omega$ with $\bnew_i > 0$. 
	\end{corollary}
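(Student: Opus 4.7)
The plan is to combine \Cref{thm:monotonicity} with the iteration-count bound for the \ref{alg:price-raising} cited from Murota et al.~\cite{murota2013computing} and the observation following \Cref{thm:algo_computes_minimal_competitive_prices}. By \Cref{thm:monotonicity}, for every object $i \in \Omega$ with $\bnew_i > 0$ we have $p(i) \leq \pnew(i)$, so the current Walrasian price vector $\f{p}$ already provides a component-wise lower bound on the new minimum Walrasian prices $\pnewb$ on exactly those coordinates where the corollary prescribes the warm start.

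Next, I would formally define the start vector $\f{p}_0$ by setting $p_0(i) := p(i)$ whenever $\bnew_i > 0$ and $p_0(i) := 0$ (or simply removing the object from the instance) whenever $\bnew_i = 0$; by the previous paragraph $\f{p}_0 \leq \pnewb$ holds componentwise. I would then invoke the observation following \Cref{thm:algo_computes_minimal_competitive_prices}, which guarantees that the \ref{alg:price-raising} correctly computes the unique minimum competitive price vector whenever initialized at any $\f{p}_0 \leq \pnewb$. Hence launching the auction from $\f{p}_0$ on the modified instance terminates with $\pnewb$.

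For the iteration count, I would appeal to the cited result that the \ref{alg:price-raising} performs at most $\|\pnewb - \f{p}_0\|_\infty$ iterations (each iteration raises prices on the chosen cut by exactly one unit). With the chosen $\f{p}_0$, this quantity equals $\|\f{p} - \pnewb\|_\infty$ on the coordinates $i$ with $\bnew_i > 0$, which are the only coordinates that the corollary asks about. This yields the claimed bound.

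I do not anticipate a substantive obstacle: all three ingredients (monotonicity, correctness of warm-started auction, and the iteration bound) are already available in the paper. The only delicate point is the treatment of objects with $\bnew_i = 0$, but these can either be removed from the new instance altogether or assigned $p_0(i) = 0$; in either case they do not influence the execution of the auction or appear in the difference $\|\f{p} - \pnewb\|_\infty$ that the corollary measures.
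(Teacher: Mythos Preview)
Your proposal is correct and matches the paper's own justification essentially step for step: the paper likewise combines \Cref{thm:monotonicity} (to guarantee $p(i)\le \pnew(i)$ on coordinates with $\bnew_i>0$), the warm-start observation following \Cref{thm:algo_computes_minimal_competitive_prices} (to ensure correctness from any lower-bounding start vector), and the iteration bound of Murota et al.\ \cite{murota2013computing}. Your explicit handling of objects with $\bnew_i=0$ is a small elaboration beyond what the paper spells out, but otherwise the arguments coincide.
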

	
	\Cref{thm:algo_computes_minimal_market-clearing_prices} allows starting with any initial price vector which is in every component at most as large as the minimum competitive price vector.
	Thus, Theorem~\ref{thm:monotonicity} allows us to start the \ref{alg:price-raising} at the price $p(i)$ for all $i \in \Omega$ with $\bnew_i>0$. 
	Murota et al.~\cite{murota2013computing} show that the number of iterations is then bounded by $\max\{p(i) - \pnew(i)\mid i \in \Omega,\ \bnew_i > 0\}$.
	However, the following example shows that we cannot bound $\|{\f{p}-\pnewb}\|_{\infty}$ even if the demand or supply is only slightly changed:
	
	\begin{example}
		Consider an instance with two buyers and two objects. 
		The valuation of both buyers are $(M, M)$, the demand for both is two and the supply of both objects is two. 
		In this instance $\f{p} = (0, 0)$ are buyer-optimal competitive prices. 
		Now assume the demand of one buyer is increased by one. 
		In this case, the unique buyer-optimal prices are $\pnewb = (M, M)$ and thus $\|{\f{p}-\pnewb}\|_{\infty} = M$.
		The same happens if the supply of one item is decreased by one.
	\end{example}
	
	It is worth pointing out that monotonicity is not guaranteed if the valuation functions are not gross substitute. 
	\begin{example}
		Consider an instance with two buyers $N = \{1, 2\}$ and two objects $\Omega = \{\alpha, \beta\}$.
		The supply of each object is $1$, the demand of both buyers is unbounded.
		The valuations are not given element-wise but as a set function by
		\[
		v_i(\emptyset) = v_i(\{\alpha\}) = v_i(\{\beta\}) = 0, \quad v_i(\Omega) = 1
		\]
		for $i \in N$.
		The valuations are \emph{complementarities}, i.e., here buyers want to buy items only in a bundle with the other object otherwise they have no value to the buyers (think of left-hand gloves and right-hand gloves). 
		Gross substitutes always have the no-complementarities condition (see \cite{gul1999walrasian}).
		Any price vector that yields a total price of $1$, together with an allocation that gives both objects to the same buyer is at equilibrium.
		However, if we reduce the supply of either object to $0$, the unique Walrasian price vector is $(0, 0)$ as each object on its own is worthless to both buyers.
	\end{example}

	\section{Embedding into existing literature}\label{sec:comparison}
	The goal of this section is to show the connection between our algorithm and the existing work. 
	In the first part we consider the connection to ascending auctions for strong gross substitute valuations.
	It turns out that computing a left-most min cut in our network directly corresponds to finding the inclusion-wise minimal set defining a steepest descent direction of the Lyapunov function in case of truncated additive valuation functions. 
	In the second part we address the question whether the computed prices can be considered as VCG prices (see below for definition). 
	While this does not hold for gross substitute valuations it is shown for single-unit demands. Unfortunately, for truncated additive valuations no mechanism based on prices per item can determine VCG prices as we show later on.

	\subsection{Comparison to ascending auction}
	Note that in our model the Lyapunov function can be rewritten to
	\[
	L(\f{p}) = \max \sum_{j \in N} \sum_{i \in \Omega} (v_{ij} - p(i)) x_{ij} + \sum_{i \in \Omega} b_i \cdot p(i)
	\]
	subject to $\sum_{i \in \Omega} x_{ij} \leq d_j$ and $x_{ij} \in [0, b_i]$ for all $j \in N$ and all $i \in \Omega$.
	
	\begin{proposition}\label{prop:difference_lyapunovfunction}
		In our model with truncated additive valuation functions the difference of the Lyapunov function in an augmentation step equals the difference between the capacity of the $s$-leaving arcs and the min cut value. 
		More formally
		\[
		L(\f{p}) - L(\f{p} + \f{\chi}_X)=\hspace{2ex}\sum_{\mathclap{\jb \in \Gamma^-(X)}} \hspace{1ex}\max \Big\{0, d_\jb - \hspace{1ex}\sum_{\mathclap{ \hspace{3ex} i \in \Gamma^+(\jb)\setminus X}} \hspace{1ex} c_{\jb i} \hspace{1ex} \Big\} - \sum_{i \in X} b_i.
		\]
	\end{proposition}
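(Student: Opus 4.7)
The plan is to split $L(\f{p})-L(\f{p}+\f{\chi}_X)$ into a buyer part and a supply part. Since $L(\f{p}) = \sum_{j\in N} V_j(\f{p}) + \f{p}^T\f{b}$, the replacement $\f{p}\mapsto\f{p}+\f{\chi}_X$ changes the supply term by exactly $\sum_{i\in X} b_i$, so it is enough to show the per-buyer identity
\[
V_j(\f{p}) - V_j(\f{p}+\f{\chi}_X) \;=\; M_j \;\coloneqq\; \sum_{i\in\Omega_j'\cap X} b_i \;+\; \max\bigl\{0,\, d_{j''} - \sum_{i\in\Omega_j''\setminus X} b_i\bigr\}
\]
and then to sum over $j$. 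A short check confirms that $\sum_{j\in N} M_j$ equals the sum $\sum_{\jb\in\Gamma^-(X)}\max\{0,d_\jb-\sum_{i\in\Gamma^+(\jb)\setminus X} c_{\jb i}\}$ appearing in the claim: for $\jb=j'$, using $d_{j'}=\sum_{i\in\Omega_j'} b_i$ the summand collapses to $\sum_{i\in\Omega_j'\cap X} b_i$; for $\jb=j''$, the cap $\min\{b_i,d_{j''}\}$ inside $c_{j''i}$ is inconsequential once the outer $\max\{0,\cdot\}$ has been applied, because whenever $\sum_{i\in\Omega_j''\setminus X}\min\{b_i,d_{j''}\}<d_{j''}$ every individual summand equals $b_i$.

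For the upper bound $V_j(\f{p}+\f{\chi}_X)\geq V_j(\f{p})-M_j$, I would run Algorithm~\ref{alg:preferred_bundles} at prices $\f{p}$ using an ordering $\prec_j$ in which, within the tied class $\Omega_j''$, items outside $X$ are processed before those in $X$, and in the free-disposal stage items of $\Omega_j'''\setminus X$ are used before items of $\Omega_j'''\cap X$. The resulting minimum preferred bundle $\f{x}^*$ contains exactly $M_j$ units of items from $X$ that carry strictly positive payoff, plus possibly some zero-payoff units from $\Omega_j'''\cap X$ that are only taken once $\Omega_j'''\setminus X$ is exhausted. Evaluating $\f{x}^*$ at $\f{p}+\f{\chi}_X$ and discarding any $\Omega_j'''\cap X$ components (whose payoff drops to $-1$) produces a feasible bundle of utility $V_j(\f{p})-M_j$. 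For the reverse inequality I would show that $\f{x}^*$ is itself a minimum preferred bundle at $\f{p}+\f{\chi}_X$: writing $\delta_j\coloneqq v_{k_jj}-p(k_j)$ for the marginal payoff at $\f{p}$, the integer payoffs of items in $\Omega_j'\cap X$ drop from $\geq\delta_j+1$ to $\geq\delta_j$, those in $\Omega_j''\cap X$ drop from $\delta_j$ to $\delta_j-1$, and all items outside $X$ are unchanged; a case split on whether $\sum_{i\in\Omega_j''\setminus X} b_i\geq d_{j''}$ (no element of $\Omega_j''\cap X$ is used and the marginal payoff stays at $\delta_j$) or $<d_{j''}$ (the marginal payoff drops to $\delta_j-1$ and the deficit is filled from $\Omega_j''\cap X$) shows that the updated sets $\Omega_j'(\f{p}+\f{\chi}_X)$ and $\Omega_j''(\f{p}+\f{\chi}_X)$ admit $\f{x}^*$ as a greedy choice.

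The main obstacle is this reverse-inequality case analysis, which must be run uniformly across both the saturated regime in which buyer $j$'s demand is filled by $\Omega_j'\cup\Omega_j''$ and the free-disposal regime in which $\Omega_j'''$ is needed, and must in particular verify that items of $\Omega_j'\cap X$ never drop below the new marginal threshold and that the marginal tier is filled by exactly the number of units supplied by $\f{x}^*$. Once the per-buyer identity is established, summing over $j\in N$ and subtracting $\sum_{i\in X} b_i$ delivers the proposition.
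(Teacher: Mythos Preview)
Your proposal is correct and follows the same decomposition as the paper: split $L(\f{p})-L(\f{p}+\f{\chi}_X)$ into $\sum_j\bigl(V_j(\f{p})-V_j(\f{p}+\f{\chi}_X)\bigr)-\sum_{i\in X}b_i$ and then identify the per-buyer difference with the quantity $M_j$. The paper's own justification is a single sentence---it simply asserts that $V_j(\f{p})-V_j(\f{p}+\f{\chi}_X)$ equals ``the total amount of items that $j$ wants to buy from $X$ under prices $\f{p}$ without any alternative outside of $X$''---whereas you actually carry out the two-sided bound via an explicit bundle $\f{x}^*$ and a case split on whether $\Omega_j''\setminus X$ suffices to cover $d_{j''}$. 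Your argument for why the cap $\min\{b_i,d_{j''}\}$ in $c_{j''i}$ can be replaced by $b_i$ under the outer $\max\{0,\cdot\}$ is also a detail the paper leaves implicit. One small point of phrasing: in the second sub-case the new tier at payoff $\delta_j-1$ may also contain items of $\Omega_j^{\text{mid}}\setminus X$ (those with payoff $\delta_j-1$ at $\f{p}$), so $\f{x}^*$ is \emph{a} preferred bundle at $\f{p}+\f{\chi}_X$ rather than the unique greedy output; this does not affect the value computation and hence not the proof.
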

	
	The first sum is the sum over the differences $V_j(\f{p}) - V_j(\f{p} + \f{\chi}_X)$ for each buyer.
	The difference corresponds to the total amount of items that $j$ wants to buy from $X$ under prices $\f{p}$ without any alternative outside of $X$. 
	This yields the equation by definition of the Lyapunov function.
	
	\begin{lemma}\label{lemma:comparison_murota}
		Given prices $\f{p}$, the overdemanded set $I$ determined by the left-most min cut in $G(\f{p})$ minimizes $L(\f{p} + \f{\chi}_X)$ among all $X \subseteq \Omega$.
	\end{lemma}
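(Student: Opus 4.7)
The plan is to reduce the statement to a direct algebraic identity between min cut capacities in $G(\f{p})$ and values of the Lyapunov function, using Proposition~\ref{prop:difference_lyapunovfunction} as the bridge.

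First, I would parametrize arbitrary $s$-$t$ cuts $C$ by their trace $I := C \cap \Omega$ and $T := C \cap (N' \cup N'')$. Writing out the capacity of such a cut gives
\[
\capa(C) = \sum_{i \in I} b_i + \sum_{\jb \in (N' \cup N'') \setminus T} d_\jb + \sum_{\jb \in T}\ \sum_{\mathclap{\hspace{3ex} i \in \Gamma^+(\jb) \setminus I}} \hspace{1ex} c_{\jb i}.
\]
For fixed $I$, the optimal choice of $T$ puts $\jb$ in $C$ exactly when $d_\jb > \sum_{i \in \Gamma^+(\jb) \setminus I} c_{\jb i}$, yielding
\[
\min_{T}\capa(C) = \sum_{i \in I} b_i + \sum_{\jb \in N' \cup N''} \min\Big\{d_\jb,\ \sum_{\mathclap{\hspace{3ex} i \in \Gamma^+(\jb)\setminus I}}\hspace{1ex} c_{\jb i} \hspace{1ex}\Big\}.
\]

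Next, I would subtract this from $\capa(\{s\}) = \sum_{\jb} d_\jb$ and rearrange the minimum-gap terms into maxima, producing
\[
\capa(\{s\}) - \min_T \capa(C) = \sum_{\jb \in N' \cup N''} \max\Big\{0,\ d_\jb - \hspace{1ex}\sum_{\mathclap{\hspace{3ex} i \in \Gamma^+(\jb)\setminus I}}\hspace{1ex} c_{\jb i} \hspace{1ex}\Big\} - \sum_{i \in I} b_i.
\]
Here one small verification is needed to align with Proposition~\ref{prop:difference_lyapunovfunction}: for $\jb \notin \Gamma^-(I)$ the corresponding summand vanishes, because the capacities leaving $j'$ and $j''$ sum to at least $d_{j'}$ respectively $d_{j''}$ (by construction of $G(\f{p})$ and the definitions of $d_{j'}$, $d_{j''}$ as the sizes of the minimal preferred bundle contributions). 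Hence the sum can be restricted to $\Gamma^-(I)$, and the right-hand side is exactly $L(\f{p}) - L(\f{p} + \f{\chi}_I)$ by Proposition~\ref{prop:difference_lyapunovfunction}.

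Combining these two steps gives the identity
\[
\min_{C:\ C\cap\Omega = I}\capa(C) = \capa(\{s\}) - \bigl(L(\f{p}) - L(\f{p} + \f{\chi}_I)\bigr),
\]
so minimizing $\capa(C)$ over all $s$-$t$ cuts is equivalent to minimizing $L(\f{p} + \f{\chi}_X)$ over $X \subseteq \Omega$. Since the left-most min cut is in particular a min cut, its trace $I$ on $\Omega$ attains this minimum, yielding the lemma. I do not expect a real obstacle here; the only delicate point is the verification that buyers $\jb \notin \Gamma^-(I)$ contribute zero to the max-sum, which is a short check using the definitions of the capacities $c_{\jb i}$ and the demands $d_{j'}, d_{j''}$.
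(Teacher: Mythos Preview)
Your proposal is correct and follows essentially the same approach as the paper: both use Proposition~\ref{prop:difference_lyapunovfunction} to translate the Lyapunov difference into the cut-capacity expression and then compare with the min cut. The only difference is packaging: you establish the identity $\min_{C:\,C\cap\Omega = I}\capa(C) = \capa(\{s\}) - (L(\f{p}) - L(\f{p}+\f{\chi}_I))$ up front and read off the result, whereas the paper fixes a Lyapunov minimizer $X$, builds the corresponding cut $C_X$, and shows $\capa(C) = \capa(C_X)$ via the same algebra; your formulation is arguably cleaner but not genuinely different.
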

	\begin{proof}
		A set $X \subseteq \Omega$ minimizes $L(\f{p} + \f{\chi}_X)$ if and only if it maximizes 
		\[
		L(\f{p}) - L(\f{p} + \f{\chi}_X) = \sum_{j \in N} (V_j(\f{p}) - V_j(\f{p} + \f{\chi}_X)) - \sum_{i \in X} b_i.
		\]
		Now we consider again the constructed auxiliary networks $G(\f{p})$ and $G(\f{p} + \f{\chi}_X)$ and the induced changes in capacities. 
		With Proposition~\ref{prop:difference_lyapunovfunction}, we obtain that $X \subseteq \Omega$ maximizes
		\begin{equation}\label{eq:lyapunov_network}	
			\sum_{\jb \in \Gamma^-(X)} \max\{0, d_\jb - \hspace{1ex} \sum_{\mathclap{\hspace{3ex} i \in \Gamma^+(\jb)\setminus X}} \hspace{1ex} c_{\jb i} \hspace{1ex}\} - \sum_{i \in X} b_i
		\end{equation}	
		Given an $X$ that minimizes $L(\f{p} + \f{\chi}_X)$, we construct the cut
		\[
		C_X = \{s\} \cup \{\jb \in N' \cup N'' \mid d_\jb > \hspace{1ex} \sum_{\mathclap{\hspace{3ex} i \in \Gamma^+(\jb)\setminus X}} \hspace{1ex} c_{\jb i} \hspace{1ex} \} \cup X.
		\]
		The structure of the cut, determined by our algorithm from set $I$, is given by
		\[
		C = \{s\} \cup \{\jb \in N' \cup N'' \mid d_\jb > \hspace{1ex} \sum_{\mathclap{\hspace{3ex} i \in \Gamma^+(\jb)\setminus I}} \hspace{1ex} c_{\jb i} \hspace{1ex} \} \cup I.
		\]
		We can show that $C_X$ is also a min cut:
		\begin{align*}
			0 &\geq \capa(C) - \capa(C_X)\\
			&= \sum_{i \in I} b_i + \sum_{\jb \in N' \cup N''} \hspace{-1ex} \min\{d_\jb, \hspace{2ex} \sum_{\mathclap{\hspace{3ex} i \in \Gamma^+(\jb) \setminus I}} \hspace{1ex} c_{\jb i} \hspace{1ex} \} - \sum_{i \in X} b_i - \sum_{\jb \in N' \cup N''} \hspace{-1ex} \min\{d_\jb, \hspace{2ex} \sum_{\mathclap{\hspace{3ex}i \in \Gamma^+(\jb) \setminus X}} \hspace{1ex} c_{\jb i} \hspace{1ex}\}\\
			&= \sum_{i \in I} b_i + \sum_{\jb \in N' \cup N''} \bigg(d_\jb - \max\{0, d_\jb - \hspace{1ex} \sum_{\mathclap{\hspace{3ex} i \in \Gamma^+(\jb) \setminus I}} \hspace{1ex} c_{\jb i} \hspace{1ex}\} \bigg) \\
			&\phantom{=} - \sum_{i \in X} b_i - \sum_{\jb \in N' \cup N''} \bigg(d_\jb - \max\{0, d_\jb - \hspace{1ex} \sum_{\mathclap{\hspace{3ex}i \in \Gamma^+(\jb) \setminus X}} \hspace{1ex} c_{\jb i} \hspace{1ex}\} \bigg)\\
			&= \bigg(\sum_{\jb \in N' \cup N''} \hspace{-1ex} \max\{0, d_\jb - \hspace{1ex} \sum_{\mathclap{\hspace{3ex} i \in \Gamma^+(\jb)\setminus X}} \hspace{1ex} c_{\jb i} \hspace{1ex}\} - \sum_{i \in X} b_i \bigg) - \bigg(\sum_{\jb \in N' \cup N''} \hspace{-1ex} \max\{0, d_\jb - \hspace{1ex} \sum_{\mathclap{\hspace{3ex} i \in \Gamma^+(\jb) \setminus I}} \hspace{1ex} c_{\jb i} \hspace{1ex}\} - \sum_{i \in I} b_i\bigg)\\
			&\geq 0,
		\end{align*}
		where the first inequality follows from the fact that $C$ is a min cut and the last inequality follows from~\eqref{eq:lyapunov_network}, i.e., that $X$ maximizes the term. 
		Hence, $C_X$ is a min cut. 
		Moreover the choice $X = C \cap \Omega$ minimizes $L(\f{p} + \f{\chi}_X)$, since equality holds in the chain of inequalities. 
	\end{proof}

    \subsection{Adapted step length}\label{sec:adapted_step_length}
    One natural approach to speed up the computation of buyer-optimal prices is to increase the length of the augmentation steps. 
	In the \ref{alg:price-raising} this means given a left-most min cut, increase the prices of the corresponding objects until the min cut changes. 
	In the more general case of strong substitute valuation functions, this means a steepest descent direction of the Lyapunov function is used as long as it remains the steepest descent direction (see \cite{shioura2017algorithms} Section~3, Theorem~4.17).
	The fact that increasing the prices on the objects in a left-most min cut as long as possible is indeed a special case of the algorithm of Shioura follows by Lemma~\ref{lemma:comparison_murota}.
	
    How to find this step length is not immediate. It is not enough to ask all buyers for the information when the preferred sets $\Omega'$, $\Omega''$ and $\Omega'''$ change. It is well possible that these sets change but the items on which the prices should be increased may not.
    One possibility to find the correct step length is using a binary search. This comes with the drawback that it destroys the natural process of increasing prices but with the advantage that it allows us to use the same oracle model as before.
 
    That a binary search can be used for determining the step length is possible by the following proposition of Shioura \cite{shioura2017algorithms}.
	
	\begin{proposition}[\cite{shioura2017algorithms} Proposition 4.16]\label{prop:shioura_monotonie}
		In an ascending auction, whenever a  steepest descent direction of the Lyapunov function becomes infeasible one of the following two things happens: 
		either the support of the direction increases, or the slope with which the Lyapunov function changes decreases.
	\end{proposition}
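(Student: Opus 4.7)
The plan is to exploit the \Lnat-convexity of the Lyapunov function $L$, which holds because truncated additive valuations are \Mnat-concave. From this I would extract two ingredients: (i) along every direction $\f{\chi}_X$ with $X \subseteq \Omega$, the restriction $k \mapsto L(\f{p}+k\f{\chi}_X)$ is convex on $\mathbb{Z}_{\geq 0}$, so the descent slope in a fixed direction is non-increasing as we step along it; and (ii) for every fixed point $\f{q}$, the set function $Z \mapsto L(\f{q}+\f{\chi}_Z)$ is submodular, and by a standard uncrossing argument the family of steepest descent directions $\{Z : \sigma_Z(\f{q}) = \alpha(\f{q})\}$, with $\sigma_Z(\f{q}) \coloneqq L(\f{q}) - L(\f{q}+\f{\chi}_Z)$ and $\alpha(\f{q}) \coloneqq \max_Z \sigma_Z(\f{q})$, is closed under union and intersection. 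Hence there is a well-defined inclusion-wise minimal steepest descent direction $X^\star(\f{q})$, which by \Cref{lemma:comparison_murota} is exactly the left-most min-cut direction the algorithm follows.

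I would then consider the adapted-step-length iteration that moves along $\f{\chi}_X$ with $X = X^\star(\f{p})$, and look at the first price vector $\f{p}' = \f{p} + k\f{\chi}_X$ at which this direction is no longer the inclusion-wise minimal steepest one, i.e.\ $X^\star(\f{p}') \neq X$. Property (i) along $\f{\chi}_X$ yields $\sigma_X(\f{p}') \leq \sigma_X(\f{p}) = \alpha(\f{p})$, and a short argument using (ii) at each intermediate point also gives $\alpha(\f{p}') \leq \alpha(\f{p})$. If this inequality is strict, we are in the ``slope decreases'' case of the dichotomy and we are done.

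In the remaining case $\alpha(\f{p}') = \alpha(\f{p})$, I would analyse $Y \coloneqq X^\star(\f{p}')$. If $Y$ was already a steepest descent direction at $\f{p}$, i.e.\ $\sigma_Y(\f{p}) = \alpha(\f{p})$, then the lattice property of steepest directions at $\f{p}$ combined with the inclusion-wise minimality of $X = X^\star(\f{p})$ forces $X \subseteq Y$, and the assumption $X \neq Y$ delivers the ``support increases'' case $X \subsetneq Y$. The remaining possibility is $\sigma_Y(\f{p}) < \alpha(\f{p}) = \sigma_Y(\f{p}')$, i.e.\ the descent slope along $\f{\chi}_Y$ strictly grew during the move in direction $\f{\chi}_X$; here the discrete midpoint inequality underlying \Lnat-convexity, applied to the pair $\f{p}+\f{\chi}_X$ and $\f{p}+\f{\chi}_Y$, can be combined with the minimality of $X$ at $\f{p}$ to again conclude $X \subsetneq Y$.

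The main obstacle is precisely this last sub-case: without direct access to the uncrossing lattice at $\f{p}$, one has to transport the equality of slopes back from $\f{p}'$ to $\f{p}$ for an arbitrary candidate $Y$. Doing this via the full discrete midpoint inequality of \Lnat-convex functions, and leveraging it to rule out every $Y$ with $X \not\subsetneq Y$ in the equal-slope regime, is the technical heart of the argument.
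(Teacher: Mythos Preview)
The paper does not prove this proposition at all: it is quoted verbatim from Shioura~\cite{shioura2017algorithms} as an external input and then only reinterpreted in the flow language (``either the number of objects in the left-most min cut increases or the overdemandedness decreases''). So there is no in-paper argument to compare your sketch against; you are effectively trying to reconstruct Shioura's proof.

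Your framework is the right one --- \Lnat-convexity of $L$, hence submodularity of $Z\mapsto L(\f{q}+\f{\chi}_Z)$ and convexity of $k\mapsto L(\f{p}+k\f{\chi}_X)$, together with the lattice of steepest-descent sets and the minimal element $X^\star(\f{q})$. Two points, however, are not yet under control. First, the inequality $\alpha(\f{p}')\le\alpha(\f{p})$ that you invoke as ``a short argument using (ii)'' is not a one-liner: submodularity at a single point compares different directions at that same point, it does not by itself compare the best slope at $\f{p}'=\f{p}+\f{\chi}_X$ with the best slope at $\f{p}$. Establishing $\alpha(\f{p}')\le\alpha(\f{p})$ already requires a genuine \Lnat-argument (translation submodularity applied to a well-chosen pair such as $\f{p}+\f{\chi}_{X\cup Y}$ and $\f{p}'+\f{\chi}_{X\cap Y}$), and this is in fact half of the dichotomy you are trying to prove, not a free preliminary step. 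Second, in your ``remaining possibility'' $\sigma_Y(\f{p})<\alpha(\f{p})=\sigma_Y(\f{p}')$ you only say that the discrete midpoint inequality ``can be combined'' with minimality of $X$ to force $X\subsetneq Y$; but this is exactly the place where one must exhibit the right pair of lattice points and check that the resulting inequality actually rules out $Y\setminus X=\emptyset$. As written, the sketch does not exclude the scenario where $X$ remains steepest at $\f{p}'$ while a strictly smaller set $Y\subsetneq X$ also becomes steepest, which would make $X^\star(\f{p}')\subsetneq X$ and violate ``support increases'' without the slope having dropped.

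In short: the ingredients you list are the correct ones, and your case split is sensible, but both the monotonicity of $\alpha$ along the trajectory and the inclusion $X\subseteq X^\star(\f{p}')$ in the equal-slope case need an explicit translation-submodularity computation rather than an appeal to (ii) alone. That computation is precisely the content of Shioura's proof, and your proposal stops just short of it.
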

	
	For us, this means whenever the set determined by a cut is not the inclusion-wise minimal maximum overdemanded set any longer, then either the number of objects in the left-most min cut increases or the overdemandedness decreases.
	Due to the resulting monotonicity we can apply binary search to determine the step length.
	
	This observation allows to bound the number of price raising steps as done by \cite{shioura2017algorithms} in a similar way for the general setting.
	
	\begin{algorithm}
		\SetAlgoRefName{Adapted Step Length Algorithm}
		\KwIn{Buyer with valuations and demand, objects with supply}
		\KwOut{Buyer optimal Walrasian prices}
		$\f{p} \coloneqq \f{0}$\\
		$f$ max flow in network $G(\f{p})$\\
		$C$ left most min cut in $G(\f{p})$\\
		\While{$\val(f) < D_{\f{p}}$}{
			$\alpha_{up} \coloneqq v_{\max}$\\
			$\alpha_{lo} \coloneqq 1$ \\
			$\alpha \coloneqq \lfloor \tfrac{\alpha_{up} + \alpha_{lo}}{2}\rfloor$\\
			\While{$\alpha_{up} - \alpha_{lo} > 0$}{
				Adapt network and flow to price $\f{p} + \alpha \cdot \f{\chi}_{C \cap \Omega}$\\
				Use breadth first search (BFS) in residual network\\
				\If{min cut is equal to C}{
					$\alpha_{lo} \coloneqq \alpha$
				}
				$\alpha_{up} \coloneqq \alpha$
			}
			Set $\f{p} \coloneqq \f{p} + \alpha \cdot \f{\chi}_{C \cap \Omega}$\\
			Adapt network and flow to $\f{p}$\\
			Augment flow $f$ as long as possible in $G(\f{p})$\\ (for each augmentation step we use BFS in residual network and augment along the path found)\\
			$C$ new left most min cut (this is all nodes reachable by BFS in residual network)
		}
		\Return{$\f{p}$}
		\caption{}
		\label{alg:adapted_step_length}
	\end{algorithm}
    \vspace{-2ex}
	
	\begin{lemma}
		The running time of an ascending flow auction with adapted step length as described in \ref{alg:adapted_step_length} is given by $\mathcal{O}(\log(v_{\max}) \abs{\Omega} D_{\f{0}} \abs{N} (\abs{\Omega} + \TO))$, where $v_{\max} = \max_{j \in N, i \in \Omega} v_{ij}$ and $D_{\f{0}} = \sum_{j \in N} d_j$.
	\end{lemma}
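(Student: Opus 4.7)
The plan is to decompose the running time into three factors: the number of outer while-loop iterations, the depth of the binary search performed inside each outer iteration, and the cost of one binary search probe (which Lemma~\ref{lem:running_time_adaption} already bounds). I would then show separately that the remaining work of \ref{alg:adapted_step_length}, namely the post-binary-search flow augmentations, is absorbed by this bound.

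For the outer iteration count, I would invoke Proposition~\ref{prop:shioura_monotonie}: once the maximal $\alpha$-step in the current direction $C \cap \Omega$ has been taken, the left-most min cut must change in a way that either its item support strictly grows or the corresponding overdemandedness (slope of the Lyapunov function) strictly decreases. By Lemma~\ref{lemma:comparison_murota}, choosing the left-most min cut is exactly choosing the inclusion-wise minimal minimizer of $L(\f{p} + \f{\chi}_X)$, so Shioura's dichotomy applies verbatim in our flow picture. The support is trivially at most $\abs{\Omega}$, and the overdemandedness is bounded at every price by the total capacity $\capa(\{s\}) \leq \sum_{j \in N} d_j = D_{\f{0}}$. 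Within any maximal run of consecutive outer iterations keeping the same slope the support must strictly increase (and hence contains at most $\abs{\Omega}$ iterations), while the slope can decrease at most $D_{\f{0}}$ times, yielding $\mathcal{O}(\abs{\Omega} \cdot D_{\f{0}})$ outer iterations overall.

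For the binary search, the interval $[1, v_{\max}]$ contains $\alpha$ since no buyer-optimal Walrasian price can ever exceed $v_{\max}$, so $\mathcal{O}(\log v_{\max})$ probes suffice. Each probe updates $\f{p}$, refreshes the flow via \ref{alg:flow_update}, and runs a BFS in the residual network to test whether the current min cut coincides with $C$; by Lemma~\ref{lem:running_time_adaption} this costs $\mathcal{O}(\abs{N}\TO + \abs{N}\abs{\Omega}) = \mathcal{O}(\abs{N}(\abs{\Omega} + \TO))$. Multiplying the three factors yields the claimed
\[
\mathcal{O}\bigl(\log(v_{\max})\, \abs{\Omega}\, D_{\f{0}}\, \abs{N}(\abs{\Omega} + \TO)\bigr).
\]

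The main obstacle I anticipate is that the line of \ref{alg:adapted_step_length} that ``augments the flow as long as possible'' is a priori not $\mathcal{O}(1)$ BFSs per outer iteration. To handle this I would appeal to Corollary~\ref{coro:demand-flow_value_non_decreasing}, which states that the gap $D_{\f{p}} - \val(f)$ is non-increasing under the flow update. Each subsequent augmentation reduces this gap by exactly one, and the gap starts at $D_{\f{0}}$, so the \emph{total} number of augmentations performed across the entire algorithm is at most $D_{\f{0}}$. Since each augmenting BFS costs $\mathcal{O}(\abs{N}\abs{\Omega})$, the aggregate augmentation cost is $\mathcal{O}(D_{\f{0}} \abs{N}\abs{\Omega})$, strictly dominated by the binary search term and thus absorbed in the stated bound.
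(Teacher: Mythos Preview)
Your proposal is correct and follows essentially the same approach as the paper. The only difference is organizational: the paper lumps binary-search probes, network adaptions, and single augmentations together into a single notion of ``update step'' and bounds their total number by $\mathcal{O}(\log(v_{\max})\,\abs{\Omega}\,D_{\f{0}})$, whereas you separate the three factors (outer iterations $\mathcal{O}(\abs{\Omega} D_{\f{0}})$, probes per iteration $\mathcal{O}(\log v_{\max})$, and an amortized $D_{\f{0}}$ bound on the total number of augmentations). Both routes invoke Proposition~\ref{prop:shioura_monotonie} for the iteration count and Corollary~\ref{coro:demand-flow_value_non_decreasing} for the non-increasing gap; your amortized handling of the augmentation phase is arguably a bit cleaner than the paper's, but the underlying ideas are identical.
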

	\begin{proof}
		To prove the statement we will show two claims from which it follows directly.
		We start by bounding the time of an update step. 
		As an update step, we summarize a flow augmentation (by a value of one), a min cut computation, and an update of the network and the flow from one iteration to the next. 
		In the next step we will bound the total number of update steps needed. 
		Note that the number of update steps does not correspond to the iterations of the \texttt{while}-loop in \ref{alg:adapted_step_length}.
		
		\begin{claim}
			Every update step takes time at most $\mathcal{O}(\abs{N} \TO + \abs{N}\abs{\Omega})$.
		\end{claim}
		\begin{subproof}
           To increase the flow by one unit (or more), we use a breadth first search (BFS) in the residual network. 
			The same is true for the computation of a min cut.
            The BFS runs in time $\mathcal{O}(\abs{N}\abs{\Omega})$.

            By \Cref{lem:running_time_adaption} the adaption of the network runs in time $\mathcal{O}(\abs{N}\TO + \abs{N}\abs{\Omega})$.
		\end{subproof}
		
		Now, we bound the number of update steps.
		
		\begin{claim}
			The number of update steps is bounded by $\mathcal{O}(\log(v_{\max}) \abs{\Omega} D_{\f{0}})$.
		\end{claim}
		\begin{subproof}
			In total the auction finds some prices $p^*$ and some flow $f^*$ such that $\val(f^*) = D_{\f{p^*}}$ or in other words $\val(f^*) - D_{\f{p^*}} = 0$.
			It starts with the all $0$-flow and $D_{\f{0}}$.
			During the algorithm  $\val(f)-D_{\f{p}}$ is non-increasing (see \Cref{coro:demand-flow_value_non_decreasing}). 
			If it decreases in an update step it decreases by at least by one. 
			
			Using \Cref{prop:shioura_monotonie} we can show that this value decreases all $\mathcal{O}(\log(v_{\max}) \abs{\Omega})$ update steps.
			
			Given some network and some flow, if we can augment the flow, clearly the value $\val(f)-D_{\f{p}}$ decreases.
			So assume it is not possible to augment the flow. 
			With one update step we can recognize this situation and compute a min cut giving an inclusion-wise minimal maximum overdemanded set. 
			It is possible that we adapt the network in the next update step, but in the resulting graph, the adapted flow is already maximum, i.e.\ the min cut does not change. 
			Using a binary search, we can find in $\mathcal{O}(\log(v_{\max}))$ time a point where the min cut changes.
			
			It is still possible that the left-most min cut changes, but the value of the min cut and thus of the flow remains the same. 
			Using \Cref{prop:shioura_monotonie}, we know that the left-most min-cut changes monotonically. 
			By the structure of the network, the left-most min cut is determined by the objects in the cut. 
			Thus, a left-most min cut with the same value can be seen at most $\abs{\Omega}$ times. 
			Thus, after $\mathcal{O}(\log(v_{\max}) \abs{\Omega})$ update steps, the flow needs to be augmentable.
		\end{subproof}
		
		This finishes the proof since the algorithm does at most $\mathcal{O}(\log(v_{\max}) \abs{\Omega} D_{\f{0}})$ update steps of cost $\mathcal{O}(\abs{N}\TO + \abs{N}\abs{\Omega})$ each.
	\end{proof}
 
	\subsection{VCG prices}\label{sec:VCG}
	When assuming that every object is owned by a seller and the revenue of the seller is the sum of prices he collects from buyers, then every Walrasian equilibrium is socially optimal (see e.g.~\cite{leme2017gross}).
	``Socially optimal" means that the utility of the buyers plus the utility of the seller is maximum among all allocation and price combinations  $(\f{p}, \f{x})$.
	This observation follows since in a Walrasian equilibrium $(\f{p}^*, \f{x}^*)$ every buyer maximizes her payoff given prices $\f{p}^*$. 
	The sum of the prices cancels out since the sellers get what the buyers pay. 
	Therefore, the Walrasian equilibrium is socially optimal.
	
	A famous mechanism in auction theory is the VCG mechanism (see for an introduction \cite{roughgarden2016cs269i} and \cite{vickrey1961counterspeculation,clarke1971multipart,groves1973incentives} for the original papers). 
	It takes bids or valuation functions of the buyers and computes an allocation and prices.
	The VCG mechanism is essential for auction theory, since it has two nice properties. 
	First, the computed allocation, is socially optimal and second, the mechanism is truthful. 
	This means, for a buyer it never pays off to lie, i.e., when reporting the true valuations a buyer $j$ is never worse off than with reporting a fake valuation. 
	Moreover, there exist some reported valuations of the other buyers where it is strictly better for buyer $j$ to report her true valuations than to lie.

	\paragraph{Single-unit matching markets.}
    As introduced in \Cref{sec:intro}, a famous special case of our model is the classical matching market (in the literature also commonly referred to as housing market) model, where buyers have unit demand.
    Interestingly, in this restricted model, the buyer-optimal Walrasian prices coincide with prices that are computed with the VCG mechanism (\cite{vickrey1961counterspeculation,clarke1971multipart,groves1973incentives}) using Clarke's pivot rule (see \cite{kern2016}). We call these prices \emph{VCG prices}.
    The VCG mechanism is known for the fact that it is a dominant strategy for all buyers to announce their valuations truthfully.
    However, note that this is only true for a sealed-bid auction and not necessarily an ascending auction, in which buyers might learn about other buyers' valuations as the auction progresses and are hence able to react and adjust their bidding strategy.
    Note that the ascending auction could be seen as a proxy auction to determine VCG prices when buyers submit their valuation for each item to a proxy bidder who answers all queries of the auctioneer truthfully.

    The property that minimal Walrasian prices coincide with VCG prices does not hold for markets where buyers have non-unit demand. 
    In \Cref{prp:noVCG}, we show that VCG prices in such markets might require that different buyers get charged different prices for the same object.
    Moreover, it has already been observed in \cite{gul2000english} that an auctioneer cannot get enough information in an ascending auction about buyers' valuation functions to determine VCG prices in those markets.
    
	We show that for truncated additive valuations, no prices that are based on a per object price can coincide with the VCG prices. 
	But before, we show that the very natural approach to copy buyers and give them unit demand does not lead to VCG prices. 
	
	Let us revisit \Cref{ex:1} from \Cref{sec:intro} and observe that the duplication method is not truthful, i.e., it pays off to lie, and thus does not lead to VCG prices. 
	For the buyer it would be strictly better to report the sets according to the valuation $v_{\bullet 1} = (1, 1)$. 
	
	\begin{proposition}\label{prp:noVCG}
        There is an example with buyers having truncated additive valuations where there exists no mechanism to determine prices per object that coincide with the VCG prices.
	\end{proposition}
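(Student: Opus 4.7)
The plan is to exhibit a concrete small instance in which the VCG payments cannot be implemented by any per-object pricing together with the VCG allocation. A clean candidate is a single-object instance: one object $\alpha$ with supply $b_\alpha = 2$, buyer $1$ with $d_1 = 2$ and $v_{1,\alpha} = 10$, and buyer $2$ with $d_2 = 1$ and $v_{2,\alpha} = 5$. Both valuations are truncated additive, so the instance lies inside the model considered in the paper.

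First I would verify the VCG outcome. The socially optimal allocation assigns both copies of $\alpha$ to buyer $1$, yielding welfare $20$, since replacing any unit allocated to buyer $1$ by buyer $2$ would trade a value of $10$ for a value of $5$. Clarke's pivot rule then gives $\pi_1 = 5 - 0 = 5$ (welfare of others without buyer $1$ is $5$; in the VCG allocation it is $0$) and $\pi_2 = 20 - 20 = 0$. Next I would argue by contradiction that no per-object price $p(\alpha) \geq 0$ can reproduce this outcome. If a per-object mechanism returns the VCG allocation and payments, then buyer $1$'s total payment $2\,p(\alpha)$ must equal $5$, forcing $p(\alpha) = 5/2$. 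At this price, however, buyer $2$ has strictly positive utility $5 - 5/2 = 5/2$ from acquiring one copy of $\alpha$, so her unique preferred bundle contains a copy of $\alpha$. Hence the VCG allocation, which assigns her zero units, is not one of her preferred bundles, contradicting the assumption that the per-object prices implement the VCG outcome.

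The main step requiring care is pinning down what ``a mechanism based on prices per object that coincides with the VCG prices'' is supposed to mean. The intended reading is a mechanism that outputs a price vector $\f{p} \in \mathbb{R}_+^\Omega$ together with an allocation $\f{x}$ such that every buyer $j$ receives a preferred bundle $\f{x}_{\bullet j} \in D_j(\f{p})$ and her total payment $\sum_i p(i)\, x_{ij}$ equals her VCG payment $\pi_j$. Under this reading the instance above is already a counterexample and no calculation beyond the two lines sketched is needed. The example also makes transparent the underlying obstacle: per-object prices cannot simultaneously charge buyer $1$ only the per-unit externality $5/2$ she imposes and stably exclude buyer $2$, because the same price applies to both buyers.
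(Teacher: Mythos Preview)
Your example and argument are correct under the interpretation you spell out, and the discussion of what ``coincide with the VCG prices'' should mean is on point. There is, however, a substantive difference between your route and the paper's.

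The paper uses a three-buyer, two-object instance ($N=\{1,2,3\}$, $\Omega=\{\alpha,\beta\}$, $d=(2,2,1)$, $b=(3,2)$, $v_1=(3,1)$, $v_2=(2,0)$, $v_3=(0,1)$). There the VCG allocation gives every buyer a positive bundle, and the resulting linear system $\sum_i p(i)\,x_{ij}=\pi_j$ for $j=1,2,3$ is already inconsistent: from $\pi_1=2$ one gets $p(\alpha)=1$, from $\pi_2=0$ one gets $p(\beta)=-1$, and then buyer~3 would pay $-1\neq 0=\pi_3$. No appeal to stability is needed; the per-object pricing fails \emph{arithmetically}.

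In your single-object instance the arithmetic system is \emph{consistent}: $p(\alpha)=5/2$ reproduces both VCG payments, because buyer~2 receives nothing and pays nothing. Your contradiction only arises once you additionally require that the allocation be stable at $\f{p}$ (buyer~2's preferred bundle then contains one copy of $\alpha$). So your proof establishes the proposition only under the stronger reading ``per-object prices supporting the VCG allocation as a Walrasian equilibrium'', whereas the paper's example establishes it under the weaker reading ``per-object prices whose induced payments equal the VCG payments''. If the intended statement is the latter, your instance is not a counterexample and you would need a multi-object example with at least three buyers receiving positive bundles so that the payment equations become overdetermined. If the intended statement is the former, your argument is a cleaner and shorter proof than the paper's.
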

	
	The following example shows that we need prices per bundle to obtain VCG prices, so the proposition holds.
	
	\begin{example}
		Let $N = \{1, 2, 3\}$ and $\Omega = \{\alpha, \beta\}$ with demands $d_1 = d_2 = 2, d_3 = 1$, supplies $b_\alpha = 3, b_\beta = 2$, and valuations $v_1 = (3, 1)$, $v_2 = (2, 0)$, and $v_3 = (0, 1)$. 
		It is easy to see that both feasible allocations (buyer~1 or~2 gets two items of object $\alpha$, the other one gets one item of each object, buyer~3 gets one item of object $\beta$ in both allocations) are socially optimal. 
		Say buyer~1 gets two items of $\alpha$, buyer 2 gets $\alpha$ only once. 
		Then the net effect of buyer~1 on buyer~2 (i.e., the cost that buyer~1 imposes on buyer~2 by her presence) is $2$, so the only possible price for object $\alpha$ is $1$. 
		Since buyer~2 also has to purchase an item of $\alpha$ once, object $\beta$ needs to be subsidized (i.e., assigned a price of $-1$) to give her a total cost of $0$. 
		However, this would give buyer 3 total cost of $-1$, which is not VCG.
	\end{example}

    As an immediate observation from the last example, we can also see that it would have been better for buyer~2 to submit her preferred bundles according to a false valuation function $\tilde{v}_2 = (0, 0)$, which would have resulted in prices $\f{p} = (0, 0)$ and the same allocation.
    Hence, she would have gotten a payoff of $2$.
    The prices we would have obtained by the auction with truthful reporting would have been $(2, 0)$ which yields a payoff of $0$ for buyer~2.
    Thus, the ascending auction does not incentivize truthful reporting of demand sets.
	
	\section{Conclusion and outlook}
	We present a network interpretation of an ascending auction for a multi-unit market where buyers have truncated additive valuations. 
	We show that by iteratively raising the prices on a left-most min cut we can compute buyer-optimal Walrasian prices via an ascending auction. 
	The new part here is the simple and efficient flow-based algorithm to determine the sets on which prices should be raised in the ascending auction, namely the minimal maximum overdemanded sets.
	For the special case of unit demand a nice matching based algorithm was known to compute these sets.
	For the general case of strong gross substitute valuations, in prior literature, this question was either not addressed or the computation was done by using tools from convex analysis.
	We currently work on a more natural and direct approach to compute the minimal maximum overdemanded sets also in the general setting.
	
	With our approach we are, moreover, able to reuse computations from previous iterations to speed up the computation. 
	Combining this with the algorithm using adapted step length allows for an improved running time analysis for the complete auction. 
	Still the resulting running time is not polynomial but only pseudo-polynomial. 
	While buyer-optimal Walrasian prices can be computed directly by using an LP, it remains open whether there is an iterative ascending auction that runs in polynomial time. Note that when using an LP approach, the buyers need to reveal their whole valuation function, while they only reveal their preferred bundles in the more natural iterative auction. 
	
	Our approach enabled us to show that the minimum Walrasian prices coincide with the minimum competitive prices for truncated additive valuations. 
	This connection seems very natural but was never discussed before. 
	It is the main part of our proof to show monotonicity properties, i.e., that buyer-optimal Walrasian prices can only increase when the supply decreases or the demand increases. 
	We are working on achieving a similar result for the general case of strong gross substitute valuations. 
	Also in this case the connection between minimum Walrasian prices and minimum competitive prices seems intuitive and would imply monotonicity results.
	
	Lastly, the number of iterations necessary to reach buyer-optimal Walrasian prices for a slightly perturbed instance is not polynomial bounded for the perturbations we considered. 
	It is an interesting question whether there are perturbations where the algorithm reaches buyer-optimal Walrasian prices again with only constantly many, or only polynomial many update steps.

\section*{Acknowledgments}
We would like to thank two anonymous reviewers who raised interesting questions and provided further insights, helping us to improve multiple sections of the article.

\bibliographystyle{splncs04}
\bibliography{literature}
	
\end{document}